\def \VersionArXiv {}

\ifdefined \VersionArXiv
	\newcommand{\arXivVersion}[1]{#1}
	\newcommand{\CONCURVersion}[1]{}
\else
	\newcommand{\arXivVersion}[1]{}
	\newcommand{\CONCURVersion}[1]{#1}
\fi

\documentclass[a4paper,UKenglish,cleveref, autoref, thm-restate]{lipics-v2021}
\arXivVersion{
	\hideLIPIcs  %
}

\usepackage[utf8]{inputenc}
\usepackage{csquotes}
  \usepackage{breakurl}           %

\usepackage{thm-restate}
\usepackage{xspace}
\usepackage{adjustbox}
\usepackage{amsmath}
\usepackage{amssymb} %

\usepackage{subcaption} %

\usepackage{pdflscape} %

\usepackage{paralist} %

\newenvironment{ienumerate}
	{\ifdefined\VersionArXiv\begin{enumerate}\else\begin{inparaenum}[\itshape i\upshape)]\fi}
	{\ifdefined\VersionArXiv\end{enumerate}\else\end{inparaenum}\fi}

\newenvironment{oneenumerate}
	{\ifdefined\VersionArXiv\begin{enumerate}\else\begin{inparaenum}[1)]\fi}
	{\ifdefined\VersionArXiv\end{enumerate}\else\end{inparaenum}\fi}

\ifdefined\VersionWithComments
\usepackage{everypage}

\AddEverypageHook{%
	\begin{tikzpicture}[remember picture, overlay]
		\node[
		align=center,
		anchor=north east,
		xshift=-1cm,
		yshift=-1cm,
		text=red
		] at (current page.north east) {\scalebox{2}{\bfseries\LARGE Draft!}\\{\tiny\today{}}};
	\end{tikzpicture}
}

\fi
\PassOptionsToPackage{table}{xcolor}
\newcommand{\cellHeader}[1]{\cellcolor{blueColorBlind!40}\textbf{#1}}
\newcommand{\rowHeader}{\rowcolor{blueColorBlind!40}}

\newcommand{\cellDecidable}{\cellcolor{greenColorBlind!50}} %
\newcommand{\cellUndecidable}{\cellcolor{redColorBlind!50}} %
\newcommand{\cellDecidableExisting}{\cellcolor{greenColorBlind!25}} %
\newcommand{\cellUndecidableExisting}{\cellcolor{redColorBlind!25}} %

\usepackage{tikz}
\usetikzlibrary{arrows,automata,positioning,shapes} %
\usetikzlibrary{decorations} %
\usetikzlibrary{decorations.pathreplacing} %
\usetikzlibrary{calc} %

\tikzstyle{pta}            = [auto, ->, >=stealth',initial text=]
\tikzstyle{every node}=[initial text=]
\tikzstyle{RA-location}=[rectangle, rounded corners, minimum size=12pt, draw=black, fill=blueColorBlind!15, inner sep=2pt]
\tikzstyle{location}=[rounded rectangle, minimum size=12pt, draw=black, fill=blueColorBlind!15, inner sep=2pt]
\tikzstyle{final}=[double, fill=greenColorBlind!40]

\tikzstyle{private}=[fill=redColorBlind!30,thick]

\tikzset{
  location2gen/.style={
    draw=black,
    rounded corners,
	align=center,
    inner sep=2pt,
    font=\small
  }
}

\tikzset{
  location2/.style={
	location2gen,
    rectangle split,
    rectangle split parts=2,
    rectangle split horizontal=false,
	rectangle split part fill={blueColorBlind!5, blueColorBlind!15},
  }
}

\newcommand{\location}[3][]{%
  \node[location2,#1] {%
    \scriptsize $\ensuremath{$#2$}$
    \nodepart{second}
    #3
  };
}

\definecolor{coloract}{rgb}{0, 0.3, 0}
\definecolor{colorcact}{rgb}{.5, 0.3, 0.0}
\definecolor{colorclock}{rgb}{0.4, 0, 0}
\definecolor{blueColorBlind} {RGB}{68 ,119,170}
\definecolor{greenColorBlind}{RGB}{34 ,136,51 }
\definecolor{redColorBlind}  {RGB}{238,102,119}

\usepackage{graphicx}

\newcommand{\fakeParagraph}[1]{\smallskip

\noindent\textbf{\textsf{#1.}}}

\newcommand{\defProblem}[3]
{%
	\noindent\fcolorbox{black}{blueColorBlind!10}{
	\begin{minipage}{.95\columnwidth}
		\textbf{#1 problem:}\\
		\textsc{Input}: #2\\
		\textsc{Problem}: #3
	\end{minipage}
}

	\smallskip

}

\newif\ifinfigure
\let\infigure\iffalse %

\let\oldtikzpicture\tikzpicture
\let\endoldtikzpicture\endtikzpicture
\renewenvironment{tikzpicture}
  {\let\ifinfigure\iftrue\oldtikzpicture}
  {\endoldtikzpicture\let\ifinfigure\iffalse}

\newcommand{\styleact}[1]{\ensuremath{
	\ifinfigure%
		\textcolor{coloract}{{#1}}%
	\else%
		#1%
	\fi%
	}%
}

\newcommand{\stylecact}[1]{\ensuremath{
	\ifinfigure%
		\textcolor{colorcact}{{#1}}%
	\else%
		#1%
	\fi%
	}%
}

\newcommand{\styleclock}[1]{\ensuremath{
	\ifinfigure%
		\textcolor{colorclock}{{#1}}%
	\else%
		#1%
	\fi%
	}%
}

\newcommand{\styleloc}[1]{\ensuremath{
	\ifinfigure%
		\textcolor{black}{{\mathrm{#1}}}%
	\else%
		#1%
	\fi%
	}%
}

\newcommand{\assign}{\leftarrow}

\newcommand{\checkUseMacro}[1]{#1}

\newcommand{\set}[1]{\ensuremath{\left\{#1\right\}}}

\newcommand{\setN}{\ensuremath{\mathbb{N}}}
\newcommand{\setQ}{\ensuremath{\mathbb{Q}}}
\newcommand{\setQgeqzero}{\ensuremath{\setQ_{\geq 0}}}
\newcommand{\setR}{\ensuremath{\mathbb{R}}}
\newcommand{\setRgeqzero}{\ensuremath{\setR_{\geq 0}}}

\newcommand{\setZ}{\ensuremath{\mathbb{Z}}}
\newcommand{\setRpositif}{\ensuremath{\setR_{>0}}}

\newcommand{\BTrue}{\ensuremath{\mathit{true}}}

\newcommand{\compOp}{\bowtie}

\newcommand{\intpart}[1]{\ensuremath{\lfloor#1\rfloor}}
\newcommand{\fract}[1]{\ensuremath{\text{frac}(#1)}}

\newcommand{\init}{\ensuremath{0}}
\newcommand{\priv}{\ensuremath{{\mathit{priv}}}}
\newcommand{\pub}{\ensuremath{{\mathit{pub}}}}
\newcommand{\final}{\ensuremath{f}}

\newcommand{\styleAutomaton}[1]{\ensuremath{\mathcal{#1}}}

\newcommand{\styleSetRegion}[1]{\ensuremath{\mathsf{#1}}}
\newcommand{\styleBigSet}[1]{\ensuremath{\mathbb{#1}}}

\newcommand{\clock}{\ensuremath{\styleclock{x}}}
\newcommand{\constantmax}[1]{c_{#1}}

\newcommand{\clocky}{\ensuremath{\styleclock{y}}}
\newcommand{\clockz}{\ensuremath{\styleclock{z}}}
\newcommand{\clocki}[1]{\ensuremath{\styleclock{\clock_{#1}}}}
\newcommand{\ClockCard}{H} %
\newcommand{\clockval}{\ensuremath{\mu}}
\newcommand{\ClockSet}{\ensuremath{\mathbb{X}}} %
\newcommand{\ClocksZero}{\ensuremath{\vec{0}}}
\newcommand{\tickclock}{\ensuremath{\styleclock{\clockz}}}

\newcommand{\resets}{\ensuremath{R}}
\newcommand{\reset}[2]{\ensuremath{[#1]_{#2}}}

\newcommand{\clockeq}{\approx}
\newcommand{\TA}{\ensuremath{\checkUseMacro{\styleAutomaton{A}}}}
\newcommand{\TB}{\ensuremath{\checkUseMacro{\styleAutomaton{B}}}} %

\newcommand{\TAFullWeak}{\ensuremath{\TA_{\textit{full} \rightarrow \textit{weak}}}}
\newcommand{\TAWeakFull}{\ensuremath{\TA_{\textit{weak} \rightarrow \textit{full}}}}

\newcommand{\action}{\ensuremath{\styleact{a}}}
\newcommand{\actiona}{\ensuremath{\styleact{a}}}
\newcommand{\actionb}{\ensuremath{\styleact{b}}}
\newcommand{\actionc}{\ensuremath{\styleact{c}}}
\newcommand{\actionend}{\ensuremath{\styleact{e}}}

\newcommand{\caction}{\ensuremath{\stylecact{\kappa}}} %
\newcommand{\cactioni}[1]{\ensuremath{\stylecact{\caction_{#1}}}}
\newcommand{\ActionSet}{\ensuremath{\Sigma}}

\newcommand{\extraAction}{\ensuremath{\styleact{\sharp}}}

\newcommand{\constraint}{\ensuremath{C}}

\newcommand{\edge}{\ensuremath{\checkUseMacro{e}}}
\newcommand{\edgei}[1]{\ensuremath{\checkUseMacro{\edge_{#1}}}}
\newcommand{\EdgeSet}{\ensuremath{E}}

\newcommand{\guard}{\ensuremath{g}}

\newcommand{\invariant}{\ensuremath{I}} %

\newcommand{\loc}{\ensuremath{\styleloc{\ell}}}
\newcommand{\loci}[1]{\ensuremath{\styleloc{\loc_{#1}}}}
\newcommand{\locinit}{\ensuremath{\styleloc{\loc_\init}}}
\newcommand{\locfinal}{\ensuremath{\styleloc{\loc_\final}}}
\newcommand{\FinalSet}{\ensuremath{L_f}}
\newcommand{\PrivSet}{\ensuremath{L_{\priv}}}
\newcommand{\locpriv}{\ensuremath{\styleloc{\loc_\priv}}}

\newcommand{\loctest}{\ensuremath{\styleloc{\loc^\mathit{test}}}}
\newcommand{\LocSet}{\ensuremath{L}}
\newcommand{\LocSetPub}{\ensuremath{\LocSet_{\pub}}}
\newcommand{\LocSetPriv}{\ensuremath{\LocSet_{\priv}}}

\newcommand{\longuefleche}[1]{\stackrel{#1}{\longrightarrow}}

\newcommand{\run}{\checkUseMacro{\rho}} %

\newcommand{\duration}{\ensuremath{\mathit{dur}}} %

\newcommand{\TimedWords}[1]{\ensuremath{\textit{TW}(#1)}}
\newcommand{\Language}{\ensuremath{\mathcal{L}}}

\newcommand{\machine}{\ensuremath{\mathcal{M}}}
\newcommand{\zerocell}{\textbf{0}}
\newcommand{\onecell}{\textbf{1}}
\newcommand{\machinealph}{\ensuremath{\set{\zerocell, \onecell}}}
\newcommand{\StatesExists}{\ensuremath{Q_{\exists}}}
\newcommand{\StatesForall}{\ensuremath{Q_{\forall}}}
\newcommand{\cur}{\ensuremath{\mathit{cur}}}

\newcommand{\region}{\ensuremath{r}}
\newcommand{\regioni}[1]{\ensuremath{\region_{#1}}}

\newcommand{\laststate}{\ensuremath{\mathit{last}}}
\newcommand{\firststate}{\ensuremath{\mathit{first}}}

\newcommand{\regset}[1]{\styleSetRegion{R}_{#1}}
\newcommand{\regaut}[1]{\styleAutomaton{R}_{#1}}

\newcommand{\silentaction}{\ensuremath{\styleact{\varepsilon}}}

\newcommand{\RegAutTransitions}{\delta^{\styleSetRegion{R}}}

\newcommand{\belset}[1]{\ensuremath{Q_{#1}}}
\newcommand{\belaut}[2]{\styleAutomaton{B}_{#1}^{#2}} %
\newcommand{\belief}{\ensuremath{B}}
\newcommand{\belalpha}{\ensuremath{\Gamma}}
\newcommand{\belinit}{\ensuremath{\belief_0}}
\newcommand{\beltrans}{\ensuremath{\Delta}}
\newcommand{\localstrat}{\ensuremath{S}}
\newcommand{\controlflag}{\mathit{c}}
\newcommand{\nocontrolflag}{\mathit{\overline{c}}}

\newcommand{\strategy}{\ensuremath{\sigma}}
\newcommand{\strategyBlocksAll}{\ensuremath{\strategy_\emptyset}}
\newcommand{\buffered}[1]{\ensuremath{\mathit{buff}(#1)}}
\newcommand{\TAcontrolextend}{\ensuremath{\left(\ObsSet, \ControlSet, \LocSet, \locinit, \PrivSet, \FinalSet, \ClockSet, \invariant, \EdgeSet\right)}}
\newcommand{\clabel}[2]{\ensuremath{\frac{#1}{#2}}}
\newcommand{\ObsSet}{\ensuremath{\ActionSet_o}}
\newcommand{\ControlSet}{\ensuremath{\ActionSet_c}}
\newcommand{\ActionControlSet}{\ensuremath{\ActionSet^\times}}

\newcommand{\reachcondition}{non-blocking}

\newcommand{\Problem}{$\exists$SO}
\newcommand{\ProblemRC}{$\exists$NBSO}
\newcommand{\ProblemNSS}{$\exists N$-SSO}
\newcommand{\ProblemNSSRC}{$\exists$NB$N$-SSO}
\newcommand{\ProblemOSS}{$\exists$OSSO}
\newcommand{\ProblemOSSRC}{$\exists$NBOSSO}

\newcommand{\Succ}{\ensuremath{\mathit{Succ}}}
\newcommand{\Next}{\ensuremath{\mathit{Next}}}

\newcommand{\TTS}{\ensuremath{\mathfrak{T}}}

\newcommand{\TTSstate}{\ensuremath{s}}

\newcommand{\TTStransition}{\ensuremath{\transition}}

\newcommand{\semantics}[1]{\ensuremath{\mathfrak{T}_{#1}}}
\newcommand{\semanticscontrolextend}{\ensuremath{\left(\StateSet, \concstateinit, \ActionControlSet \cup \setRgeqzero, \transition\right)}}

\newcommand{\transition}{{\ensuremath{\rightarrow}}}

\newcommand{\StateSet}{\ensuremath{\mathfrak{S}}}
\newcommand{\concstate}{\ensuremath{\mathfrak{s}}}

\newcommand{\concstateinit}{\ensuremath{\concstate_\init}}

\newcommand{\AllRuns}{\ensuremath{\Omega}}

\newcommand{\CRuns}[2]{\ensuremath{\AllRuns^{#2}(#1)}} %

\newcommand{\PrivRuns}[2]{\ensuremath{\AllRuns^{#2}_{\mathit{priv}}(#1)}} %
\newcommand{\PubRuns}[2]{\ensuremath{\AllRuns^{#2}_{\mathit{pub}}(#1)}} %

\newcommand{\textstyleparam}[1]{\ensuremath{{#1}}}
\newcommand{\paramd}{\textstyleparam{\ensuremath{d}}}

\newcommand{\tw}{\ensuremath{\mathit{tw}}}
\newcommand{\clabeli}[1]{\ensuremath{\clabel{\action_{#1}}{\caction_{#1}}}}
\newcommand{\act}[1]{\ensuremath{\mathit{act}(#1)}}
\newcommand{\ctr}[1]{\ensuremath{\mathit{ctr}(#1)}}
\newcommand{\uncontrol}{\ensuremath{\stylecact{u}}}

\newcommand{\PrivateTr}[2]{\ensuremath{\mathit{PrivTr}^{#2}(#1)}} %
\newcommand{\PublicTr}[2]{\ensuremath{\mathit{PubTr}^{#2}(#1)}} %
\newcommand{\Trace}[1]{\ensuremath{\mathit{Tr}(#1)}}
\newcommand{\PartialTraces}[1]{\ensuremath{\mathit{PartialTr}(#1)}}
\newcommand{\PTrace}[2]{\ensuremath{\mathit{Tr}_{#2}(#1)}} %
\newcommand{\Parts}[1]{\ensuremath{\mathcal{P}(#1)}}

\newcommand{\APriv}{\ensuremath{\TA_{\mathit{priv}}}}
\newcommand{\APub}{\ensuremath{\TA_{\mathit{pub}}}}

\newcommand{\interval}[2]{\ensuremath{[\![#1 ; #2]\!]}}
\newcommand{\treg}[2]{\ensuremath{\left(#1 ; #2\right)}}

\newcommand{\TimeRegions}{\ensuremath{\mathit{TReg}}}
\newcommand{\TRegion}[1]{\ensuremath{\mathit{reg}(#1)}}

\newcommand{\restrict}[2]{\ensuremath{\langle #1 \rangle_{#2}}}

\newcommand{\class}[1]{[#1]}

\newcommand{\beliefcontrol}[2]{\ensuremath{\mathfrak{b}}_{#1}^{#2}}
\newcommand{\beliefcontrolset}[2]{\styleBigSet{B}_{#1}^{#2}}

\newcommand{\finalclass}[1]{\styleSetRegion{R}^F_{#1}}
\newcommand{\secret}[1]{\styleSetRegion{Private}_{#1}}
\newcommand{\notsecret}[1]{\styleSetRegion{Public}_{#1}}

\newcommand{\compatible}{\ensuremath{\strategy}\mbox{-compatible}}

\newcommand{\substrat}{\ensuremath{\nu}}

\newcommand{\transitionWith}[1]{\stackrel{#1}{\mapsto}}

\newcommand{\LocFinalSet}{F}

\newcommand{\untimed}[1]{\ensuremath{\mathit{unt}(#1)}}
\newcommand{\timed}[1]{\ensuremath{\mathit{time}(#1)}}
\newcommand{\tick}{\ensuremath{\blacktriangleright}}
\newcommand{\notick}{\ensuremath{\vartriangleright}}
\newcommand{\runend}{\ensuremath{\bigtriangledown}}

\newcommand{\ComplexityFont}[1]{{\sffamily\upshape #1}}
\newcommand{\PTIME}{\ComplexityFont{PTIME}\xspace}

\newcommand{\TwoEXPTIME}{\ComplexityFont{2EXPTIME}\xspace}
\newcommand{\ThreeEXPTIME}{\ComplexityFont{3EXPTIME}\xspace}

\newcommand{\PSPACE}{\ComplexityFont{PSPACE}\xspace}

\newcommand{\EXPSPACE}{\ComplexityFont{EXPSPACE}\xspace}

\newcommand{\eg}{e.g.,\xspace}

\newcommand{\ie}{i.e.,\xspace}

\newcommand{\wlogen}{\arXivVersion{without loss of generality}\CONCURVersion{w.l.o.g.\xspace}}

\title{Buffered control for opacity in timed automata} %

\author{\'Etienne André} {Nantes Université, CNRS, LS2N, Nantes, France \and Institut universitaire de France (IUF) \and \url{https://lipn.univ-paris13.fr/~andre/}}{}{https://orcid.org/0000-0001-8473-9555}{}

\author{Sarah Dépernet} {Université de Lorraine, CNRS, Inria, LORIA, F-54000 Nancy, France}{}{https://orcid.org/0009-0003-8710-7934}{}

\author{Engel Lefaucheux} {Université de Lorraine, CNRS, Inria, LORIA, F-54000 Nancy, France \and \url{https://elefauch.github.io/}}{}{https://orcid.org/0000-0003-0875-300X}{}

\authorrunning{\'E.\ André, S.\ Dépernet and E.\ Lefaucheux} %

\Copyright{\'Etienne André, Sarah Dépernet and Engel Lefaucheux} %

\ccsdesc[500]{Theory of computation~Timed and hybrid models}
\ccsdesc[500]{Security and privacy~Logic and verification}
\ccsdesc[300]{Theory of computation~Verification by model checking}

\keywords{timed automata, side-channel attack, observation with finite precision, control} %

\category{} %

\CONCURVersion{
	\relatedversion{} %
}
\arXivVersion{
	\relatedversion{The current manuscript is the extended version of the manuscript of the same name published in the proceedings of the 37th International Conference on Concurrency Theory (CONCUR 2026); the current manuscript notably includes all proofs.} %
}

\funding{This work is partially supported by ANR BisoUS (ANR-22-CE48-0012) and ANR GUMMIS (ANR-25-CE48-5096).}%

\acknowledgements{We would like to thank the CONCUR reviewers for useful suggestions of improvement.}%

\nolinenumbers %

\EventEditors{John Q. Open and Joan R. Access}
\EventNoEds{2}
\EventLongTitle{42nd Conference on Very Important Topics (CVIT 2016)}
\EventShortTitle{CVIT 2026}
\EventAcronym{CVIT}
\EventYear{2026}
\EventDate{December 24--27, 2016}
\EventLocation{Little Whinging, United Kingdom}
\EventLogo{}
\SeriesVolume{42}
\ArticleNo{1}

\begin{document}
\sloppy

\maketitle              %

\begin{abstract}
Timed automata are an extension of finite automata that can measure and react to the passage of time, handling real-time constraints by using clocks.
The timed opacity problem, where an attacker attempts to infer from observed actions and timestamps whether a secret location was visited, was shown undecidable for timed automata.
Execution-time opacity is a decidable though limited setting in which the attacker attempts to detect whether the secret location was visited, by only relying on the run duration.
Here, we significantly extend this setting, by allowing the attacker to observe all observable actions, in the right order though with only the integral parts of their timestamps, which we call buffered observations.
We consider the controlled setting, in which we aim at dynamically defining a sequence of sets of enabled actions ensuring opacity with buffered observations.
We first prove the inter-reducibility of full opacity (observations must not leak the visit of the secret location) and weak opacity (the attacker might prove that the location was not visited, but not that it was visited) in this new controlled setting.
Then, we prove the undecidability of the problem of existence of a sequential control strategy ensuring opacity under buffered observations.
Finally and most importantly, we prove that decidability is retrieved in two independent cases, with their tight theoretical complexities, with and without control.
These two assumptions express realistic limitations of the controller.
The first case is when the strategy of the controller changes at most an \emph{a priori} fixed number of times per time unit, which is not a strong practical assumption.
The second case is when all controllable actions are observable and distinguishable by an attacker.
\end{abstract}

\newpage

\section{Introduction}

Information‑system security remains a cornerstone of modern computing, yet its challenges become markedly more intricate when timing constraints are introduced.
In many safety‑critical and real‑time applications—such as embedded controllers, network protocols, and cyber‑physical systems—the correctness of security mechanisms depends not only on what actions occur but also on when they occur.
Timed automata (TAs)~\cite{AD94} provide a natural and expressive formalism for modelling such time‑dependent behaviours, allowing to capture both discrete transitions and continuous clock evolutions within a unified framework.
Within this timed setting, opacity has emerged as a fundamental privacy property: a system is opaque with respect to a secret if an external observer, having full or partial visibility of the system's observable actions and timestamps, can never infer a given secret, typically the reachability of a given location. This information can be hidden by ensuring that each observation associated to a secret behaviour is also produced by a public behaviour.
Opacity thus formalizes the notion of ``information hiding'' against attackers who exploit timing information.

It is well-known since Cassez' seminal work on timed opacity~\cite{Cassez09} that deciding opacity for TAs is undecidable in dense time when the attacker observes the full timed word produced by a run, even for restricted subclasses of TAs such as event-recording automata~\cite{AFH99}.
\arXivVersion{%
	The interplay between dense‑time clocks and nondeterministic choices yields an infinite state space that cannot be effectively explored by any algorithmic procedure.
}%
This undecidability result underscores the inherent difficulty of guaranteeing privacy in real‑time systems and motivates the development of approximate or restricted analyses.
This problem becomes however decidable in a discrete-time setting~\cite{AGWZH24,ADL26,KKG24}, where actions only occur at integral times.
However, discrete-time analyses can be limited, and we address here the general dense-time case.

\begin{figure}[tb]
	\centering

	\begin{tikzpicture}[x=4cm, y=1cm]

	\draw[->] (-0.1,0) -- (2.3,0) node[right] {$t$};

	\foreach \x in {0,1,2}{
		\draw[very thick] (\x,0.2) -- (\x,-0.2);
		\node[below] at (\x, -0.2) {$\x$};
	}

	\draw (0,0.15) -- (0,-0.15);
	\node[above] at (0,0.15) {$\actiona$};
	\draw (0.2,0.15) -- (0.2,-0.15);
	\node[above] at (0.2,0.15) {$\actionb$};
	\node[below] at (0.2,-0.15) {$0.2$};

	\draw (0.8,0.15) -- (0.8,-0.15);
	\node[above] at (0.8,0.15) {$\actionc$};
	\node[below] at (0.8,-0.15) {$0.8$};

	\draw (1.2,0.15) -- (1.2,-0.15);
	\node[above] at (1.2,0.15) {$\actionb$};
	\node[below] at (1.2,-0.15) {$1.2$};

	\draw[dashed] (0.2, -.6) -- (0.2,-.75) -- (1.2,-.75) -- (1.2, -.6);

	\end{tikzpicture}
	\caption{Example of observation with infinite precision}
	\label{figure:buffer}
\end{figure}

\fakeParagraph{Contributions}
In this paper (as in a number of former works), the secret consists in the visit (or not) of a given set of private locations.
The attacker aims at deciding whether at least one such private location was visited, by observing the system traces; the attacker additionally knows the system model (here, a~TA).
In practice, an attacker may not have infinite precision in its observations.
Therefore, it is interesting to study timed opacity in the setting of \emph{buffered observations}: instead of the exact timed word as in~\cite{Cassez09}, the attacker observes all actions within each one-unit time interval, in the right order but without their precise timestamp.
The analogy is therefore a \emph{buffer} that records all observable actions, and that is read every time unit.
This is particularly realistic when the attacker does not have the ability (\eg{} in terms of memory) to record all exact timestamps.
For example, consider the sequence of actions in \cref{figure:buffer}.
An attacker with infinite precision in its observations will see that the two occurrences of $\actionb$ are separated by exactly 1 time unit, which may give crucial information towards the knowledge of a secret.
However, under buffered observations, the attacker will only be able to see that $\actiona$ occurs at $t=0$, $\actionb$ and~$\actionc$ occur (in this order) between 0 and~1, while $\actionb$ occurs within $(1,2)$---in which case the aforementioned observation that both $\actionb$ are separated by 1 time unit is lost.
In fact, this behaviour becomes indistinguishable from another behaviour where the second $\actionb$ would occur, \eg{} at $t=1.1$ or at~$t=1.3$.

In the setting of buffered observations, opacity has been shown decidable~\cite{DQY25} (under the name of ``timed opacity against intruders with discrete-time precision'').
We extend to this setting of buffered observations the notion of \emph{control}: our aim is to decide the \emph{existence of a strategy}, \ie{} a dynamic selection of controllable actions such that the system is opaque.
This is of utmost importance when we aim at making a real-time system insensitive to timing attacks with a finite precision.

In this sense, our main contribution is twofold:
	we first prove the undecidability of the control problem with buffered observations.
	Second, we provide two assumptions expressing realistic limitations of the controller, each of them allowing to regain the decidability separately.
	The first assumption is that the strategy of the controller changes at most an \emph{a priori} fixed number of times per time unit.
	The second assumption is that all controllable actions are observable and distinguishable by an attacker.
Moreover, in each result of this paper, we address both opacity levels: \emph{weak} opacity (in which we only hide having visited a private location); and \emph{full} opacity (in which we additionally hide \emph{not} having visited it).
We show all of our proofs work for both flavours,
	by proving in the setting of control the inter-reducibility between weak and full opacity%
.

\fakeParagraph{Related work}
The aforementioned negative result from~\cite{Cassez09} leaves hope only if the definition or the setting is changed\arXivVersion{, which was done in four main lines of work}.
First, in~\cite{WZ18,WZA18}, the input model is simplified to \emph{real-time automata}\arXivVersion{~\cite{Dima01}, a restricted formalism compared to TAs: real-time automata can be seen as TAs with a single clock, reset at each transition}.
\cite{LLHL22,LLHL25}~exhibits decidability results for constant-time labelled automata\arXivVersion{, a subclass of real-time automata where events occur at constant values.
In this setting, initial-state opacity (``according to the observations, what was the initial state?'')\ and current-state opacity (``according to the observations, what is the current location?'')\ become decidable}.
In~\cite{Zhang24}, Zhang studies decidability for labelled real-time automata\arXivVersion{ (a subclass of labelled TAs); in this setting, state-based (at the initial time, the current time, etc.)\ opacity is proved to be decidable by extending the observer (that is, the classical powerset construction) from finite automata to labelled real-time automata}.

Second, in~\CONCURVersion{\cite{AGWZH24,ADL26,KKG24,DQYang25,AALL26}}\arXivVersion{\cite{AGWZH24,ADL26,DQYang25,AALL26}}, decidability results are exhibited in the setting of Cassez' definition, but with restrictions in the model: one-clock automata, one-action automata, event-recording automata, over discrete time or equivalently with integer resets.
\arXivVersion{%
	Similarly, in~\cite{KKG24}, discrete-time automata with several clocks are considered and transformed into tick automata in order to verify the current-state opacity.
	The discrete time setting yields decidability, while restricting the number of actions to~1 preserves undecidability; for a single clock, decidability can only be envisioned without silent actions~\cite{ADL26} (allowing silent actions or allowing two clocks immediately leads to undecidability).
}%

Third, in~\cite{AEYM21,AAL24}, the authors consider a \emph{time-bounded} notion of the opacity of~\cite{Cassez09}, where the attacker has to disclose the secret before a deadline, using a partial observability.
\arXivVersion{%
	This can be seen as a secrecy with an \emph{expiration date}.
	The rationale is that retrieving a secret ``too late'' is useless; this is understandable, \eg{} when the secret is the value in a cache; if the cache has been overwritten since, then knowing the secret is probably useless in most situations.
	In addition, the analysis is carried over a time-bounded horizon; this means there are two time bounds in~\cite{AEYM21}: one for the secret expiration date, and one for the bounded-time execution of the system.
Deciding opacity in this setting is shown to be decidable for~TAs.
}%
A somehow similar framework is considered in~\cite{SLR23}, in which the attacker has a bounded memory, and a finite duration between distinct observations is required, in which case the problem is decidable and \PSPACE{}-complete. 
\arXivVersion{The bounded memory side was also considered for related models such as time Petri net~\cite{DLL25}.}

Fourth, in~\cite{ALLMS23}, an alternative definition\arXivVersion{ to Cassez' opacity} is proposed, by studying execution-time opacity (ET-opacity): the attacker has only access to the \emph{execution time} of the system, as opposed to Cassez' partial observations where some events (with their timestamps) are observable.
The goal for the attacker is to deduce whether a special secret location was visited, by observing only the execution time.
In that case, most problems for TAs become decidable\arXivVersion{, including some problems when introducing an expiration date~\cite{ALM23} (see \cite{ALLMS23} for a survey)}.

Control of opacity was addressed in the simpler setting of ET-opacity\arXivVersion{, in which the attacker has only a single observation---that of the duration of an accepting run}:
in this setting, control was investigated with an untimed controller in~\cite{ABLM22} and a timed controller in~\cite{ADLL26}.
\CONCURVersion{%
	Control was also addressed in~\cite{BCLR15} in the setting of \emph{non-interference} in~TAs.
}\arXivVersion{%
	Regarding non-interference for TAs, some decidability results are proved in~\cite{BDST02,BT03}, while control was considered in~\cite{BCLR15} and a parametric timed extension in~\cite{AK20}.
}

General security problems for TAs are surveyed in~\cite{AA23survey}.

\fakeParagraph{Outline}
We recall necessary preliminaries, and introduce our setting of buffered observations and formal problem in \cref{sec:preliminaries}.
We then prove that full and weak opacity can be inter-reduced (\cref{sec:weak-full}).
We then prove our main contributions:
\arXivVersion{first, }controlling opacity under buffered observations is undecidable in general (\cref{sec:undecidable}),
but decidability can be retrieved \arXivVersion{using two assumptions, \ie{} }whenever the number of controller changes per time unit is statically bounded (\cref{sec:NSS}) or whenever all controllable actions are distinguishable by an attacker (\cref{sec:OSS}).

\section{Preliminaries}\label{sec:preliminaries}

We denote by $\setN, \setZ, \setQgeqzero, \setRgeqzero$ the sets of non-negative integers, integers, non-negative rationals and non-negative reals, respectively.
If $a$ and~$b$ are two integers with $a \leq b$, the set $\set{a, a+1, \dots, b-1, b}$ is denoted by $\interval{a}{b}$.

\emph{Clocks} are real-valued variables that all evolve over time at the same rate.
Throughout this paper, we assume a set~$\ClockSet = \{ \clocki{1}, \dots, \clocki{\ClockCard} \} $ of \emph{clocks}.
A \emph{clock valuation} is a function
$\clockval : \ClockSet \rightarrow \setRgeqzero$, assigning a non-negative value to each clock.
We write $\ClocksZero$ for the clock valuation assigning $0$ to all clocks.
Given a constant $d \in \setRgeqzero$, $\clockval + d$ denotes the valuation \st\ $(\clockval + d)(\clock) = \clockval(\clock) + d$, for all $\clock \in \ClockSet$. If $R$ is a subset of $\ClockSet$ and $\clockval$ a clock valuation, we call \emph{reset} of the clocks of $\resets$ and denote by $[\clockval]_R$ the valuation \st\ for all clock $x \in \ClockSet$, $\reset{\clockval}{\resets}(x) = 0$ if $x \in \resets$ and $\reset{\clockval}{\resets}(x) = \clockval(x)$ otherwise.

A \emph{constraint}~$\constraint$ is a conjunction of inequalities over~$\ClockSet $ of the form
$\clock \compOp d$, with
${\compOp} \in \{<, \leq, =, \geq, >\}$
and
$d \in \setZ$.
Given~$\constraint$, we write~$\clockval \models \constraint$ if the expression obtained by replacing each~$\clock$ with~$\clockval(\clock)$ in~$\constraint$ evaluates to true.

\subsection{Timed automata}\label{sec:TA}

Timed automata (TAs) extend finite automata with a finite set of clocks with values in~$\setRgeqzero$.
We extend standard TAs with
\begin{ienumerate}%
	\item a set of private locations (to model the secret), and
	\item \emph{two} action sets: one for observable events and one for controllable events.
\end{ienumerate}%
Each action is thus represented as a pair~$\clabel{\action}{\caction}$ specifying both its observable component~$\action$ and its controllable component~$\caction$.

\arXivVersion{\begin{definition}[Timed automaton (TA)]}
\CONCURVersion{\begin{definition}}
\label{def:TA}
	A TA~$\TA$ is a tuple \mbox{$\TA = \TAcontrolextend$}, where:
	\begin{ienumerate}%
		\item $\ObsSet$ is a finite set of observable actions;
		\item $\ControlSet$ is a finite set of controllable actions;
		\item $\LocSet$ is a finite set of locations,
		\item $\locinit \in \LocSet$ is the initial location,
		\item $\PrivSet \subseteq \LocSet$ is a set of private locations,
		\item $\FinalSet \subseteq \LocSet$ is a set of final locations,
		\item $\ClockSet$ is a finite set of clocks,
		\item $\invariant$ is the invariant, assigning to every $\loc\in \LocSet$ a constraint $\invariant(\loc)$, and
		\item $\EdgeSet$ is a finite set of edges  $\edge = (\loc,\guard,\clabel{\action}{\caction},\resets,\loc')$
		where~$\loc,\loc'\in \LocSet$ are the source and target locations, $\action \in \ObsSet \cup \{ \silentaction \}$ (where $\silentaction$ denotes an unobservable action), 
		$\caction \in \ControlSet \cup \{ \uncontrol \}$ (where $\uncontrol$ denotes an uncontrollable action),
		$\resets\subseteq \ClockSet$ is a set of clocks to be reset, and $\guard$ is a constraint %
			(called \emph{guard}).
	\end{ienumerate}%
\end{definition}
\begin{figure}[tb]

	\centering
	\large

	\begin{tikzpicture}[pta, scale=2, xscale=1, yscale=.5]

		\location[initial, name=s0, at={(0, -1)}]{$\styleclock{\clock} \leq 3$}{$\loci{0}$}

		\location[private, name=s2, at={(1.5, 0)}]{$\styleclock{\clock} \leq 2$}{$\locpriv$}

		\node[location, final] at (3, -1) (s1) {$\locfinal$};

		\path (s0) edge[bend left] node[above, align=center]{$\styleclock{\clock} \geq 1$\\$\clabel{\silentaction}{\cactioni{1}}$} (s2);
		\path (s0) edge[loop above] node[]{$\clabel{\action}{\cactioni{2}}$} (s0);
		\path (s0) edge[] node[below]{$\clabel{\actionb}{\uncontrol}$} (s1);
		\path (s2) edge[bend left] node[above]{$\clabel{\actionb}{\uncontrol}$} (s1);

	\end{tikzpicture}
	\caption{A TA example}
	\label{figure:example-TA}

\end{figure}
\begin{example}
	In \cref{figure:example-TA}, we give an example of a TA with three locations $\loci{0}$, $\locpriv$ and~$\locfinal$,
	four edges,
	two observable actions~$\{a, b\}$,
	two controllable actions~$\{ \cactioni{1}, \cactioni{2} \}$,
	and one clock~$\clock$.
	$\loci{0}$ is the initial location, $\locpriv$ is the (unique) private location, and~$\locfinal$ is the (unique) final location.
	$\loci{0}$ has an invariant ``$\styleclock{\clock} \leq 3$'' and the edge from $\loci{0}$ to $\locpriv$ is labelled by both the unobservable action~$\silentaction$ and the controllable action~$\cactioni{1}$, and has a guard ``$\styleclock{\clock} \geq 1$''.
\end{example}

\arXivVersion{
\paragraph{Semantics of timed automata}

We recall the semantics of a TA using a timed transition system~(TTS).
}

\begin{definition}[Semantics of a TA]\label{def:semantics}
	Given a TA $\TA = \TAcontrolextend$,
	the semantics of~$\TA$ is given by the timed transition system $\semantics{\TA} = \semanticscontrolextend$, with
	\begin{enumerate}
		\item $\StateSet = \big\{ (\loc, \clockval) \in \LocSet \times \setRgeqzero^\ClockSet \mid \clockval \models \invariant(\loc) \big\}$,
		\arXivVersion{\item }$\concstateinit = (\locinit, \ClocksZero) $,
		\item $\ActionControlSet = \set{\clabel{\action}{\caction} \mid (\action, \caction) \in (\ObsSet \cup \set{\silentaction})\times (\ControlSet \cup \set{\uncontrol})}$,
		\item  $\transition \subseteq (\StateSet \times \EdgeSet \times \StateSet) \cup (\StateSet \times \setRgeqzero \times \StateSet)$ consists of the discrete and (continuous) delay transition relations:
		\begin{enumerate}
			\item discrete transitions:
$\big((\loc,\clockval), \edge, (\loc',\clockval') \big) \in \transition$,
		if $(\loc, \clockval) , (\loc',\clockval') \in \StateSet$, $\edge = (\loc,\guard,\clabel{\action}{\caction},\resets,\loc') \in \EdgeSet$, $\clockval'= \reset{\clockval}{\resets}$, and $\clockval\models \guard$.
			\item delay transitions:
$\big((\loc,\clockval), d, (\loc,\clockval +d)\big) \in \transition$,
		if $d \in \setRgeqzero$ and
$\forall d' \in [0, d], (\loc, \clockval+d') \in \StateSet$.
		\end{enumerate}
	\end{enumerate}
\end{definition}

Moreover we write $(\loc, \clockval)\longuefleche{(d, \edge)} (\loc',\clockval')$ for a combination of a delay and discrete transition if
$\exists  \clockval'' :  \big( (\loc,\clockval), d, (\loc,\clockval'') \big) \in \transition$
and
$\big( (\loc,\clockval'') , \edge,  (\loc',\clockval') \big) \in \transition$. 

Given a TA~$\TA$ with semantics $\semanticscontrolextend$, we refer to the elements of~$\StateSet$ as the \emph{configurations} of~$\TA$.
A (finite) \emph{run} of~$\TA$ is an alternating sequence of configurations of~$\TA$ and pairs of delays and edges starting from the initial configuration $\concstateinit$ and ending in a final configuration (\ie{} whose location is final),
of the form
$(\loci{0}, \clockval_{0}), (d_0, \edge_0), (\loci{1}, \clockval_{1}), \ldots (\loci{n}, \clockval_{n})$ for some $n \in \setN$ (called the length of the run),
with
$\loci{n} \in \FinalSet$ and for $i = 0, 1, \dots n-1$, $\loci{i} \notin \FinalSet$, $\edge_i \in \EdgeSet$, $d_i \in \setRgeqzero, $ and
$(\loci{i}, \clockval_{i}) \longuefleche{(d_i, \edge_i)} (\loci{i+1}, \clockval_{i+1})$.
We denote by $\AllRuns(\TA)$ the set of runs of~$\TA$.
A \emph{path} of~$\TA$ is an infix of a run beginning and ending with a configuration. 

A common abstraction of the semantics of TAs is the \emph{region automaton}~\cite{AD94}, recalled in \cref{appendix:regions}.

\subsection{Buffered observations}\label{sec:timed-words}

A timed word is a sequence of pairs made up of a label (observation, control or both) and a timestamp in~$\setRgeqzero$, with the timestamps being non-decreasing over the sequence.
Timed words over an alphabet~$\ActionSet$ are therefore elements of $(\ActionSet \times \setRgeqzero)^*$.

We denote by $\TimedWords{\ActionSet}$ the set of all finite timed words over the alphabet~$\ActionSet$. A timed word over a set $\ActionControlSet$ of actions in $\ObsSet$ with control in $\ControlSet$ induces two timed words, one over the set of actions $\ObsSet$ and one over the set of controls $\ControlSet$, each in which the silent symbols $\silentaction$ and $\uncontrol$ have been removed. For such a timed word $w$ in $\TimedWords{\ActionControlSet}$, we denote by $\act{w}$ and $\ctr{w}$ the action and control parts of~$w$. For instance, if $w = (\clabel{\action}{\cactioni{2}}, 0.3) (\clabel{\actionb}{\cactioni{1}}, 0.8) (\clabel{\silentaction}{\cactioni{2}}, 2.6)(\clabel{\actionb}{\uncontrol}, 3.4)$, then $\act{w} = (a, 0.3) (b, 0.8) (b, 3.4)$ and $\ctr{w} = (\cactioni{2}, 0.3) (\cactioni{1}, 0.8) (\cactioni{2}, 2.6)$. We also denote the label and timestamp parts of the timed word (or timed sequence) $w$ by $\untimed{w}$ and $\timed{w}$ respectively. In the above example, $\untimed{w} = \clabel{\action}{\cactioni{2}} \clabel{\actionb}{\cactioni{1}} \clabel{\silentaction}{\cactioni{2}} \clabel{\actionb}{\uncontrol}$, $\timed{w} = 0.3 \cdot 0.8 \cdot 2.6 \cdot 3.4$; $\untimed{\act{w}} = abb$ and $\untimed{\ctr{w}} = \cactioni{2} \cactioni{1} \cactioni{2}$.

A run (or a path)~$\run$ of a TA~$\TA$ defines a timed word $\tw(\run)$ over $\ActionControlSet$: if $\run$ is of the form
\((\loci{0}, \clockval_0), (d_0, \edgei{0}), (\loci{1}, \clockval_1), \ldots, (\loci{n}, \clockval_n )\)
where for each  $i \in \interval{0}{n-1}$, $e_i = (\loc_i , g_i, \clabel{a_i}{\caction_i} , R_i, \loc_{i+1})$ and $\clabel{a_i}{\caction_i} \in \ActionControlSet$,
then it generates the timed word
$\tw(\run) = (\clabeli{j_0}, \sum\limits_{i=0}^{j_0}d_i)(\clabeli{j_1}, \sum\limits_{i=0}^{j_1}d_i)\cdots (\clabeli{j_m}, \sum\limits_{i=0}^{j_{m}}d_i)$, where $j_0 < j_1 < \dots < j_{m}$ and $\set{j_k \mid k \in \interval{0}{m}} = \set{i \in \interval{0}{n-1} \mid \clabeli{i} \neq \clabel{\silentaction}{\uncontrol}}$.

The set of timed words recognized by a TA~$\TA$ is the set of timed words generated by its runs,
\ie{} $\tw(\AllRuns(\TA))$, which we also denote $\Language(\TA)$ and call \emph{language} of~$\TA$.

A \emph{time region} is an interval either of the form $\set{n}$ or $(n; n+1)$ with $n \in \setN$.
The set of time regions is denoted by~$\TimeRegions$, and for $d \in \setRgeqzero$ then we denote by $\TRegion{d}$ the time region containing~$d$.
If $I$ is a convex union of time regions and $w$ a timed word, we define the \emph{restriction of $w$ to~$I$} as the maximal subword of~$w$ whose timestamps are all included in~$I$.
We denote it by~$\restrict{w}{I}$.
Formally, $\restrict{(\clabeli{0}, \tau_0)(\clabeli{1}, \tau_1) \dots (\clabeli{N}, \tau_N)}{I} = (\clabeli{k}, \tau_k) (\clabeli{k+1}, \tau_{k+1}) \dots (\clabeli{k+n}, \tau_{k+n})$ with for all $i$ in $\interval{0}{n}$, $\tau_{k+i} \in I$; if $k > 0$ then $\tau_{k-1} \notin I$ and if $k+n < N$ then $\tau_{k+n+1} \notin I$.
This restriction of a timed word to a convex union of time regions is also extended to runs: if $\run = (\loci{0}, \clockval_{0}), (d_0, \edge_0), (\loci{1}, \clockval_{1}), \ldots (\loci{N}, \clockval_{N})$ for some $N \in \setN$, the \emph{restriction of $\run$ to a convex union of regions $I$} is the path $\restrict{\run}{I} = (\loci{k}, \clockval_{k}), (d_k, \edge_k), (\loci{k+1}, \clockval_{k+1}), \ldots (\loci{k+n}, \clockval_{k+n})$ such that:
\begin{itemize}
\item if $k > 0$, $\sum\limits_{j=0}^{j=k-1} d_j \notin I$; and
\item for all $i \in \interval{0}{n}$, $\sum\limits_{j=0}^{j=k+i} d_j \in I$; and
\item if $k+n < N$, $\sum\limits_{j=0}^{k+n+1} d_j \notin I$.
\end{itemize}

We now define \emph{buffered observations}, which correspond to the exhaustive sequences (in the right order) of observable actions made during each time region and revealed, without their timestamp, when the next time region is reached.
New actions $\tick$ and~$\notick$ are introduced to identify the time region corresponding to each part of the buffered observation: a~$\notick$ is added in the buffered observation whenever the total execution time enters an open interval, and a~$\tick$ is added when the next integer is reached.
We also consider the end of an execution as an action and annotate it with the symbol $\runend$. 
If $w = (\clabeli{0}, \tau_0) (\clabeli{1}, \tau_1) \dots (\clabeli{n}, \tau_n)$ is a timed word over the set of controlled actions~$\ActionControlSet$, the buffered observation of~$w$, denoted by~$\buffered{w}$, is the untimed word
\[ \buffered{w} = \untimed{\restrict{\act{w}}{\set{0}}} \notick \untimed{\restrict{\act{w}}{(0;1)}} \tick \untimed{\restrict{\act{w}}{\set{1}}} \notick \dots \untimed{\restrict{\act{w}}{\TRegion{\tau_n}}}\runend z\]
for some $z \in \set{\tick, \notick}$.

\begin{example}
For a timed word $w = (\clabel{\action}{\cactioni{3}}, 0.3) (\clabel{\actionb}{\cactioni{1}}, 0.8) (\clabel{c}{\cactioni{2}}, 1)(\clabel{\silentaction}{\cactioni{3}}, 2.6)(\clabel{\actionb}{\uncontrol}, 3.5)$, the buffered observation is the untimed word $\notick a b \tick c \notick \tick \notick \tick \notick b \runend \tick$.
Each $\tick$ punctuates an integer global time and the symbol~$\notick$ indicates some time has elapsed since the last time unit, making a distinction between
\begin{ienumerate}%
	\item actions made during a time interval, and
	\item actions made at a precise integral time.
\end{ienumerate}%
This distinction gives exactly the time regions of all timestamps of the timed word. In particular, we also denote by $\TRegion{w}$ the time region of the end of the trace $w$.
In~$w$, $a$ and the first~$b$ occur during the time region $\treg{0}{1}$, $c$ appears at the precise time~$1$, while the last~$b$ occurs at a time in~$\treg{3}{4}$.
\end{example}

Throughout a path~$\run$, the controller or an attacker only has access to the buffered observation of this path, $\Trace{\run} = \buffered{\tw(\run)}$, called the \emph{trace} of~$\run$. We also use this notation for the traces of a set of runs.
A TA can be easily modified, so that all runs automatically include the symbols $\tick$, $\notick$ and~$\runend$, by adding a new clock reset every time unit, used to produce the $\tick$ and~$\notick$ (see \cref{sec:buffaut}).
That is, from now on, for every path $\run$,  $\Trace{\run} = \untimed{\tw(\run)}$.

\subsection{Control}\label{sec:def-strategies}

In~\cite{ADLL26}, a form of control was introduced in order to enforce execution-time opacity (``ET-opacity'').
There, a ``meta-strategy'' was restricting the set of transitions a run can take during each time region, and their order. In particular, this strategy was relying exclusively on the total amount of time elapsed since the beginning of the run, and fixing at each time unit its approximate behaviour until the next time unit.

In order to extend this control to the much broader setting of buffered observations, our controller may have access to some part of the trace during the run. We define sequential strategies, extending the notion of meta-strategy from~\cite{ADLL26} to our setting.
To do so, we first define the \emph{partial trace}~$\PTrace{\run}{I}$ of a run $\run$ in the time region~$I$ as the restriction of~$\Trace{\run}$ to the convex union of all time regions preceding~$I$, \ie{} to the interval~$[0; n)$ if~$I = \set{n}$ and to the interval~$[0; n]$ if~$I = (n; n+1)$.
This way, $\PTrace{\run}{I}$ gathers all observations taken into account by the controller at any time in $I$, when it announces its strategy for the time region $I$.
The set of all partial traces of the set $\AllRuns(\TA)$ of runs is denoted $\PartialTraces{\AllRuns(\TA)}$.

A \emph{sequential strategy}~$\strategy$ is a function mapping a partial trace to a finite sequence of subsets of the control alphabet~$\ControlSet$, such that if the time region is a singleton, then the sequence returned by the strategy includes only one element.
The intuition is as follows: at each time region, the strategy announces a sequence of control alphabets that will be used during that region.
If the current time region is a singleton~$\{n\}$ (i.e., an integer time), then only one set of enabled actions is needed, as the region has no duration.
If the current time region is an open interval~$(n; n+1)$, then the strategy may change the set of enabled actions multiple times during that interval, returning a sequence~$\Sigma_{n,1}\Sigma_{n,2}\ldots\Sigma_{n,K}$: the first set~$\Sigma_{n,1}$ is active from the beginning of the interval until the first change, $\Sigma_{n,2}$ from the first change until the second, and so on.
Formally:
\[
\begin{array}{rrcl}
	\strategy: & \PartialTraces{\AllRuns(\TA)} & \longrightarrow & \Parts{\ControlSet}^* \\
	& \PTrace{\run}{\set{n}} & \longmapsto & \ActionSet_{n,1} \textit{ if } \TRegion{\duration(\run)} = \set{n}\\
	& \PTrace{\run}{(n; n+1)} & \longmapsto & \ActionSet_{n,1} \ActionSet_{n,2} \dots \ActionSet_{n,K} \text{,if } \TRegion{\duration(\run)} = (n; n+1), \text{ for some } 0<K \in \setN
 
\end{array}
\]

For instance, in the TA of \cref{figure:example-TA}, a sequential strategy could allow only action~$\kappa_1$ during the time region~$(0;1)$ and then switch to allowing~$\kappa_2$ at the singleton region~$\{1\}$, giving the sequence~$\{\kappa_1\}$ for~$(0;1)$ and the single set~$\{\kappa_2\}$ for~$\{1\}$.

An \emph{implementation of a sequential strategy $\strategy$} extends $\strategy$ with exact timestamps.
That is, for each time region $I \in \set{\set{n}, (n;n+1)}$ and partial trace,
given the sequence $\ActionSet_{n,1} \ActionSet_{n,2} \dots \ActionSet_{n,K}$ defined by~$\strategy$,
an implementation $\substrat$ produces a timed sequence $(\ActionSet_{n,1}, t_{n,1}) (\ActionSet_{n,2}, t_{n,2}) \dots (\ActionSet_{n,K}, t_{n,K})$ with $(t_{n, k})_{k \in \interval{1}{K}}$ an increasing sequence with values in~$I$ and with $t_{n,K}=n+1$.
Formally, $\substrat$ \emph{implements} $\strategy$ if and only if for every buffered observation~$w$, the timed sequence $\substrat(w)$ is such that $\untimed{\substrat(w)} = \strategy(w)$ and $\restrict{\substrat(w)}{\TRegion{w}} = \substrat(w)$.

Let~$\substrat$ be some implementation of a strategy~$\strategy$. Let~$t \in \setRgeqzero$. The set of control labels \emph{enabled by~$\substrat$ at time~$t$ knowing the partial trace~$w$} is
\[ E_{\substrat}(t,w) = \strategy(w)_i \cup \set{\uncontrol} \; \text{ with } i = \min \set{k \mid t \leq \timed{\substrat(w)}_k }.\]

The runs of $\TA$ which transitions are all enabled by an implementation~$\substrat$ of the strategy~$\strategy$ are called \emph{$\strategy$-compatible} runs.
Their set is denoted $\CRuns{\TA}{\strategy}$.
Formally:%

\begin{align*}
\CRuns{\TA}{\strategy} &= \Big\{ \run \in \AllRuns(\TA) \mid \run = (\locinit, \clockval_{0})(d_0,\clabeli{0})\dots (d_{n-1},\clabeli{n-1})(\loci{n}, \clockval_{n}):
\exists \substrat \text{ implementing }\\
& \strategy \text{ such that } 
 \forall 0\leq i< n, \caction_i \text{ is enabled }\text{by } \substrat \text{ at time } \sum_{k=0}^i d_k \text{ knowing }\\& \PTrace{(\locinit, \clockval_{0})(d_0,\clabeli{0})\dots (d_{i-1},\clabeli{i-1})(\loci{i}, \clockval_{i})}{\TRegion{\sum_{k=0}^i d_k}}
\Big\}
\end{align*}

The trace of a $\strategy$-compatible run is called \emph{controlled trace}. The set of controlled traces $\Trace{\CRuns{\strategy}{\TA}}$ of a TA~$\TA$ with a strategy $\strategy$ is called the \emph{controlled language} of~$\TA$ under~$\strategy$.

We will consider two orthogonal restrictions of control with sequential strategies:
\begin{itemize}
\item A strategy for which every sequence of alphabets has length at most~$N$, for a given $N \in \setN$, is called an \emph{$N$-sequential strategy} ($N$-SS for short).
This restriction is natural, representing the inability of a control to make an arbitrarily high number of actions within a single time unit.
\item All controllable actions are supposed observable and distinguishable from each other 
and from uncontrollable actions. 
This assumption, called \emph{Observable Sequential 
Strategies} (OSS for short), directly concerns the \TA{} and the control alphabet: 
each control label~$\caction$ is associated to an observable action~$\action_\caction$ 
such that %
a transition has observation~$\action_\caction$ iff it has the control label~$\caction$.
\end{itemize}

\subsection{Opacity of a controlled TA}\label{sec:winning-condition}

Given a TA~$\TA$ and a run~$\run=(\loci{0}, \clockval_0), (d_0, \edgei{0}), (\loci{1}, \clockval_1), \ldots, (\loci{n}, \clockval_n)$ of~$\TA$, we say that $\run$ \emph{visits} the location $\loc \in \LocSet$ if there exists~$m \in \interval{0}{n}$ such that $\loci{m} = \loc$.
If $\run$ visits some $\locpriv \in \PrivSet$, we say it is a \emph{private run}; otherwise, it is a \emph{public run}.
We denote by~$\PrivRuns{\TA}{\strategy}$ (resp.\ $\PubRuns{\TA}{\strategy}$) the set of private runs (resp.\ public runs) under a strategy~$\strategy$.
We use $\PrivateTr{\TA}{\strategy} = \Trace{\PrivRuns{\TA}{\strategy}}$ to denote the set of traces of private runs of~$\TA$ under~$\strategy$, and $\PublicTr{\TA}{\strategy} = \Trace{\PubRuns{\TA}{\strategy}}$ for the set of traces of public runs of~$\TA$ under~$\strategy$.

A TA can be modified to store within locations whether the current run visited a private location or not (one only needs to include a Boolean in the locations that is turned to~$\BTrue$ if a private location was visited on the way to this location, see~\cite{ADL26}, illustrated in \cref{section:tools:Apriv-Apub}).
As such, we can assume \wlogen{} that the set of private locations $\PrivSet$ is absorbing (\emph{i.e.} no run can go from a location of $\PrivSet$ to a location of $\LocSet\setminus \PrivSet$). We denote $\LocSetPub=\LocSet\setminus \PrivSet$ the set of
public states (that by construction can only be reached by public runs).

Opacity is a notion to describe that, despite all information gathered by an attacker observing a run, they cannot deduce a secret (here, whether a private location was visited).
Thus we say that a trace is \emph{opaque under $\strategy$} if it can be obtained from both a public and a private run, \ie{} whenever it is in $\PrivateTr{\TA}{\strategy} \cap \PublicTr{\TA}{\strategy}$.
A TA~$\TA$ is \emph{weakly opaque under strategy $\strategy$} if $\PrivateTr{\TA}{\strategy} \subseteq \PublicTr{\TA}{\strategy}$ (all private traces are opaque under~$\strategy$) and \emph{fully opaque under~$\strategy$} if $\PrivateTr{\TA}{\strategy} =  \PublicTr{\TA}{\strategy}$ (all traces are opaque under~$\strategy$).

When considering a controlled TA,
a first option consists in exhibiting a strategy simply ensuring opacity.
A second option consists in requiring additionally that a final location remains reachable even with this restriction of the system's behaviour.
We say that the strategy~$\strategy$ is \emph{\reachcondition} on~$\TA$ if $\CRuns{\TA}{\strategy}$ is not empty.
This reachability condition is natural since in several cases, the best strategy achieving opacity is the one enabling as few actions as possible, or even entirely blocking the system.
Indeed, a TA with no possible run is opaque, and such a control strategy which is not \reachcondition{} is of little interest in practice.

We study the existence of a strategy ensuring opacity, both with and without requiring it to be \reachcondition{}.

\defProblem{Weak / Full \ \Problem{} / \ProblemRC{}: Existence of a (\reachcondition{}) sequential strategy making the TA fully opaque (resp.\ weakly opaque)}{A TA~$\TA$}{Does there exist a (\reachcondition{}) sequential strategy making~$\TA$ weakly (resp.\ fully) opaque?}

When limiting the search to $N$-sequential strategies or when in the observable sequential strategies settings, these problems are denoted:
\ProblemNSS{}, \ProblemNSSRC{}, \ProblemOSS{} and \ProblemOSSRC{}.

\begin{remark}
Opacity is worth investigating following two different approaches:
\begin{description}
	\item[offline] when the only traces we compare in order to decide opacity are those of runs reaching a final location;
	\item[online] when \emph{prefixes} are also considered, \ie{} traces of all paths eventually leading (or not) to a final location.
\end{description}

Here, we focus on \emph{offline opacity} and simply refer to it as \emph{opacity}\arXivVersion{ in the remainder of the paper}.
However, note that online opacity can be seen as a special case of offline opacity, since adding a silent transition (taken without any delay) from all locations to a final location amounts to add all paths to the set of runs reaching a final location.
As a consequence, solving online opacity problems can be done with the same algorithms as for offline opacity.
\end{remark}

\section{Inter-reducibility of weak and full opacity, with and without control}\label{sec:weak-full}

We first prove that full and weak opacity are inter-reducible in our setting, and 
thus that they have the same theoretical complexity.
In the subsequent sections, we will therefore only establish the results for one of the
two notions.

\begin{restatable}{theorem}{interreduc}
The weak and full variants of the 
\Problem{}, \ProblemRC{}, \ProblemNSS{}, \ProblemNSSRC{}, \ProblemOSS{} and \ProblemOSSRC{} problems are inter-reducible.
\end{restatable}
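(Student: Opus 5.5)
The plan is to give two generic, strategy-preserving constructions on timed automata, one for each direction. Both rest on one principle. Each construction only \emph{duplicates} behaviours of $\TA$ and relabels which runs count as private. Moreover, every added edge is uncontrollable and is either silent or occurs at the very end of a run. Hence the partial traces seen by the controller during the run coincide with those of $\TA$, and a sequential strategy~$\strategy$ for $\TA$ is literally a sequential strategy for the new automaton $\TA'$, and conversely. Under this correspondence, the $\strategy$-compatible runs of $\TA'$ are exactly copies of the $\strategy$-compatible runs of $\TA$. As a preliminary step, I would assume \wlogen{} (as explained in \cref{sec:winning-condition}) that each location carries a Boolean recording whether a private location was visited.

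\emph{Weak to full.} Build $\TA'$ from two copies $\TA^1,\TA^2$ of $\TA$ and a fresh initial location. From this location, a silent uncontrollable edge with guard $\clockz=0$ (where $\clockz$ is a fresh clock) leads to the initial location of each copy. In $\TA^1$ everything is unchanged. In $\TA^2$, every final location $\locfinal$ is replaced by a fresh private location reached with zero delay (enforced by a fresh clock reset on entry), followed by a silent uncontrollable edge to a new final location. Then, for every $\strategy$,
\[\PrivateTr{\TA'}{\strategy}=\PrivateTr{\TA}{\strategy}\cup\PublicTr{\TA}{\strategy}\quad\text{and}\quad\PublicTr{\TA'}{\strategy}=\PublicTr{\TA}{\strategy}.\]
Consequently, $\TA'$ is fully opaque under $\strategy$ iff $\TA$ is weakly opaque under $\strategy$.

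\emph{Full to weak.} Use the same two copies. This time, copy~$i$ appends, with zero delay just before reaching a final location, an uncontrollable observable marker~$\sharp_i$ with fresh symbols $\sharp_1\neq\sharp_2$. In copy~$2$, the roles are swapped: a run ending in a final location whose flag says ``public'' goes through a fresh private location before finishing, and a flagged-private run goes to a final location of a third, public-only layer. Concretely, privacy of $\TA'$ is decided by the flag at the end, and private sets remain absorbing because the extra locations are terminal. Then
\[\PrivateTr{\TA'}{\strategy}=\PrivateTr{\TA}{\strategy}\sharp_1\cup\PublicTr{\TA}{\strategy}\sharp_2\quad\text{and}\quad\PublicTr{\TA'}{\strategy}=\PublicTr{\TA}{\strategy}\sharp_1\cup\PrivateTr{\TA}{\strategy}\sharp_2,\]
up to the placement of $\runend$ and the region symbols. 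The markers are distinguishable, so weak opacity of $\TA'$ is equivalent to both inclusions, i.e.\ full opacity of $\TA$.

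The main obstacle is checking that the correspondence of strategies is exact. The markers lie in the last time region of the run, and strategies only see partial traces of \emph{strictly earlier} regions. So the controller cannot use the marker to behave differently in the two copies, and the silent initial choice is invisible. The key verification is that any strategy of $\TA'$, restricted to marker-free partial traces, is a strategy of $\TA$ inducing the same compatible runs in each copy, and vice versa.

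The side conditions then follow. $\CRuns{\TA'}{\strategy}\neq\emptyset$ iff $\CRuns{\TA}{\strategy}\neq\emptyset$, which handles the non-blocking variants. The sequences returned by strategies are unchanged, which handles the $N$-SS variants. Only uncontrollable actions are added, so controllable actions remain observable and distinguishable, which handles the OSS variants. All constructions are polynomial, which gives the claimed inter-reducibility for all six problem pairs.
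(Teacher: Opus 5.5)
Your proposal is correct and is essentially the paper's proof. For weak-to-full you make every behaviour of $\TA$ also available as a private run while leaving the public runs unchanged; the paper does this by prefixing $\APriv$ and $\APub$ with a private location, you by suffixing a second copy with one. For full-to-weak you use a single automaton in which a fresh end-of-run marker, invisible to the controller, selects which inclusion is tested, with privacy swapped on one side so that a single strategy must ensure both inclusions. Using two markers $\sharp_1,\sharp_2$ after a silent initial branch, instead of the paper's common $\extraAction$ followed by $a$ or $b$, is only a cosmetic variation, and your remarks on the non-blocking, $N$-SS and OSS variants match the paper's closing argument.
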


\begin{proof}[Proof sketch, see \cref{sec:interreduc} for the full proof]
	In~\cite{ADL26}, weak and full ET-opacity were shown to be inter-reducible in the absence of control.
	The proof mainly relied on the ability to construct from a TA~$\TA$ another TA~$\TA'$ where private and public runs were swapped.
	This proof cannot carry over to the controlled setting since there is no guarantee that the same strategy can be applied in both TAs to make them \emph{simultaneously} weakly opaque.

	Reducing weak opacity problems to full opacity problems is relatively straightforward, by reusing TA fragments from the proofs in~\cite{ADL26}.
	The opposite reduction is more technical: we use a single TA, and add a final letter (`$a$' or `$b$') pointing
	which inclusion is being tested: if the last letter of the trace is an `$a$', then the run participates in the test of
	the inclusion $\PrivateTr{\TA}{\strategy} \subseteq \PublicTr{\TA}{\strategy}$; and if the last letter is a `$b$', then the run participates in the converse inclusion.
\end{proof}

\section{Undecidability in the general case of buffered observations}\label{sec:undecidable}

While the buffering of the observations and the sequencing of the control weaken the required accuracy of the analysis, we show that the full \Problem{} and \ProblemRC{} problems are undecidable.
To do so, we proceed by reduction from the Post correspondence problem (PCP) to the problem of the existence of a sequential strategy ensuring full opacity, under buffered observations. 
Intuitively, we rely on the strategy having to plan an unbounded number of alphabets
over the next time interval to have it choose the full word of the PCP, and then the
following buffered observation reveals whether the choice was good (\ie{} ensured opacity).
Our reduction strongly relies on the absence of bounds on the number of control alphabets planned by the strategy in a single time interval (contrarily to the $N$-SS setting), and requires that the controllable actions do not appear in the buffered observation (contrarily to the OSS setting).

\begin{theorem}\label{theorem:general-undecidability}
The weak and full \Problem{} and \ProblemRC{} problems are undecidable.
\end{theorem}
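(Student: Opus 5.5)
By the inter-reducibility theorem of \cref{sec:weak-full}, it suffices to prove undecidability of full \Problem{} and full \ProblemRC{}. I would reduce from PCP: given pairs $(u_1,v_1),\dots,(u_k,v_k)$ over an alphabet $\set{\actiona,\actionb}$, build a TA~$\TA$ that admits a (\reachcondition{}) sequential strategy making it fully opaque iff the instance has a solution.

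\textbf{Encoding the guess in the control sequence.} I would use controllable actions $\cactioni{1},\dots,\cactioni{k}$ that are unobservable and carry no information in the trace. During the first open region $(0;1)$, the strategy's sequence $\ActionSet_{0,1}\dots\ActionSet_{0,K}$ is meant to encode a word of indices $i_1\dots i_K$, with $\ActionSet_{0,j}=\set{\cactioni{i_j}}$. Because the partial trace is empty in this region, the strategy cannot adapt, so it commits to one candidate solution. The unbounded length $K$ is exactly what makes the choice unbounded. This is why the reduction fails for $N$-SS, and hiding the controls is why it fails for OSS.

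\textbf{Reading the guess.} Several nondeterministic ``reader'' gadgets, branched at time~$0$ by an observable or silent choice, sample the sequence during $(0;1)$. A $u$-reader loops on silent edges labelled $\cactioni{i}$ and emits observably $u_i$ each time. Since time is dense and the implementation is existentially chosen, the reader can take one edge inside each control slot. A $v$-reader does the same with $v_i$. All readers then emit their output in the same region. One gadget is private, the $u$-reader; one is public, the $v$-reader. Their traces then coincide iff $u_{i_1}\cdots u_{i_K}=v_{i_1}\cdots v_{i_K}$.

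\textbf{Controlling the readers' freedom.} The readers might skip slots or take several edges in one slot. I would use auxiliary private/public pairs of checker branches whose only purpose is to make every deviating reading trace non-opaque, or to duplicate it symmetrically, so that full opacity reduces exactly to the equality above. I would also add separator observations per slot so that read counts are visible. A further requirement is $K\geq 1$, i.e.\ a nonempty solution. I would force it by having the private branch require at least one controlled edge, with a public reader that otherwise produces an unmatched trace; this also blocks the trivial ``block everything'' strategy, making the \reachcondition{} and plain variants coincide.

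\textbf{Main obstacle.} The hardest step is this last one: bounding the nondeterminism of how a run aligns with the control slots, given that $\CRuns{\TA}{\strategy}$ quantifies existentially over implementations $\substrat$ per run. I would first try letting the readers deliberately repeat or skip, with both the private and the public side allowed the same freedom. Then every trace produced by a faulty reading is matched by the symmetric faulty reading on the other side. The correctness argument then compares only the words of indices actually realised, and a solution yields a strategy in one direction, while any opaque strategy yields a solution word in the other.
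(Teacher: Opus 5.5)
Your architecture matches the paper's: a reduction from PCP, with the index word committed as the control sequence on $(0;1)$, hidden controls, and a private reader of one list against a public reader of the other. But the step you defer as the ``main obstacle'' is where the proof actually lives, and neither mechanism you offer for it works.

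Making deviating readings non-opaque is impossible. A run may always idle through a slot, so skipped readings exist under every strategy.

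Giving both sides the same freedom does not match them either. A private reading of an index subsequence $S$ yields $u_S$, while the public reader on the same $S$ yields $v_S$, and these differ in general. For the pairs $(u_1,v_1)=(a,ab)$, $(u_2,v_2)=(ba,a)$ with solution $12$, the private reading $u_2=ba$ has no public counterpart, even allowing repetitions. So a PCP solution no longer yields an opaque strategy.

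Matching faulty $u$-readings with a public copy of the $u$-reader that has the same freedom fails the other way. That copy also reproduces the full reading, so every strategy becomes opaque.

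The missing idea is a duplicate that produces every non-full reading but never the full one. In the paper, a public $u$-reader and a private $v$-reader (branches D and A) must consume one index/$\actionend$ pair without reading any word before they can reach a final location. As a result, for nonempty PCP words, full opacity holds iff the two full readings coincide. Hard-coding the first pair (modified PCP) then handles $K\geq 1$ and makes every strategy non-blocking.

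All of this presupposes that readings are exactly subsequences of the committed word, i.e.\ that no slot is read twice. Observable separators cannot enforce that: slot boundaries are invisible to the automaton, so nothing ties an emitted separator to a change of control set. The paper instead makes the separator a \emph{control} action $\actionend\in\ControlSet$. A reader must take an $\actionend$-controlled edge between two index edges. A gadget then forces the committed sequence to be $i_1\actionend i_2\actionend\cdots i_k\actionend$: a private branch emits enabled indices observably, and a public branch can only mimic it if an $\actionend$ follows every index.

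A further gadget forces singleton sets, by producing a private exclusive observation whenever two controls are enabled at the same instant. This is necessary: otherwise a reader chooses among several enabled indices, and equalities $u_c=v_{c'}$ between different choice sequences $c\neq c'$ can give opacity without any PCP solution.

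Finally, the components must be entered silently. An observation at time $0$ belongs to the partial trace for $(0;1)$, and would let the strategy commit a different sequence in each gadget.
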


\begin{proof}
Let $N \geq 2$ be an integer, and $V=(v^j)_{1 \leq j \leq N}$ and $W=(w^j)_{1 \leq j \leq N}$ be sequences of words over an alphabet $\ActionSet$.
The PCP asks whether there exists a sequence of indices $(i_k)_{1 \leq k \leq K}$, with $K \geq 1$ and for all $k$, $1 \leq i_k \leq N$, such that $v^{i_1} \cdot v^{i_2} \cdot \dots \cdot v^{i_K} = w^{i_1} \cdot w^{i_2} \cdot \dots \cdot w^{i_K}$. 
For $1 \leq i \leq N$, we denote by $n_i$ and $m_i$ the respective length of $v^i$ and~$w^i$.

We will build a TA~$\TB$ such that there exists a sequential strategy ensuring full opacity of $\TB$ iff the instance $(V,W)$ of the PCP has a solution starting with the pair $(v_1,w_1)$ (this starting assumption is without loss of generality as there are finitely many starting options).

The TA $\TB$ uses the control set $\ControlSet=\{1,\dots, N, \actionend\}$,
cannot take a transition labelled by a controllable action at time $0$ and thanks to an invariant on the final locations
every run reaches a final location before time~1.
Hence, every sequential strategy on~$\TB$ is defined entirely by the sequence of control alphabets it selects for the interval~$(0;1)$.
Moreover, $\TB$ is made of three components reached at a time greater than $0$ from the initial location.
The first two limit the shape of the strategies that  could ensure opacity, so that it can be associated to a word $i_1 \actionend i_2 \actionend\dots i_k \actionend$ where for each $j$, $1\leq i_j\leq N$, and a third which tests whether the sequence $i_1 i_2 \dots i_k $ is solution to the PCP instance.
More precisely, $\TB$ consists of:
\begin{enumerate}
\item a gadget ensuring that two actions cannot be selected at the same time (see \cref{fig:gadget-limit}), hence a strategy achieving opacity is associated to a single sequence of alphabets, each of which contains a single action (the strategy can thus be represented by a word $a_1\dots a_k\in \ControlSet^*$);%
\item a gadget which forces that the strategy alternates between actions $i\in \{1,\dots,N\}$ and the action $\actionend$, starts with an action $i\in \{1,\dots,N\}$ and ends with $\actionend$, giving the form claimed (see \cref{fig:gadget-order});
\item a part of the TA imitating the words of the PCP instance (see \cref{fig:Post}): mainly, if the strategy is associated to a word $i_1 \actionend i_2 \actionend\dots i_k \actionend$, a private run 
will read $v^{i_1} \cdot v^{i_2} \cdot \dots \cdot v^{i_K}$ and a public run will 
read $w^{i_1} \cdot w^{i_2} \cdot \dots \cdot w^{i_K}$.
\end{enumerate}

\begin{figure}[tb]
\begin{center}
\begin{tikzpicture}[pta, node distance=2cm, thin]

	\node[location, private, initial] (lt1) at (-1, 4.5) {$\loctest$};
	\node[location] (lt2) at (+1, 4.5) {$\loc^{a}$};
	\node[location,final] (lt4) at (+5, 4.5) {$\loc^{end}$};

	\path
    (lt1) edge node[align=center, above]{$\clabel{\silentaction}{\styleact{a}}$} node[below]{$\styleclock{\clock} \assign 0$} (lt2)
    (lt2) edge node[align=center]{$\clabel{\styleact{\natural}}{\styleact{b}}$} node[below]{$\styleclock{\clock} = 0$} (lt4);
\end{tikzpicture}
\end{center}
  \caption{Gadget forbidding selecting two actions simultaneously, with $a$ and~$b$ two actions in $\ControlSet$ with $b\neq a$ and $\natural$ an observation exclusive to this gadget.}
  \label{fig:gadget-limit}
\end{figure}

\fakeParagraph{First component}
Let us start by explaining the gadget of \cref{fig:gadget-limit}.
From an initial location $\loctest$, for every action $a$, if $a$ is allowed, then a
run can reach a location $\loc^a$ (there is thus one such location per controllable actions), while resetting a clock~$x$.
From $\loc^a$, if another action~$b$ is allowed
and if $x=0$ (in other words, if $b$ and~$a$ are allowed at the same time), the run can reach a final location while reading~$\natural$.
This thus produces a private run which trace is~$\natural$, and as $\natural$~is exclusive to this gadget, this run violates opacity.
Hence, any strategy that ensures opacity cannot allow two actions at the same time.
As by the invariant mentioned earlier a run can only reach a final location before time 1, and as no action can be done in the second part of the construction (more details on this below) at time~$0$, a sequential strategy can be defined exclusively by the sequence of control alphabet it selects for the interval~$(0;1)$. Moreover, due to the above,
if the strategy achieves opacity, only one action can be selected at a time. Thus
there is a correspondence between sequential strategies that could achieve opacity, and sequences of actions.

\begin{figure}[tb]
\begin{center}
\begin{tikzpicture}[pta, node distance=2cm, thin]

	\node[location,  initial] (lt1) at (-1, 4.5) {$\loc_0$};
	\node[location,private] (lt2) at (+1, 6) {$\locpriv$};
	\node[location,final] (lt4) at (+5, 4.5) {$\loc^{end}$};
	\node[location] (lp1) at (+1, 3.5) {$\loc_s$};
	\node[location] (lp2) at (+4, 3.5) {$\loc_i$};

	\path
    (lt1) edge node[align=center, below]{$\clabel{\silentaction}{\uncontrol}$} (lt2)
    (lt1) edge node[align=center, above]{$\clabel{\silentaction}{\uncontrol}$} (lp1)
    (lt2) edge[loop left] node[align=center, left]{$\clabel{\styleact{i}}
    {\styleact{i}}$} (lt2)
    (lp1) edge[bend left] node[align=center, above right]{$\clabel{\styleact{i}}
    {\styleact{i}}$} (lp2)
    (lp2) edge[bend left] node[align=center, above]{$\clabel{\silentaction}
    {\styleact{\actionend}}$} (lp1)
    (lt2) edge node[align=center]{$\clabel{\styleact{\flat}}{\uncontrol}$} (lt4)
    (lp1) edge[bend left] node[align=center]{$\clabel{\styleact{\flat}}{\uncontrol}$} (lt4)
    ;
\end{tikzpicture}
\end{center}
  \caption{Gadget limiting the order of the actions selected by the strategy, with $1\leq i\leq N$, and $\flat$ an observation exclusive to this gadget.}
  \label{fig:gadget-order}
\end{figure}

\fakeParagraph{Second component}
Let us explain the second gadget (\cref{fig:gadget-order}).
Thanks to the first gadget, a sequential strategy can be represented by a single word
$a_1\dots a_k\in \ControlSet^*$. 
In this gadget under this strategy, private runs will produce every word $w\sharp$
where $w$ is a subword of $a_1\dots a_k$ that does not contain~$\actionend$.
On the public side however, if a run reads a letter associated to an integer, it goes to 
the location $\loc_i$ from which it can exit only if an $\actionend$ is allowed. Hence, to copy
the private runs, public runs need every integer action to be followed by an~$\actionend$.
As $\flat$ cannot be produced by any other component of~$\TB$, this ensures the alternation claimed in the strategy.
While technically the strategy could start by allowing~$\actionend$, this will have no effect in
any of the components of $\TB$ and thus ignored without loss of generality.
In the following, we thus assume the sequential strategy is associated to a word $i_1 \actionend i_2 \actionend\dots i_k \actionend$.

\begin{figure}[bt]
   \centering
	\begin{tikzpicture}[pta, node distance=2cm, thin]
	
	\node[location, initial] (l0) at (-1,1) {};
	
	\node[location, private] (A1) at (1, -3) {};
	\node[location, private] (B1) at (1, -1) {};
	\node[location] (C1) at (1, 1) {};
	\node[location] (D1) at (1, 3) {};
		
	\foreach \k in {2,3,...,9}
	{\node[location] (A\k) at (\k, -3) {};
	\node[location] (B\k) at (\k, -1) {};
	\node[location] (C\k) at (\k, 1) {};
	\node[location] (D\k) at (\k, 3) {};}

	\location[final, name=F1, at={(5.5,-2)}]{$\styleclock{\clock} < 1$}{}
	\location[final, name=F2, at={(5.5, 2)}]{$\styleclock{\clock} < 1$}{}
	\location[final, name=F3, at={(5.5,-5)}]{$\styleclock{\clock} < 1$}{}
	\location[final, name=F4, at={(5.5, 5)}]{$\styleclock{\clock} < 1$}{}

	\foreach \a/\p in {A/left, B/above right, C/above, D/above left}
	{\path (l0) edge[] node[\p, align= center] {$\clabel{\silentaction}{\uncontrol}$\\$\styleclock{x}>0$} (\a1);}
	
	\foreach \a/\u/\m in {A/v/n,B/w/m,C/v/n,D/w/m}
	{
	\path (\a1) edge[] node {$\clabel{\u^{1}_{1}}{\uncontrol}$} (\a2);
	\path (\a2) edge[dotted] node {$\dots$} (\a3);
	\path (\a3) edge[] node {$\clabel{\u^{1}_{\m_1}}{\uncontrol}$} (\a4);
	\path (\a4) edge[] node {$\clabel{\silentaction}{\actionend}$} (\a5);
	\path (\a5) edge[] node {$\clabel{\silentaction}{i}$} (\a6);
	\path (\a6) edge[] node {$\clabel{\u^{i}_{1}}{\uncontrol}$} (\a7);
	\path (\a7) edge[dotted] node {$\dots$} (\a8);
	\path (\a8) edge[] node {$\clabel{\u^{i}_{\m_{i}}}{\uncontrol}$} (\a9);
	\path (\a9) edge[bend left=16] node {$\clabel{\silentaction}{\actionend}$} (\a5);
	}

	\path (B5) edge[] node[left] {$\clabel{\silentaction}{\uncontrol}$} (F1);
	\path (C5) edge[] node {$\clabel{\silentaction}{\uncontrol}$} (F2);

	\node[location] (A10) at (5.5, -4) {};
	\node[location] (D10) at (5.5, 4) {};
	\node[location] (A11) at (6.5, -4) {};
	\node[location] (D11) at (6.5, 4) {};
	\node[location] (A12) at (7.5, -4) {};
	\node[location] (D12) at (7.5, 4) {};
	\node[location] (A13) at (7.5, -5) {};
	\node[location] (D13) at (7.5, 5) {};
	
	\foreach \a/\u in {A/v,D/w}
	{
	\path (\a5) edge[] node[left] {$\clabel{\silentaction}{i}$} (\a10);
	\path (\a10) edge[] node {$\clabel{\silentaction}{\actionend}$} (\a11);
	\path (\a11) edge[] node[below] {$\clabel{\silentaction}{i}$} (\a12);
	\path (\a12) edge[] node[right] {$\u^{i}$} (\a13);

	}

	\path (A13) edge[bend left=60] node[below] {$\clabel{\silentaction}{\actionend}$} (A11);
	\path (D13) edge[bend right=30] node[above] {$\clabel{\silentaction}{\actionend}$} (D11);

	\path (A11) edge[] node {$\clabel{\silentaction}{\uncontrol}$} (F3);
	\path (D11) edge[] node[above] {$\clabel{\silentaction}{\uncontrol}$} (F4);

	\draw[color=black, -, thick, decorate,decoration={brace,raise=0.1cm}]
(1.3,3.7) -- (3.7,3.7) node[above=0.2cm,pos=0.5, align=center] {$w^1$};
	\draw[color=black, -, thick, decorate,decoration={brace,raise=0.1cm}]
(1.3,1.7) -- (3.7,1.7) node[above=0.2cm,pos=0.5, align=center] {$v^1$};
\draw[color=black, -, thick, decorate,decoration={brace,raise=0.1cm}]
(1.3,-2.3) -- (3.7,-2.3) node[above=0.2cm,pos=0.5, align=center] {$v^1$};
	\draw[color=black, -, thick, decorate,decoration={brace,raise=0.1cm}]
(1.3,-0.3) -- (3.7,-0.3) node[above=0.2cm,pos=0.5, align=center] {$w^1$};

	\draw[dotted, -, thick] (-2,0) -- (10, 0);
	\node  at (-1, 5) {Public runs};
	\node  at (-1, -5) {Private runs};
	
	\node  at (10, 3) {(D)};
	\node  at (10, 1) {(C)};
	\node  at (10, -1) {(B)};
	\node  at (10, -3) {(A)};
	\end{tikzpicture}
	\caption{Reducing from PCP to \Problem{}}
  \label{fig:Post}
\end{figure}

\fakeParagraph{Third component}
We now explain the third component of the construction (\cref{fig:Post}).
This part of the TA consists in four branches reached at a time greater than~$0$.

First consider branch~B. A run going through this branch will first read the word $w^1$
and reach a location with many exiting transitions, allowing to select which word to read.
From there, either it takes the transition associated to~$i_1$ and reads word~$w^{i_1}$ before coming back to the starting location by a transition labelled by an~$\actionend$, or it will stay in that location.
This repeats for every action~$i_j$, before the run decides to
reach the final location (with an unobservable and uncontrollable action). As a consequence, this branch produces private runs with the buffered observation
$w_1 w^{i_{j_1}}w^{i_{j_r}}$ where  $j_1,\dots, j_r$ is an increasing subsequence of $1,\dots, k$. In other words, it starts by~$w_1$ and then selects which of the words allowed
by the strategy it reads.
Action~$\actionend$ ensures that, if an action~$i$ is allowed,
no run can take twice the transition associated to it, as an~$\actionend$ is necessary to
come back to the location where the selection occurs.

Branch~C is similar, producing public runs with the buffered observation
$v_1 v^{i_{j_1}}v^{i_{j_r}}$ where  $j_1,\dots, j_r$ is an increasing subsequence of $1,\dots, k$.

In particular, in order for full opacity to be ensured by the strategy, we see that, ignoring the other branches, the two longest words produced by runs in those branches
($w_1 w^{i_1}w^{i_k}$ and $v_1 v^{i_1}v^{i_k}$) must be equal as we want for the reduction.
However, the other subwords need to be equal as well, which is not required in the PCP.
This is the reason why the other two branches A and D exist.
Branch A acts like C, except that it produces private runs, and the buffered observations
it produces cannot be $v_1 v^{i_1}v^{i_k}$: in order to reach the final location, a run needs to take a transition with control $i$ and the following transition of control $\actionend$ to go to the second line of the branch. In other words, it needs to ``waste'' 
one opportunity to read a word. It thus can produce every buffered observation except
the full one.
Branch~D does the same for $w$ / as branch~B.

As a consequence, this component is fully opaque iff 
$w_1 w^{i_1}w^{i_k}=v_1 v^{i_1}v^{i_k}$, and thus iff the given PCP instance has a solution.

As a final location can be reached in the third component whatever the strategy does, 
this reduction applies to both the
full \Problem{} and full \ProblemRC{} problems.
\end{proof}

\section{Ensuring opacity with $N$-Sequential Strategies ($N$-SS)}\label{sec:NSS}

While the existence of a sequential strategy making the TA opaque under buffered observations is undecidable in general, we now prove that it becomes decidable when limiting the sequential strategies to \emph{sequences of a predefined given length}.

\begin{restatable}{theorem}{buffopaque}\label{theorem:NSS}
The decision problems \ProblemNSS{} and \ProblemNSSRC{} are \TwoEXPTIME-complete.
\end{restatable}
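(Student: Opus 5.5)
By the inter-reducibility theorem of \cref{sec:weak-full}, it suffices to treat full opacity; both the plain and non-blocking variants are handled. The upper bound reduces the problem to a finite game of imperfect information. The key observation is that an $N$-sequential strategy chooses, in each time region, a word of length at most $N$ over $\Parts{\ControlSet}$, so the controller's move alphabet is finite, of size $2^{O(N|\ControlSet|)}$. The implementation $\substrat$ only fixes the switching times, and $\CRuns{\TA}{\strategy}$ quantifies existentially over implementations. For a fixed move, the set of region-level behaviours of $\TA$ during one time region is therefore captured by a product of the region automaton with a phase counter $k\in\interval{1}{K}$. This counter may nondeterministically increase at any instant strictly inside $(n;n+1)$ and must equal $K$ at the next integer. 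We also add a clock reset at each integer to produce the $\tick$/$\notick$ symbols, as in \cref{sec:buffaut}. This yields a finite automaton $\regaut{\TA}$ of size exponential in $|\TA|$. Its transitions over one time region are labelled by (move, buffered observation of that region), and it carries a private/public bit in its states.

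Next, I would do a knowledge (belief) construction. After each observed buffered segment, the controller knows only the partial trace, and opacity depends on whether a trace is produced by both some private and some public run. So the relevant information is the pair $(P,Q)$ of sets of region-automaton states reachable by private and by public runs consistent with the trace so far. There are $2^{2^{\mathrm{poly}}}$ such pairs, doubly exponential in $|\TA|$. The game on these beliefs proceeds as follows: the controller picks a move, and nature picks a nonempty observed segment updating $(P,Q)$. Nature wins if it reaches a belief where, at the end of a run ($\runend$), exactly one of the private and public final-reachable sets is nonempty, since that trace is then not opaque. Because each segment can be unboundedly long, I would first precompute, for each belief and move, the finitely many successor beliefs via the rational relations between states, which is effective. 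For \ProblemNSSRC{}, I add the requirement that some $\strategy$-compatible run reaches a final location. Since this is an existential condition over runs, it can be tracked as a reachability objective on the same arena. Alternatively, since strategies are trace-based, it suffices to ensure that at least one branch leads to $\runend$, which gives a safety-plus-reachability game. Both the safety game and the safety-plus-reachability game are solvable in time polynomial in the arena, giving \TwoEXPTIME{}. Finally, I would verify that the positional strategies on beliefs translate back to sequential strategies and conversely. This correctness argument, especially the claim that existential choice of switching times is faithfully abstracted by regions together with the phase counter, is where I expect the main technical care to be needed.

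For the lower bound, I would reduce from acceptance of an alternating exponential-space Turing machine (\EXPSPACE{} alternation gives \TwoEXPTIME{}), using only $N=1$. The strategy chooses existential transitions in each time region. The universal choices and the tape checks are delegated to nondeterministic branches of the TA. In these branches, private runs guess an error such as a mismatched cell between consecutive configurations, with the cell indices encoded in binary by clocks or counters of polynomial size. Public runs copy every trace that does not reveal a genuine error, as in the gadgets of \cref{sec:undecidable}. The hardest part will be encoding an exponentially long configuration while the attacker sees only buffered, integer-precision traces. I would emit one configuration over consecutive time units and let private runs nondeterministically check one cell position against the next configuration. This lets an opaque strategy exist iff the machine accepts.
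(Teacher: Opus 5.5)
Your upper bound follows the paper's route. Your phase counter over regions corresponds to the paper's pairs $(\belief,\localstrat)$, where $\localstrat$ is the remaining suffix of the local strategy. Your per-segment successor computation is the subset construction that the belief automaton performs letter by letter. The faithfulness claim you flag is what the paper obtains from the time-distortion argument of \cref{lemma:succ-back}. Two points need fixing.
\begin{itemize}
\item The counter may only advance across a strictly positive delay. At each instant exactly one set $\ActionSet_{n,k}$ is enabled, so two transitions taken at the same instant must use the same set. This is the condition $d_i>0$ in local compatibility.
\item For \ProblemNSSRC{} the objective combines universal safety with \emph{existential} reachability. That is not a standard two-player reachability or safety-plus-reachability objective, since the observation player could in general keep the play away from final beliefs. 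You need either the paper's construction, where the second player may raise a flag after each strategy choice (giving a B\"uchi game), or an equivalent two-step procedure. In the latter, first compute the safety-winning region, then search for a path to a final belief using only moves all of whose successors remain in that region.
\end{itemize}

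The genuine gap is in the lower bound. In this setting the controller can make any run vanish by disabling actions, and offline opacity ignores runs that never reach a final location. In your sketch, both the error-guessing private runs and the copying public runs advance by reading the symbols the strategy enables. So nothing prevents a strategy from enabling nothing at all, or from stopping the simulation on any universal branch it cannot win; no leaking trace is then produced. For \ProblemNSS{} the all-blocking strategy would be opaque for every machine. For \ProblemNSSRC{} the controller needs only one accepting branch and can silence the others, which is mere reachability in an exponential-space configuration graph rather than alternation.

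The missing idea, which the paper implements with its countdown gadget, is to have private runs that progress uncontrollably and whose traces can be matched by public runs only if the strategy keeps enabling the expected next symbol. The paper forces at least one increment per cycle of $2n$ time units, so that stalling itself leaks.

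You also need termination. Otherwise a strategy may faithfully simulate a non-accepting infinite branch forever without leaking. Either assume the machine halts on every branch, or add, as the paper does, a doubly exponential counter stored by the strategy whose overflow leaks.

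Finally, universal choices must be taken by uncontrollable but observable transitions (the paper's $q1$/$q2$), so that the strategy learns which branch to continue.
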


In the remaining of this section,
we construct our \emph{belief automaton}, an automaton that describes the set of possible regions the system can be in at each time under a given strategy.
We establish formally that this belief automaton correctly represents the behaviour of the controlled~TA.
The \ProblemNSS{} and \ProblemNSSRC{} problems can then be solved  through a game on the belief automaton.
We establish the \TwoEXPTIME-hardness of \ProblemNSS{} and \ProblemNSSRC{} by reduction from the halting problem for alternating Turing machines with exponential memory.
Detailed proofs of \cref{theorem:NSS} are in \cref{appendix:proof:NSS}.

We call \emph{belief} a set of regions\footnote{%
	We follow the vocabulary from, \eg{} \cite{BFHHH14}.
	This is also close to the concept of \emph{estimator} (\eg{} \cite{KKG24}).}.
In particular, we call \emph{natural belief} the set of regions in which the attacker \emph{believes} to be according to their knowledge, \ie{} the current buffered observation and the sequential strategy.
For a TA $\TA$ and a sequential strategy~$\strategy$, we denote by $\beliefcontrol{w}{\strategy}$ the set of regions in which the system can be after a buffered observation $w$ while following a strategy~$\substrat$ implementing $\strategy$ in~$\TA$, \ie{}
$r \in \beliefcontrol{w}{\strategy}$ iff there exists a run $\run$ in~$\TA$ such that $\run$ is \compatible,
 $\laststate(\run) \in r, r \in \regset{\TA}$ and $\tw(\run)=w$.
The set of all natural beliefs is denoted $\beliefcontrolset{\TA}{\strategy} = \{\beliefcontrol{w}{\strategy} \mid w\in B\Language(\TA,\strategy)\}$ where
$B\Language(\TA,\strategy)$ is the set of prefixes of controlled traces ending with a time region symbol $\tick$ or~$\notick$.

Among those natural beliefs, we will be particularly interested in the ones showing \emph{leaks} of  information about the system. Intuitively, a \emph{leaking} natural belief allows to discriminate private and public runs.
For a given TA $\TA$, we denote
$\secret{\TA} = \big\{ \class{(\loc, \clockval)} \mid \loc \in \LocSetPriv, \clockval \in \setRgeqzero^{\ClockCard} \big\}$ the set of regions reachable after visiting $\locpriv$ on a run in~$\TA$, and
$\notsecret{\TA} = \big\{ \class{(\loc, \clockval)} \mid \loc \in \LocSetPub, \clockval \in \setRgeqzero^{\ClockCard} \big\}$ the set of regions reachable on a run not visiting $\locpriv$ in~$\TA$.\label{def:secret-duplication}

\begin{definition}\label{definition:leaking}
	Given a TA~$\TA$, a natural belief~$\belief{}$ is said to be \emph{leaking for full opacity} when exactly one of the following two conditions is satisfied:
	\begin{oneenumerate}%
		\item $(\belief{} \cap \finalclass{\TA} \cap \secret{\TA} \neq \emptyset)$, or
		\item $(\belief{} \cap \finalclass{\TA} \cap \notsecret{\TA} \neq \emptyset)$.
	\end{oneenumerate}%
\end{definition}
This means that finishing in this belief leaks an information to the attacker: only one final state is possible (private or public, but not both).

By definition of leaking natural beliefs and full opacity, we immediately have the following result:
\begin{lemma}\label{theorem:opacity-leaking-belief}
	Let $\TA$ be a TA and $\strategy$ a sequential strategy.
	$\TA$ is fully opaque with~$\strategy$ iff there is no belief in 
	$\beliefcontrolset{\TA}{\strategy}$ that is leaking for full opacity.
\end{lemma}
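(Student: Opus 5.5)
We prove both directions by unfolding the definitions and relating final regions in a natural belief to controlled runs sharing a trace. The link is this: a controlled trace $t$ of a run ending in a final location ends with $\runend z$ for some $z \in \set{\tick, \notick}$. Since $\Trace{\run} = \untimed{\tw(\run)}$ once the $\tick$/$\notick$/$\runend$ symbols are produced by the TA, $t$ is itself a prefix of a controlled trace ending with a time region symbol, so $t \in B\Language(\TA,\strategy)$ and $\beliefcontrol{t}{\strategy} \in \beliefcontrolset{\TA}{\strategy}$.

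The key observation is the following characterisation. For such a $t$, $\beliefcontrol{t}{\strategy} \cap \finalclass{\TA} \cap \secret{\TA} \neq \emptyset$ iff some $\strategy$-compatible private run has trace $t$, i.e.\ iff $t \in \PrivateTr{\TA}{\strategy}$. Symmetrically, $\beliefcontrol{t}{\strategy} \cap \finalclass{\TA} \cap \notsecret{\TA} \neq \emptyset$ iff $t \in \PublicTr{\TA}{\strategy}$. For the forward direction, a witness region in the intersection comes from a compatible run $\run$ with $\tw(\run)=t$ whose last configuration is in a final region. Because $\PrivSet$ is assumed absorbing, and $\LocSetPub$ is reachable only by public runs, the last location being private (resp.\ public) forces $\run$ to be private (resp.\ public). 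The converse direction is immediate: the last configuration of a run is final and lies in $\secret{\TA}$ or $\notsecret{\TA}$ according to whether the run is private or public.

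For the direction ``$\Rightarrow$'', suppose some belief $\beliefcontrol{w}{\strategy}$ is leaking, say condition~1 holds and condition~2 fails. Then $\beliefcontrol{w}{\strategy}$ contains a final private region, realised by a run ending in a final location. By the convention that runs end at the first final location, $w$ is a complete controlled trace. By the characterisation, $w \in \PrivateTr{\TA}{\strategy} \setminus \PublicTr{\TA}{\strategy}$, contradicting full opacity. The case where condition~2 holds and condition~1 fails is symmetric.

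For the direction ``$\Leftarrow$'', take $t$ in the symmetric difference of $\PrivateTr{\TA}{\strategy}$ and $\PublicTr{\TA}{\strategy}$. By the characterisation, $\beliefcontrol{t}{\strategy}$ satisfies exactly one of the two conditions, so it is leaking.

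The main delicate point is the bookkeeping about which words index beliefs. We must check that $B\Language$ contains every full controlled trace, which holds because a trace ends with $\runend z$. We must also check that a belief meeting a final region arises only from a full trace: since $\runend$ is emitted on reaching a final location, a final region can only be reached after $\runend$ has been emitted. If this did not hold, an intermediate prefix could artificially witness a final region. With the $\runend$-tagging this does not happen, so the equivalence holds.
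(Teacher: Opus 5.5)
Your proof is correct and is the same definition-unfolding the paper relies on: the paper gives no separate argument and states the lemma as immediate from the definitions of leaking natural beliefs and of full opacity. Your characterisation of final private/public regions in $\beliefcontrol{t}{\strategy}$ through $\PrivateTr{\TA}{\strategy}$ and $\PublicTr{\TA}{\strategy}$ (using that $\PrivSet$ is absorbing) is exactly that unfolding, and you also make explicit a point the paper leaves implicit: complete controlled traces end with $\runend z$, so they index beliefs in $\beliefcontrolset{\TA}{\strategy}$, which is what the ``$\Leftarrow$'' direction needs.
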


We will now build an automaton, called \emph{belief automaton}, constructing the natural beliefs from the observations and the choices of the sequential strategy.
In the following, $\Parts{\ControlSet}^{\leq N}$ denotes the set of sequences of length at most $N$ of subsets of $\ControlSet$.
Let $S = \ActionSet_1 \dots \ActionSet_n$ be a finite sequence of alphabets.
A path~$\run$ is \emph{locally $S$-compatible} if there exists a decomposition $\run = \run_1 \cdot (d_1, e_1) \cdot \run_2 \cdot (d_2, e_2) \dots (d_{n-1}, e_{n-1}) \cdot \run_n$ such that
$\ctr{\tw(\run_1)} \in (\ActionSet_1 \cup \set{u})^*$; and for $i \in \interval{1}{n-1}, \ctr{\tw((d_i, e_i) \cdot \run_{i+1})} \in (\ActionSet_{i+1} \cup \set{u})^*$ and $d_i > 0$.
Let $S = \ActionSet_1 \dots \ActionSet_N$ and $S' = \ActionSet_k \dots \ActionSet_N$ be two sequences in $\Parts{\ControlSet}^{\leq N}$, which we call \emph{local strategies}, suffix one of the other for a certain index~$i$.
Then we denote by $S \sqcap S'$ the local strategy $\ActionSet_1 \dots \ActionSet_k$.
Any path leading from a belief with local strategy~$S$ to another with local strategy~$S'$ must be locally $S \sqcap S'$-compatible.

Let $r$ be a region, $B$ be a set of regions, $a$ an action and $S, S'$ local strategies such that $S'$ is a suffix of~$S$. Then we define the following belief of~$r$ or of~$B$, reading $a$ and changing the strategy from $S$ to~$S'$:
\begin{align*}
\Next_a^{S'} ( r,S ) &= \big\{ r' \mid \exists r \in B, \exists p \text{ a path in } \TA \text{ with } \firststate(p) \in r, \laststate(p) \in r',\\
&\tw(p) = a \textit{ and } p \text{ is locally } (S \sqcap S') \text{-compatible.} \big\} \\
\Next_a^{S'} ( B,S ) &= \bigcup\limits_{r \in B} \Next_a^{S'} ( r,S )
\end{align*}

We extend the notation to a set~$H$ of pairs composed of a belief and a local strategy.

\[
\Succ_a ( H ) = \{ (B', S') \mid B' = \bigcup\limits_{(B, S) \in H \textit{ s.t.\ } S' \textit{ suffix of } S}  \Next_a^{S'} ( B,S ) \text{ with } a \in \set{\tick, \notick} \implies S' = \emptyset \}.\]

\begin{definition}[Belief Automaton]
	For a given TA~\mbox{$\TA = \TAcontrolextend$},
	the belief automaton~$\belaut{\TA}{}$ is given by the tuple
	$(\belset{\TA}, \belalpha, \belinit, \beltrans)$ where:
	\begin{itemize}%
		\item $\belset{\TA} = \Parts{\set{(\belief,{\localstrat}) \mid \belief \in \Parts{\regset{\TA}} \land \localstrat \in \Parts{\ControlSet}^{\leq N}}}\times \set{\controlflag, \nocontrolflag}$, hence every state is associated to a set of pairs composed of one belief and a local strategy of length at most $N$, as well as a tag in $\set{\controlflag, \nocontrolflag}$;
		\item $\belalpha = \Parts{\ControlSet}^{\leq N} \bigcup  \ObsSet$;
		\arXivVersion{\item }$\belinit = (\set{(\set{\regioni{0}}, \emptyset)} , \controlflag)$;
		\item $\beltrans = \beltrans_{\mathit{strat}} \cup \beltrans_o \subset \belset{\TA} \times \belalpha \times \belset{\TA}$ with
		\begin{enumerate}
			\item $\beltrans_{\mathit{strat}} = 
			\set{((\set{(\belief,{\emptyset})},\controlflag), \localstrat, (\set{(\belief,{\localstrat})},\nocontrolflag)) \mid \belief \in \Parts{\regset{\TA}}, \localstrat \in \Parts{\ControlSet}^{\leq N}}$,
			\item $\beltrans_o = \{ ((H,\nocontrolflag), a, (\Succ_{a}(H),z)) \mid H \in \belset{\TA}, a $ such that $\Succ_a(H) \neq \emptyset, z=\controlflag \text{ if } a\notin \set{\tick, \notick}, z=\nocontrolflag \text{ otherwise}\}$
		
		\end{enumerate}
	\end{itemize}%
\end{definition}

Intuitively, the belief automaton alternates between states that have a single belief 
and the tag $\controlflag$ 
where a local strategy
is selected, and states with the tag $\nocontrolflag$ associated to a set $H$ with many pairs of beliefs and sequences where the observable actions are selected.
A pair $(B,S)$ contained in a state of the belief automaton $H$ means that with the current observations, a run may be in the regions of $B$ with the remaining local strategy $S$ available for the runs. This represents the fact that throughout a time interval, we do 
not know the current implementation of the strategy, and thus when it moves to the next control alphabet.
As the local strategy has at most $N$ elements, a state $H$ can keep up to $N$ beliefs simultaneously depending on how far into the local strategy the runs gathered in this belief are.

\begin{definition}[Controlled Belief Automaton]
	For a given TA~\mbox{$\TA = \TAcontrolextend$} and a sequential strategy $\strategy$,
	the controlled belief automaton~$\belaut{\TA}{\strategy}$ is given by the tuple
	$(\belset{\TA}^{\strategy}, \belalpha, \belinit^{\strategy}, \beltrans^{\strategy})$ where:
	\begin{itemize}%
		\item $\belset{\TA}^{\strategy} = \belset{\TA} \times \ObsSet^* $;
		\arXivVersion{\item }$\belalpha = \Parts{\ControlSet}^{\leq N} \bigcup \ObsSet$;
		\arXivVersion{\item }$\belinit^{\strategy} = (\set{\regioni{0}}, \emptyset, \silentaction)$;
		\item $\beltrans^{\strategy} = \beltrans_{\mathit{strat}}^{\strategy} \cup \beltrans_o^{\strategy} \subset \belset{\TA}^{\strategy} \times \belalpha \times \belset{\TA}^{\strategy}$ with
		\begin{enumerate}
			\item $\beltrans_{\mathit{strat}}^{\strategy} = 
			\{((q, w), S , (q', w) \mid S = \strategy(w) \wedge (q,S,q')\in \beltrans_{\mathit{strat}}\}$;
			\item $\beltrans_o^{\strategy} = \{ (q, w), a, (q', w\cdot a) \mid (q, a, q') \in \beltrans_o \}$

		\end{enumerate}
	\end{itemize}%
\end{definition}

We now aim to prove that the natural beliefs correspond to the beliefs of the form $( \{(B, \emptyset)\}, \controlflag, w )$, called \emph{encountered beliefs}, by showing that these beliefs are reachable if and only if for all region $r$ in $B$, there exists a run $\strategy$-compatible which trace is $w$, starting in the region $r_0$ and ending in $r$.

\begin{restatable}{lemma}{naturaloutcome}\label{lem:naturaloutcome}
The natural beliefs are exactly the encountered beliefs in the controlled  belief automaton.
\end{restatable}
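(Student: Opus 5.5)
We prove both inclusions at once by induction on the length of the buffered observation $w$ (a prefix of a controlled trace ending with $\tick$ or $\notick$). The invariant we carry is stronger than the statement, because it also covers the intermediate $\nocontrolflag$-states. After reading a prefix $w'$ in the controlled belief automaton we reach a state $(H, w')$. We claim that $(B,S') \in H$ holds iff the following two conditions hold.
\begin{itemize}
\item $B$ is exactly the set of regions $r$ such that some $\strategy$-compatible path $\run$ from $\regioni{0}$ has $\tw(\run)$ buffered to $w'$, ends in $r$, and admits an implementation $\substrat$ of $\strategy$ whose restriction to the current time region is still at a point from which the remaining local strategy is $S'$.
\item $S'$ is a suffix of $\strategy(\PTrace{\run}{I})$, where $I$ is the current time region.
\end{itemize}
At states tagged $\controlflag$ with $H = \set{(B,\emptyset)}$, this invariant reduces exactly to the definition of the natural belief $\beliefcontrol{w}{\strategy}$. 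The lemma then follows.

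For the base case we take $\belinit^{\strategy}$ and the single region $\regioni{0}$ of $(\locinit,\ClocksZero)$. For the inductive step we distinguish three kinds of letters.
\begin{itemize}
\item \emph{A strategy letter $\localstrat = \strategy(w)$.} The transition in $\beltrans^{\strategy}_{\mathit{strat}}$ only replaces the empty local strategy by $\localstrat$. This matches the fact that at the start of a time region the implementation has not yet switched alphabets.
\item \emph{An observable action $a$.} We unfold $\Succ_a$ and $\Next_a^{S'}$. A $\strategy$-compatible extension of a path reading exactly $a$ (with silent steps) while the implementation advances from the suffix $S$ to the suffix $S'$ is, by definition of the enabled sets $E_\substrat$, a locally $(S\sqcap S')$-compatible path. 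Conversely, every locally $(S \sqcap S')$-compatible path can be realised by choosing switching times $t_{n,k}$ strictly between the consecutive transitions of the decomposition. This is possible because the decomposition requires $d_i>0$.
\item \emph{A region symbol $\tick$ or $\notick$.} $\Succ$ forces $S'=\emptyset$, which matches the end of the region with $t_{n,K}=n+1$.
\end{itemize}
In every case we rely on the standard region-equivalence property (\cref{appendix:regions}): whether a path with a given untimed behaviour and time-region constraints exists from a configuration depends only on its region. This lets us lift statements about concrete configurations to regions.

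The main obstacle is the soundness direction in the presence of several pairs $(B,S)$ in one state $H$. Runs in different pairs may have been assumed to use different implementations $\substrat$, whereas a single run must follow one consistent implementation along the whole time region. We handle this by noting that the implementation may be chosen per run: the definition of $\CRuns{\TA}{\strategy}$ quantifies $\exists\substrat$ separately for each run. The invariant therefore needs to track only, for each run, how far into the local strategy it has progressed. The existential choice of switching times is made concretely and inductively along the run, and it is constrained only by monotonicity and by the requirement that the switches fall inside the open region. These constraints can always be met because time is dense and the number of switches is finite ($\leq N$).

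Completeness goes the other way round. Given a compatible run and its implementation, we cut the run at each observable letter and record the index of the current alphabet. We then check that each cut is witnessed by the corresponding $\Next$ step, so the pairs along the run are exactly those produced by the automaton.
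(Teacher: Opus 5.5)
Your argument is correct at the paper's level of detail, but the soundness half follows a genuinely different route.

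\textbf{Completeness.} This half is the same as the paper's: you cut a compatible run at each observable letter, record where its implementation stands in the local strategy, and read off the $\Next$ steps.

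\textbf{Soundness in the paper.} The paper keeps no invariant on the intermediate $\nocontrolflag$-states. It chains \emph{backwards} along a path of the controlled belief automaton. It starts from an arbitrarily chosen configuration of a region of the last belief and repeatedly applies \cref{lemma:succ-back}. That lemma says that for every configuration of the target region there is a locally compatible path with the right trace from \emph{some} configuration of the source region; it is proved by an increasing distortion of the timestamps. The chain stops when it reaches the singleton initial region, and the pieces are then glued with proof ingredient (b).

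\textbf{Soundness in your proof.} You run a single forward induction with an iff-characterisation of every pair $(B,S)$. The witness of $\Next$ starts at some configuration of $r$, not necessarily at the endpoint of your run. You resolve this mismatch with the forward time-abstract bisimulation property of regions, and you re-choose the not-yet-used switching times of that run's own implementation.

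\textbf{Trade-off.} Your route needs only the textbook forward property of regions, and it gives an explicit meaning to the multi-pair states, which the paper leaves implicit. The paper's backward route needs no invariant, because the chain ends automatically at the unique initial configuration. The price is a less standard backward lifting lemma.

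\textbf{One point to make explicit.} The lifted path must keep $d_i>0$ at every delay where you place a switch. The bare bisimulation statement does not guarantee this: it may match a positive delay that stays inside the current region by a zero delay. A small positive delay can always be chosen instead, so this is easy to fix. The paper's increasing distortion preserves positivity by construction.
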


The proof of this lemma is in \cref{appendix:proof:lem:naturaloutcome}.
Thanks to this result, the \ProblemNSS{} problem can be translated into a safety
two-player game on the belief automaton where the first player selects the strategy and aims to avoid the leaking encountered belief, while the second player selects the
buffered observation. The translation of the \ProblemNSSRC{} problem is slightly more difficult, as we need to combine the safety condition and the reachability condition into
a single Büchi condition for the game.
In both cases, we translate the opacity problems into a doubly exponential game, that can be solved in polynomial time in the size of the game---hence the \TwoEXPTIME{} algorithm.
The details of this game as well as the hardness proof are also in \cref{appendix:proof:NSS}.

\section{Ensuring opacity with Observable Sequential Strategies (OSS)}\label{sec:OSS}

In a control setting, it is natural to assume that controllable events are observable, because the controller can only choose or enable events it can detect: if it could not observe them, it would not know when or whether its control decisions are being applied.
This is the motivation behind the OSS setting.
An unfortunate consequence of this assumption however is that allowing more controllable actions does not help make the system opaque, and thus there exists a strategy ensuring weak opacity iff the strategy $\strategyBlocksAll$ that constantly blocks every controllable action ensures weak opacity.
The proofs of this section are in \cref{sec:annexOSS}.

\begin{restatable}{proposition}{OSSprop}
\label{cor:eptystrat}
Given a TA $\TA$ in the OSS setting, the existence of a sequential strategy making $\TA$ weakly opaque implies that $\strategyBlocksAll$ makes $\TA$ weakly opaque.
\end{restatable}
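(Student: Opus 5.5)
We argue directly from the definitions. Fix a sequential strategy $\strategy$ making $\TA$ weakly opaque, i.e.\ $\PrivateTr{\TA}{\strategy} \subseteq \PublicTr{\TA}{\strategy}$. We show $\PrivateTr{\TA}{\strategyBlocksAll} \subseteq \PublicTr{\TA}{\strategyBlocksAll}$. The key observation is that the $\strategyBlocksAll$-compatible runs are exactly the runs of $\TA$ that take only uncontrollable edges (label $\uncontrol$). Indeed, $E_{\substrat}(t,w) = \set{\uncontrol}$ for any implementation of $\strategyBlocksAll$. In the OSS setting, such runs are characterised purely by their trace: a run uses only uncontrollable edges iff its trace contains no symbol $\action_\caction$ for $\caction \in \ControlSet$. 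This is because each edge with observation $\action_\caction$ carries control label $\caction$, and conversely.

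The proof then has three steps. First, every $\strategyBlocksAll$-compatible run is also $\strategy$-compatible. A run using only $\uncontrol$-labelled edges satisfies the compatibility condition for any implementation $\substrat$ of $\strategy$, since $\uncontrol \in E_{\substrat}(t,w)$ always. This requires only that some implementation exists for each partial trace, which holds because we may choose arbitrary increasing timestamps in the region. Hence, if $\run$ is a private $\strategyBlocksAll$-compatible run with trace $w$, then $w \in \PrivateTr{\TA}{\strategy}$. By weak opacity, there is a public $\strategy$-compatible run $\run'$ with $\Trace{\run'} = w$. Second, $w$ contains no letter $\action_\caction$, so by the OSS assumption $\run'$ takes no edge with a label in $\ControlSet$. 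Third, $\run'$ is therefore $\strategyBlocksAll$-compatible, and it is public, so $w \in \PublicTr{\TA}{\strategyBlocksAll}$.

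The main subtlety lies in the OSS assumption and the use of traces. We must check that "a transition has observation $\action_\caction$ iff it has control label $\caction$" implies that a run with controllable edges produces some $\action_\caction$ in its buffered trace. This holds because buffered observations keep every observable action, dropping only timestamps; in particular, actions at the same region are not merged. We must also check that the $\tick$/$\notick$ instrumentation edges, added via the construction of \cref{sec:buffaut}, are uncontrollable and so do not interfere. Finally, one should note that compatibility of $\run'$ with $\strategyBlocksAll$ is not affected by which partial traces it produces, since the enabled set is always $\set{\uncontrol}$ independently of $w$. With these checks in place, the inclusion follows and the proposition is proved.
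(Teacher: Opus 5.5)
Your proof is correct and is essentially the paper's argument, stated directly instead of by contraposition. Both rely on two facts: every $\strategyBlocksAll$-compatible run is $\strategy$-compatible, and in the OSS setting a run whose trace contains no $\action_\caction$ takes only uncontrollable edges and is therefore $\strategyBlocksAll$-compatible (the paper phrases this as ``every $\strategy$-compatible run that is not $\strategyBlocksAll$-compatible has a trace different from $w$'').
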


The \EXPSPACE-completeness of the opacity problem without control hence immediately gives:

\begin{restatable}{theorem}{OSS}\label{theorem-OSS}
The \ProblemOSS{} problem is \EXPSPACE-complete.
\end{restatable}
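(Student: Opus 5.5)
By Proposition~\ref{cor:eptystrat}, the \ProblemOSS{} problem (in its weak variant, which suffices by the inter-reducibility theorem of \cref{sec:weak-full}) reduces to checking weak opacity under the fixed strategy $\strategyBlocksAll$. The converse of the proposition is trivial, since $\strategyBlocksAll$ is itself a sequential strategy. So the upper bound comes from an \EXPSPACE{} procedure for the uncontrolled opacity problem with buffered observations, and the lower bound comes from a reduction in the other direction.

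\textbf{Upper bound.} First we turn $\TA$ under $\strategyBlocksAll$ into an ordinary TA $\TA_\emptyset$ by deleting every edge whose control label differs from $\uncontrol$. Its runs are exactly the $\strategyBlocksAll$-compatible runs of $\TA$. Next we apply the construction of \cref{sec:buffaut}, adding a clock reset at every integer time that emits $\tick$ and $\notick$, so that traces become untimed words. The sets $\PrivateTr{\TA}{\strategyBlocksAll}$ and $\PublicTr{\TA}{\strategyBlocksAll}$ are then the untimed languages of the private and public parts of this TA. Using the Boolean-flag duplication, these are recognised by finite automata of exponential size: the region automaton projected onto observable letters, with $\silentaction$-moves removed. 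Weak opacity is the inclusion of one such regular language in the other. Checking it on the fly takes space polynomial in the region automaton, hence exponential space overall: we explore the product of the private region automaton with the subset construction of the public one, and each subset has exponential size. This matches the decidability result of~\cite{DQY25}.

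\textbf{Lower bound.} We reduce opacity without control (\EXPSPACE-hard, as invoked in the sentence preceding the theorem) to \ProblemOSS{}. Given a TA, we label every edge $\uncontrol$ and use $\ControlSet=\emptyset$. The OSS condition then holds vacuously, every strategy coincides with $\strategyBlocksAll$, and the instance is opaque iff some strategy makes it opaque. If the known hardness result is stated for full opacity, we transfer it to weak opacity through the inter-reducibility theorem, checking that the reductions given there preserve the OSS property and use only polynomial blow-up.

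The main obstacle I expect is the upper bound. One difficulty is making sure that the projection onto buffered observations, including the $\runend$ symbol and the correct handling of singleton versus open time regions, is captured exactly by the untimed region-based construction. The other is keeping the complexity at \EXPSPACE{} rather than 2EXPTIME, which requires the on-the-fly inclusion check instead of explicit determinisation.
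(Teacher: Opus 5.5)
Your upper bound is sound and follows the paper's argument. You use \cref{cor:eptystrat} to reduce to $\strategyBlocksAll$, delete the controllable edges, and decide uncontrolled weak opacity under buffered observations. The only difference is in that last step: the paper cites \cite{DQY25} for the \EXPSPACE{} procedure, whereas you sketch the standard argument (region automaton plus on-the-fly NFA inclusion). That is a valid self-contained substitute.

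The gap is in the lower bound. Your starting point is that ``opacity without control is \EXPSPACE-hard, as invoked in the sentence preceding the theorem''. That sentence is only an announcement. Hardness under \emph{buffered} observations is exactly what the lower bound must establish; your embedding with $\ControlSet=\emptyset$ is only the trivial final step. You also cannot start from Cassez-style timed-word opacity in dense time. There, buffered observations are strictly coarser than timed words, so the two problems differ, and the timed-word version is even undecidable.

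The missing idea is to start from \emph{discrete-time} TAs. \cite{ADL26} shows that weak opacity is \EXPSPACE-hard for discrete-time TAs without control. When every action occurs at an integer date, the open time regions are empty. The buffered observation then records, for each integer, exactly which actions occur there and in which order. So it carries the same information as the timed word, and the two notions of opacity coincide. This gives \EXPSPACE-hardness of uncontrolled weak opacity under buffered observations. Your reduction then transfers it to \ProblemOSS{}: all edges are labelled $\uncontrol$, so the OSS condition holds vacuously and every strategy behaves like $\strategyBlocksAll$. The full variant follows by the inter-reducibility theorem.
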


Requiring the sequential strategy to be \reachcondition{} avoids---at least partially---the previous situation, as blocking every controllable action may not allow any run to reach a final location.
Nonetheless, we can similarly limit the shape of the \reachcondition{} sequential strategies that achieve opacity: such a strategy only needs to allow \emph{a single run}
leading to a final location, and any deviation from that run only has to satisfy opacity and thus can be
supervised by the strategy~$\strategyBlocksAll$.
Moreover, we can produce a doubly exponential bound for the length of the path leading to the final location.
This bound can then be used to limit the sequences of control alphabet the sequential strategy needs to use, hence reducing the \ProblemOSSRC{} problem to the \ProblemNSS{} problem.

\begin{restatable}{theorem}{OSSRC}\label{theorem-OSS-NB}
The \ProblemOSSRC{} problem is in \ThreeEXPTIME and \TwoEXPTIME-hard.
\end{restatable}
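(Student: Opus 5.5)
We prove the two bounds separately. For membership in \ThreeEXPTIME{}, we reduce \ProblemOSSRC{} to an instance of \ProblemNSSRC{} with a doubly exponential $N$. Then we apply the algorithm behind \cref{theorem:NSS}, whose game has size doubly exponential in the TA and in $N$; we must check that this is only triply exponential overall. For hardness, we reduce from \ProblemNSSRC{} (or directly from the alternating exponential-space Turing machine encoding used for \cref{theorem:NSS}) and adapt the construction to the OSS setting.

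\emph{Shape of winning strategies.} Fix a \reachcondition{} sequential strategy $\strategy$ making $\TA$ weakly opaque; by the inter-reducibility theorem of \cref{sec:weak-full}, weak opacity suffices. Pick a single $\strategy$-compatible run $\run^*$ reaching a final location. In the OSS setting every controllable transition is observed with a distinguishable label. The controller can therefore tell from the partial trace whether the current run still agrees with $\run^*$ in its controllable actions. We define $\strategy'$ as follows. On partial traces that are prefixes of $\Trace{\run^*}$, it allows only the control labels used by $\run^*$ in that region, in the order $\run^*$ uses them; each set in the sequence contains one label per controllable transition of $\run^*$ in the region. On every other partial trace it behaves as $\strategyBlocksAll$. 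We then show two things, following the argument of \cref{cor:eptystrat}. First, every $\strategy'$-compatible private run has a trace that is also produced by a $\strategy$-compatible private run, since $\strategy'$ enables a subset of what $\strategy$ enables along histories it keeps. Second, a public run witnessing opacity under $\strategy$ can be replaced by a $\strategy'$-compatible one; this is where observability is used, because the witnessing public run has the same trace and hence the same controllable labels in the same order. We also need to check that such a $\strategy'$ is still \reachcondition{}, which holds since $\run^*$ is $\strategy'$-compatible.

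\emph{Bounding $\run^*$.} Next we bound the number of controllable transitions of $\run^*$ per time unit, and its length. We consider the product of the belief automaton of \cref{sec:NSS} (beliefs over regions, doubly exponential) with the region automaton tracking $\run^*$ itself. If $\run^*$ revisited the same pair (current region of $\run^*$, encountered belief) inside one time region, we could shortcut the loop. The shortcut keeps reachability, and opacity is preserved because the belief at each remaining observation point is unchanged. Hence we may assume $\run^*$ has at most doubly exponentially many controllable actions per unit interval, giving $N = 2^{2^{\mathrm{poly}}}$. This pumping step is the main obstacle. Within a single open region the attacker's belief depends on the whole buffered word, so the shortcut must be chosen between positions where the relevant state of the $\Succ$-construction coincides, not merely the belief at $\tick$/$\notick$ boundaries. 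We would try first to use the tagged states $(H,\nocontrolflag)$ of the belief automaton as the pumping invariant.

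\emph{Completing both bounds.} With $N$ fixed, the game of \cref{theorem:NSS} has states that are sets of pairs (belief, local strategy of length $\le N$). Its size is therefore exponential in $N$ times the number of beliefs, i.e.\ triply exponential, and solving a Büchi game in polynomial time yields \ThreeEXPTIME{}. For \TwoEXPTIME-hardness, we take the hardness construction of \cref{theorem:NSS} and make each controllable action observable with a fresh label. The construction must be arranged so that opacity there already only relies on a single controlled run plus blocking; then the fresh observations reveal nothing the attacker could exploit beyond what the original encoding intended. We would verify this by checking that in that encoding the public and private witnesses take identical controllable sequences.
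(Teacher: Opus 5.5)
\paragraph{Assessment.} Your overall architecture matches the paper's. You restrict to strategies that follow one final run and block everything else (\cref{th:simpstrat1}). You bound that run's events per interval by contracting repeated beliefs (\cref{th:simpstrat2}). You then invoke \cref{theorem:NSS} with a doubly exponential $N$, and take hardness from the $N$-SS construction.

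\paragraph{The gap: your definition of $\sigma'$.} You replace $\sigma$'s alphabets by $\rho^*$'s own labels, one singleton set per controllable transition. A buffered observation shows the order of controllable labels inside a region, but not which of them fire at the same instant. Compatibility with a sequence of alphabets depends on exactly this: distinct labels fired at the same instant must lie in a common alphabet. This causes two problems.
\begin{itemize}
\item Non-blockingness is not established. $\rho^*$ itself is not $\sigma'$-compatible as soon as it fires two distinct controllable transitions at the same instant. Moreover, at an integral time the strategy must return a single set anyway.
\item More seriously, the transfer of public witnesses fails. Suppose $\sigma$ offers $\{\kappa_1,\kappa_2\}$ on $(0;1)$ and $\rho^*$ fires $\kappa_1$ and $\kappa_2$ at two distinct instants, so $\sigma'$ is $\{\kappa_1\}\{\kappa_2\}$. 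A private run with that timing pattern is $\sigma'$-compatible. However, every public run with the same trace may be forced to fire both at one instant, for instance by resetting a clock on $\kappa_1$ and guarding $\kappa_2$ by that clock being $0$. Such public runs are $\sigma$-compatible but not $\sigma'$-compatible, so weak opacity is lost.
\end{itemize}
So ``same trace, hence same controllable labels in the same order'' does not yield compatibility.

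\paragraph{How the paper avoids this.} The $\rho$-minimal strategy copies $\sigma$ verbatim on each pair $(I_i,\PTrace{\rho}{I_i})$ and returns $\emptyset$ elsewhere. Along a non-deviating prefix, the compatible runs are then literally those of $\sigma$. After the first deviation, the private trace contains no controllable label. By the OSS assumption, neither does any $\sigma$-compatible public witness, so that witness is compatible with $\emptyset$, exactly as in \cref{cor:eptystrat}.

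\paragraph{The bound on events per interval.} The paper obtains the bound on the run itself (\cref{th:simpstrat2}). It removes a factor $w_2$ lying inside one interval whenever the natural belief repeats, i.e.\ $\beliefcontrol{w_1w_2}{\strategy}=\beliefcontrol{w_1}{\strategy}$. These beliefs are plain sets of regions, which gives $N=2^{(2M+2)^h h!\,n}$ directly. This avoids the states $(H,\nocontrolflag)$ you propose as pumping invariant, whose local-strategy components would presuppose the very bound you are trying to establish.

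Once $\sigma$ is copied, the length of its sequences is no longer tied to the number of events of $\rho^*$. The final passage to an $N$-sequential strategy, which the paper reads off directly from the bound of \cref{th:simpstrat2}, therefore needs its own justification. Your singleton refinement is not a valid way to supply it.

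\paragraph{Hardness.} The paper does not relabel anything. It observes that the $N$-SS construction already uses coinciding control and observation labels, and applies it as is. Your caveat about what must be checked there is reasonable, but it remains unverified in your proposal.
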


\section{Conclusion}

\begin{table}[tb]
	\centering
	\caption{Summary: impact of the attacker observation power on the decidability of the existence of a sequential strategy ensuring opacity}
	\label{table-summary-power}
	\setlength{\tabcolsep}{2pt} %
	\begin{tabular}{l c }
		\hline
		\rowHeader{}\cellHeader{Attacker observation power} & \cellHeader{Decidability} \\
		\hline
		\cellHeader{}All observations & \cellUndecidableExisting{}undecidable \cite{Cassez09}\\
		\hline
		\cellHeader{}Execution time only & \cellDecidableExisting{}In \EXPSPACE{} \cite{ADLL26}\\
		\hline
		\cellHeader{}Buffered observations & \cellUndecidable{}undecidable (\cref{theorem:general-undecidability})\\
		\hline
	\end{tabular}

\end{table}

\begin{table}[tb]
	\centering
	\caption{Summary: decidability of the existence of a strategy ensuring opacity under buffered observations}
	\label{table-summary-strategies}
	\setlength{\tabcolsep}{2pt} %
	\scalebox{.9}{
	\begin{tabular}{l c | c }
		\hline
		\rowHeader{}\cellHeader{Type of strategy} & \cellHeader{General strategy}  & \cellHeader{Non-blocking strategy} \\
		\hline
		\cellHeader{}Sequential & \multicolumn{2}{c}{\cellUndecidable{}undecidable (\cref{theorem:general-undecidability})}\\
		\hline
		\cellHeader{}$N$-sequential & \multicolumn{2}{c}{\cellDecidable{}\TwoEXPTIME{}-complete (\cref{theorem:NSS})}\\
		\hline
		\cellHeader{}Observable sequential & \cellDecidable{}\EXPSPACE{}-complete (\cref{theorem-OSS}) & \cellDecidable{}\ThreeEXPTIME /\TwoEXPTIME-hard (\cref{theorem-OSS-NB})\\
		\hline
	\end{tabular}
	}

\end{table}

We studied the problem of controlling timed automata to ensure opacity under \emph{buffered observations}, a realistic attacker model in which the attacker observes all actions within each time unit, in the right order but without their precise timestamps.
We first proved that full and weak opacity are inter-reducible, and that they are both undecidable in this controlled setting.
We then identified two independent and practically motivated restrictions of the controller, both restoring decidability alone:
when the strategy changes at most $N$~times per time unit ($N$-sequential strategies),
and when all controllable actions are observable and distinguishable by the attacker (observable sequential strategies), the problems become decidable, and in most cases \TwoEXPTIME-complete (with the exception of the \ProblemOSS{} problem which is \EXPSPACE-complete).
We summarise our results in \cref{table-summary-power,table-summary-strategies}.

Even though we identified realistic decidable classes of TAs, the theoretical complexity of solving opacity for these models is high and, due to our hardness results, it cannot be substantially improved.
In order to be able to handle those problems in practice, one needs either to find a relevant subclass of TAs with a lower complexity, or to design heuristics that---despite high theoretical complexities---remain practically efficient.
The former could consist for instance in \emph{bounding the observer's memory} across time units.
The latter would require a completely different approach from ours, as our techniques strongly rely on the region automaton, which is known to be practically expensive.
Moreover, the \emph{zone automaton} (a usual structure to circumvent the region automaton) does not pair well with the determinisation step we need to construct the belief automaton.
One could for instance adapt the notion of control over buffered observation in an \emph{untimed} setting, hence removing entirely the need for the region automaton.
In another direction, we will investigate \emph{parametric variants}, where $N$ or some elements of the TA are not part of the input, hence allowing to represent systems that are not fully defined.

	\newcommand{\CCIS}{Communications in Computer and Information Science}
	\newcommand{\CSUR}{{ACM} Computing Surveys}
	\newcommand{\ENTCS}{Electronic Notes in Theoretical Computer Science}
	\newcommand{\FAC}{Formal Aspects of Computing}
	\newcommand{\FundInf}{Fundamenta Informaticae}
	\newcommand{\FMSD}{Formal Methods in System Design}
	\newcommand{\IJFCS}{International Journal of Foundations of Computer Science}
	\newcommand{\IJSSE}{International Journal of Secure Software Engineering}
	\newcommand{\IJC}{International Journal of Control}
	\newcommand{\IseCure}{International Journal of Information Security}
	\newcommand{\IPL}{Information Processing Letters}
	\newcommand{\IC}{Information and Computation}
	\newcommand{\JAIR}{Journal of Artificial Intelligence Research}
	\newcommand{\JLAP}{Journal of Logic and Algebraic Programming}
	\newcommand{\JLAMP}{Journal of Logical and Algebraic Methods in Programming} %
	\newcommand{\JISA}{Journal of Information Security and Applications}
	\newcommand{\JLC}{Journal of Logic and Computation}
	\newcommand{\JALC}{Journal of Automata, Languages and Combinatorics}
	\newcommand{\LMCS}{Logical Methods in Computer Science}
	\newcommand{\LNCS}{Lecture Notes in Computer Science}
	\newcommand{\RESS}{Reliability Engineering \& System Safety}
	\newcommand{\RTS}{Real-Time Systems}
	\newcommand{\SCP}{Science of Computer Programming}
	\newcommand{\SOSYM}{Software and Systems Modeling ({SoSyM})}
	\newcommand{\STTT}{International Journal on Software Tools for Technology Transfer}
	\newcommand{\TCS}{Theoretical Computer Science}
	\newcommand{\TOPLAS}{{ACM} Transactions on Programming Languages and Systems ({ToPLAS})}
	\newcommand{\ToPNoC}{Transactions on {P}etri Nets and Other Models of Concurrency}
	\newcommand{\TOSEM}{{ACM} Transactions on Software Engineering and Methodology ({ToSEM})}
	\newcommand{\TSE}{{IEEE} Transactions on Software Engineering}
	\newcommand{\TCAD}{{IEEE} Transactions on Computer-Aided Design of Integrated Circuits and Systems}

\ifdefined\VersionAuthor
	\renewcommand*{\bibfont}{\small}
	\printbibliography[title={References}]
\else
	\newpage
	\bibliographystyle{plainurl}%
	\bibliography{PTA}

\begin{thebibliography}{10}

\bibitem{AD94}
Rajeev Alur and David~L. Dill.
\newblock A theory of timed automata.
\newblock {\em \TCS{}}, 126(2):183--235, April 1994.
\newblock \href {https://doi.org/10.1016/0304-3975(94)90010-8}
  {\path{doi:10.1016/0304-3975(94)90010-8}}.

\bibitem{AFH99}
Rajeev Alur, Limor Fix, and Thomas~A. Henzinger.
\newblock Event-clock automata: {A} determinizable class of timed automata.
\newblock {\em \TCS{}}, 211(1-2):253--273, 1999.
\newblock \href {https://doi.org/10.1016/S0304-3975(97)00173-4}
  {\path{doi:10.1016/S0304-3975(97)00173-4}}.

\bibitem{AEYM21}
Ikhlass Ammar, Yamen El~Touati, Moez Yeddes, and John Mullins.
\newblock Bounded opacity for timed systems.
\newblock {\em \JISA{}}, 61:1--13, September 2021.
\newblock \href {https://doi.org/10.1016/j.jisa.2021.102926}
  {\path{doi:10.1016/j.jisa.2021.102926}}.

\bibitem{AALL26}
Sjood {Ammen Daje}, Tareq {Ahmad Al-Sarayrah}, Ding Liu, and Zhiwu Li.
\newblock Robust opacity in timed automata: notions and verification.
\newblock {\em Reliability Engineering \& System Safety}, 270:112093, 2026.
\newblock URL:
  \url{https://www.sciencedirect.com/science/article/pii/S095183202501292X},
  \href {https://doi.org/10.1016/j.ress.2025.112093}
  {\path{doi:10.1016/j.ress.2025.112093}}.

\bibitem{AGWZH24}
Jie An, Qiang Gao, Lingtai Wang, Naijun Zhan, and Ichiro Hasuo.
\newblock The opacity of timed automata.
\newblock In André Platzer, Kristin-Yvonne Rozier, Matteo Pradella, and Matteo
  Rossi, editors, {\em {FM}}, volume 14933 of {\em \LNCS{}}, pages 620--637.
  Springer, 2024.
\newblock \href {https://doi.org/10.1007/978-3-031-71162-6_32}
  {\path{doi:10.1007/978-3-031-71162-6_32}}.

\bibitem{AAL24}
{\'E}tienne Andr{\'e}, Johan Arcile, and Engel Lefaucheux.
\newblock Execution-time opacity problems in one-clock parametric timed
  automata.
\newblock In Siddharth Barman and Sławomir Lasota, editors, {\em {FSTTCS}},
  volume 323 of {\em Leibniz International Proceedings in Informatics
  (LIPIcs)}, pages 3:1--3:22. Schloss Dagstuhl -- Leibniz-Zentrum für
  Informatik, December 2024.
\newblock \href {https://doi.org/10.4230/LIPIcs.FSTTCS.2024.3}
  {\path{doi:10.4230/LIPIcs.FSTTCS.2024.3}}.

\bibitem{ABLM22}
{\'E}tienne Andr{\'e}, Shapagat Bolat, Engel Lefaucheux, and Dylan Marinho.
\newblock {strategFTO}: Untimed control for timed opacity.
\newblock In Cyrille Artho and Peter Ölveczky, editors, {\em {FTSCS}}, pages
  27--33. {ACM}, 2022.
\newblock \href {https://doi.org/10.1145/3563822.3568013}
  {\path{doi:10.1145/3563822.3568013}}.

\bibitem{ADLL26}
{\'E}tienne Andr{\'e}, Marie Duflot, Laetitia Laversa, and Engel Lefaucheux.
\newblock Execution-time opacity control for timed automata.
\newblock {\em Software and Systems Modeling}, 2026.
\newblock To appear.
\newblock \href {https://doi.org/10.1007/s10270-026-01395-5}
  {\path{doi:10.1007/s10270-026-01395-5}}.

\bibitem{ADL26}
{\'E}tienne Andr{\'e}, Sarah Dépernet, and Engel Lefaucheux.
\newblock The bright side of timed opacity.
\newblock {\em \LMCS{}}, 2026.
\newblock To appear.
\newblock URL: \url{https://arxiv.org/pdf/2408.12240v5}.

\bibitem{AK20}
{\'E}tienne Andr{\'e} and Aleksander Kryukov.
\newblock Parametric non-interference in timed automata.
\newblock In Yi~Li and Alan Liew, editors, {\em {ICECCS}}, pages 37--42, 2020.
\newblock \href {https://doi.org/10.1109/ICECCS51672.2020.00012}
  {\path{doi:10.1109/ICECCS51672.2020.00012}}.

\bibitem{ALLMS23}
{\'E}tienne Andr{\'e}, Engel Lefaucheux, Didier Lime, Dylan Marinho, and Jun
  Sun.
\newblock Configuring timing parameters to ensure execution-time opacity in
  timed automata.
\newblock In Maurice~H. ter Beek and Clemens Dubslaff, editors, {\em {TiCSA}},
  volume 392 of {\em Electronic Proceedings in Theoretical Computer Science},
  pages 1--26, 2023.
\newblock Invited paper.
\newblock \href {https://doi.org/10.4204/EPTCS.392.1}
  {\path{doi:10.4204/EPTCS.392.1}}.

\bibitem{ALM23}
{\'E}tienne Andr{\'e}, Engel Lefaucheux, and Dylan Marinho.
\newblock Expiring opacity problems in parametric timed automata.
\newblock In Yamine Ait-Ameur and Ferhat Khendek, editors, {\em {ICECCS}},
  pages 89--98, 2023.
\newblock \href {https://doi.org/10.1109/ICECCS59891.2023.00020}
  {\path{doi:10.1109/ICECCS59891.2023.00020}}.

\bibitem{AA23survey}
Johan Arcile and {\'E}tienne Andr{\'e}.
\newblock Timed automata as a formalism for expressing security: A survey on
  theory and practice.
\newblock {\em \CSUR{}}, 55(6):1--36, July 2023.
\newblock \href {https://doi.org/10.1145/3534967} {\path{doi:10.1145/3534967}}.

\bibitem{BDST02}
Roberto Barbuti, Nicoletta~De Francesco, Antonella Santone, and Luca Tesei.
\newblock A notion of non-interference for timed automata.
\newblock {\em \FundInf{}}, 51(1-2):1--11, 2002.

\bibitem{BT03}
Roberto Barbuti and Luca Tesei.
\newblock A decidable notion of timed non-interference.
\newblock {\em \FundInf{}}, 54(2-3):137--150, 2003.

\bibitem{BCLR15}
Gilles Benattar, Franck Cassez, Didier Lime, and Olivier~H. Roux.
\newblock Control and synthesis of non-interferent timed systems.
\newblock {\em \IJC{}}, 88(2):217--236, 2015.
\newblock \href {https://doi.org/10.1080/00207179.2014.944356}
  {\path{doi:10.1080/00207179.2014.944356}}.

\bibitem{BFHHH14}
Nathalie Bertrand, Eric Fabre, Stefan Haar, Serge Haddad, and Loïc Hélouët.
\newblock Active diagnosis for probabilistic systems.
\newblock In Anca Muscholl, editor, {\em {FoSSaCS}}, volume 8412 of {\em
  \LNCS{}}, pages 29--42. Springer, 2014.
\newblock \href {https://doi.org/10.1007/978-3-642-54830-7_2}
  {\path{doi:10.1007/978-3-642-54830-7_2}}.

\bibitem{Cassez09}
Franck Cassez.
\newblock The dark side of timed opacity.
\newblock In Jong~Hyuk Park, Hsiao{-}Hwa Chen, Mohammed Atiquzzaman, Changhoon
  Lee, Tai{-}Hoon Kim, and Sang{-}Soo Yeo, editors, {\em {ISA}}, volume 5576 of
  {\em \LNCS{}}, pages 21--30. Springer, 2009.
\newblock \href {https://doi.org/10.1007/978-3-642-02617-1_3}
  {\path{doi:10.1007/978-3-642-02617-1_3}}.

\bibitem{CH12}
Krishnendu Chatterjee and Monika Henzinger.
\newblock An {$O(n^2)$} time algorithm for alternating {B}üchi games.
\newblock In Yuval Rabani, editor, {\em {SODA}}, pages 1386--1399. {SIAM},
  2012.
\newblock \href {https://doi.org/10.1137/1.9781611973099.109}
  {\path{doi:10.1137/1.9781611973099.109}}.

\bibitem{DQYang25}
Weilin Deng, Daowen Qiu, and Jingkai Yang.
\newblock Initial-location opacity and infinite-step opacity of timed automata
  with integer resets.
\newblock {\em IEEE Control Systems Letters}, 9:2031--2036, 2025.
\newblock \href {https://doi.org/10.1109/LCSYS.2025.3590422}
  {\path{doi:10.1109/LCSYS.2025.3590422}}.

\bibitem{DQY25}
Weilin Deng, Daowen Qiu, and Jingkai Yang.
\newblock New insights into opacity verification in timed discrete-event
  systems.
\newblock {\em Automatica}, 186:112869, 2026.
\newblock \href {https://doi.org/10.1016/J.AUTOMATICA.2026.112869}
  {\path{doi:10.1016/J.AUTOMATICA.2026.112869}}.

\bibitem{Dima01}
Catalin Dima.
\newblock Real-time automata.
\newblock {\em \JALC{}}, 6(1):3--23, 2001.
\newblock \href {https://doi.org/10.25596/jalc-2001-003}
  {\path{doi:10.25596/jalc-2001-003}}.

\bibitem{DLL25}
Yifan Dong, Dimitri Lefebvre, and Zhiwu Li.
\newblock $k$-step opacity verification and enforcement of time labeled {P}etri
  net systems.
\newblock {\em {IEEE} Transactions on Automatic Control}, 70(9):5848--5863,
  2025.
\newblock \href {https://doi.org/10.1109/TAC.2025.3552020}
  {\path{doi:10.1109/TAC.2025.3552020}}.

\bibitem{KKG24}
Julian Klein, Paul Kogel, and Sabine Glesner.
\newblock Verifying opacity of discrete-timed automata.
\newblock In Nico Plat, Stefania Gnesi, Carlo~A. Furia, and Antónia Lopes,
  editors, {\em {FormaliSE}}, pages 55--65. {ACM}, 2024.
\newblock \href {https://doi.org/10.1145/3644033.3644376}
  {\path{doi:10.1145/3644033.3644376}}.

\bibitem{LLHL22}
Jun Li, Dimitri Lefebvre, Christoforos~N. Hadjicostis, and Zhiwu Li.
\newblock Observers for a class of timed automata based on elapsed time graphs.
\newblock {\em {IEEE} Transactions on Automatic Control}, 67(2):767--779, 2022.
\newblock \href {https://doi.org/10.1109/TAC.2021.3064542}
  {\path{doi:10.1109/TAC.2021.3064542}}.

\bibitem{LLHL25}
Jun Li, Dimitri Lefebvre, Christoforos~N. Hadjicostis, and Zhiwu Li.
\newblock Verification of state-based timed opacity for constant-time labeled
  automata.
\newblock {\em {IEEE} Transactions on Automatic Control}, 70(1):503--509, 2025.
\newblock \href {https://doi.org/10.1109/TAC.2024.3432788}
  {\path{doi:10.1109/TAC.2024.3432788}}.

\bibitem{PF12}
Hans-Jörg Peter and Bernd Finkbeiner.
\newblock The complexity of bounded synthesis for timed control with partial
  observability.
\newblock In Marcin Jurdzinski and Dejan Nickovic, editors, {\em {FORMATS}},
  volume 7595 of {\em \LNCS{}}, pages 204--219. Springer, 2012.
\newblock \href {https://doi.org/10.1007/978-3-642-33365-1_15}
  {\path{doi:10.1007/978-3-642-33365-1_15}}.

\bibitem{Rintanen04}
Jussi Rintanen.
\newblock Complexity of planning with partial observability.
\newblock In Shlomo Zilberstein, Jana Koehler, and Sven Koenig, editors, {\em
  {ICAPS}}, pages 345--354. {AAAI}, 2004.

\bibitem{SLR23}
Anthony Spriet, Didier Lime, and Olivier~H. Roux.
\newblock Timed non-interference under partial observability and bounded
  memory.
\newblock In Laure Petrucci and Jeremy Sproston, editors, {\em {FORMATS}},
  volume 14138 of {\em \LNCS{}}, pages 122--137. Springer, 2023.
\newblock \href {https://doi.org/10.1007/978-3-031-42626-1_8}
  {\path{doi:10.1007/978-3-031-42626-1_8}}.

\bibitem{WZ18}
Lingtai Wang and Naijun Zhan.
\newblock Decidability of the initial-state opacity of real-time automata.
\newblock In Cliff~B. Jones, Ji~Wang, and Naijun Zhan, editors, {\em Symposium
  on Real-Time and Hybrid Systems - Essays Dedicated to Professor Chaochen Zhou
  on the Occasion of His 80th Birthday}, volume 11180 of {\em \LNCS{}}, pages
  44--60. Springer, 2018.
\newblock \href {https://doi.org/10.1007/978-3-030-01461-2_3}
  {\path{doi:10.1007/978-3-030-01461-2_3}}.

\bibitem{WZA18}
Lingtai Wang, Naijun Zhan, and Jie An.
\newblock The opacity of real-time automata.
\newblock {\em \TCAD{}}, 37(11):2845--2856, 2018.
\newblock \href {https://doi.org/10.1109/TCAD.2018.2857363}
  {\path{doi:10.1109/TCAD.2018.2857363}}.

\bibitem{Zhang24}
Kuize Zhang.
\newblock State-based opacity of labeled real-time automata.
\newblock {\em \TCS{}}, 987:114373, 2024.
\newblock \href {https://doi.org/10.1016/J.TCS.2023.114373}
  {\path{doi:10.1016/J.TCS.2023.114373}}.

\end{thebibliography}
\fi

\newpage
\appendix

\section{Recalling the region automaton}\label{appendix:regions}

We recall the classic region automaton construction.
Given a TA~$\TA$, for a clock~$\clock_i$, we denote by~$\constantmax{i}$ the largest constant to which~$\clock_i$ is compared within the guards and invariants of~$\TA$:
 formally:
\CONCURVersion{%
	\(\constantmax{i} = \max_j\{ \paramd_j \mid  x_i \bowtie \paramd_j \text{ appears in a guard or invariant of }\TA \}\).
}%
\arXivVersion{%
	\[\constantmax{i} = \max_j\{ \paramd_j \mid  x_i \bowtie \paramd_j \text{ appears in a guard or invariant of }\TA\text{.} \}\]
}

Given a clock valuation $\clockval$ and a clock~$\clock_i$,
$\intpart{\clockval(\clock_i)}$ (resp.\ $\fract{\clockval(\clock_i)}$) denotes the integral (resp.\ fractional) part of $\clockval(\clock_i)$.
\arXivVersion{%

}%
We now recall the equivalence relation between clock valuations.
\begin{definition}[Equivalence relation~\cite{AD94}]
	Two clocks valuations $\clockval,\clockval'$ are \emph{equivalent}, denoted by $\clockval \clockeq \clockval'$, when the following three conditions hold for any clocks $\clock_i, \clock_j \in \ClockSet$:
	\begin{enumerate}
		\item $\intpart{\clockval(\clocki{i})} = \intpart{\clockval'(\clocki{i})}$ or $\clockval(\clocki{i}) > \constantmax{i}$ and $\clockval'(\clocki{i}) > \constantmax{i}$; \label{item:equiv-int}
		\item if  $\clockval(\clocki{i}) \leq \constantmax{i}$ and $\clockval(\clocki{j}) \leq \constantmax{j}$: $\fract{\clockval(\clock_i)} \leq \fract{\clockval(\clock_j)}$ iff $\fract{\clockval'(\clock_i)} \leq \fract{\clockval'(\clock_j)}$;\arXivVersion{ and}\label{item:equiv-order}
		\item if  $\clockval(\clocki{i}) \leq \constantmax{i}$: $\fract{\clockval(\clock_i)} = 0$ iff $\fract{\clockval'(\clock_i)} = 0$.\label{item:equiv-zero}
	\end{enumerate}
\end{definition}
\arXivVersion{%
In other words, two valuations are equivalent when, for a given clock,
	the integral part is the same in both valuations or greater than the largest constant in both valuations (\cref{item:equiv-int}),
	for any two clocks, the fractional parts are in the same order in the two valuations (\cref{item:equiv-order}),
	and
	the fractional part is zero in both valuations or neither (\cref{item:equiv-zero}).

The equivalence relation $\clockeq$ is extended to the states of $\semantics{\TA}$:
given
}\CONCURVersion{Given}
two states $\TTSstate = (\loc, \clockval), \TTSstate' = (\loc', \clockval')$ of $\semantics{\TA}$, we write $\TTSstate \clockeq \TTSstate'$ iff $\loc = \loc'$ and $\clockval \clockeq \clockval'$.
We denote by $\class{\TTSstate}$ and call \emph{region} the equivalence class of a state~$\TTSstate$ for~$\clockeq$.
Then, $\TTSstate' \in \class{\TTSstate}$ when $\TTSstate \clockeq \TTSstate'$.
The set of all regions of~$\TA$ is denoted~$\regset{\TA}$.
A region $\region = \class{(\loc, \clockval)}$ is \emph{final} whenever $\loc \in \LocFinalSet$.
The set of final regions is denoted by~$\finalclass{\TA}$.
A region~$\region$ is \emph{reachable} when there exists a run~$\run$ such that $\laststate(\run) \in \region$ ($\laststate(\run)$ denoting the final configuration of $\run$.

Given a state $\TTSstate = (\loc, \clockval)$, and $\paramd \in \setRgeqzero$, we write $\TTSstate + \paramd $ to denote $(\loc, \clockval + \paramd)$.
Given two regions $r$ and $r'$, we write $\region \cup \region'$ for $ \set{\TTSstate \mid \TTSstate \in \region \mbox{ or } \TTSstate \in \region'}$.

\begin{definition}[Region Automaton]
	For a given TA~$\TA$,
	the region automaton~$\regaut{\TA}$ is given by the tuple
	$(\regset{\TA}, \ActionControlSet \cup \set{\silentaction}, \RegAutTransitions, r_0 =(l_0,\ClocksZero))$ where:
	\begin{oneenumerate}%
		\item $\regset{\TA}$ is the set of states,
		\item  given two regions $\region, \region' \in \regset{\TA}$ and $a \in \ActionControlSet \cup \set{\silentaction}$, we have $(\region,\action,\region') \in \RegAutTransitions$ if there exist $s =(\loc, \clockval) \in \region$ and $ s'=(\loc', \clockval') \in \region'$ such that one of the following holds:
		\begin{ienumerate}%
			\item $\action \in \ActionControlSet$ and
			$(\loc, \clockval) \transitionWith{\edge} (\loc', \clockval') \in \TTStransition $ in $\TTS_{\TA}$
			with $\edge = (\loc, \guard, \action, \resets, \loc')$
			for some $\guard$ and~$\resets$; or
		\item $\action = \silentaction$ and  $\exists \paramd \in \setRpositif$ such that
$\TTSstate \transitionWith{\paramd} \TTSstate'$,
and $\forall  0< \paramd' < \paramd$, $\TTSstate+d' \in r \cup r'$.
		\end{ienumerate}
	\end{oneenumerate}%
\end{definition}
\section{Buffered automaton}\label{sec:buffaut}

Given a TA~$\TA$, we give the construction of a TA~$\buffered{\TA}$ whose untimed language is exactly the set of buffered observations of~$\TA$.
Simply said, we make two copies of the TA, encoding the Boolean value whether the elapsed time is an integer or not, and add a clock~$z$ (encoding the fractional part of the global time) to switch between both parts at the right moment.

Each location~$\loc$ is duplicated in two locations $\loc^0$ and~$\loc^1$.
A transition $(\loc, a, g, R, \loc')$ in~$\TA$ is decomposed into two new transitions $(\loc^i, a, g \land g_i, R, \loc'^i)$ in $\buffered{\TA}$, for $i \in \set{0, 1}$, with $g_0 = (z = 0)$ and $g_1 = (0 < z) \land (z< 1)$.

To switch between both sets of locations, we add for each location~$\loc$ in~$\TA$ two transitions in $\buffered{\TA}$: $(\loc^1, \clabel{\tick}{\uncontrol}, (z = 1), \set{0}, \loc^0)$ (denoting that the global time has reached an integer value), and $(\loc^0, \clabel{\notick}{\uncontrol}, g_1, \emptyset, \loc^1)$ (denoting that some time smaller than~1 time unit has elapsed since the last integer global time).
This way, the untimed trace of a timed word contains time marks allowing to know between which integer times an action took place but also whether an action took place exactly at an integer time.

If $\loc$ (in~$\TA$) was an initial location, then $\loc^0$ is initial; however if $\loc$ was final, then none of $\loc^0$ and~$\loc^1$ are final, for the following reason.

In the following proofs of the appendix, we restrict the opacity study to states reached by a transition labelled by a time region symbol ($\tick$ or $\notick$). Hence all relevant events of a run must happen before the last time region symbol occurs, which we add artificially as explained below.
To end with a time region symbol without forgetting when the run ended, we first add four locations $\loc_{\textit{end}}^0$, $\loc_{\textit{end}}^1$, $\loc_{\textit{reg}}^0$ and $\loc_{\textit{reg}}^1$, and a special observable symbol $\runend$ that marks the end of the run. If $\loc$ was a final location, then for $i \in \set{0,1}$, $\loc^i$ leads uncontrollably and in zero time to $\loc_{\textit{end}}^i$, producing the symbol $\runend$. To ensure that this is done in zero time, every transition leading to $\loc_{\textit{end}}^i$ resets a clock $\clocky$, then compared to zero. Finally, we add two new final locations $\loc_{\textit{reg}}^0$ and $\loc_{\textit{reg}}^1$ respectively reachable from $\loc_{\textit{end}}^1$ and $\loc_{\textit{end}}^0$ through a transition announcing the next time region (with $\clabel{\tick}{\uncontrol}$ or $\clabel{\notick}{\uncontrol}$ resp.).

With this construction, the set of durations of runs in~$\TA$ is exactly the set of times when a symbol $\runend$ can be produced in $\buffered{\TA}$.

\begin{figure}[tb]

	\centering
	\large
	\scalebox{.9}{%
	\begin{tikzpicture}[pta, scale=2, xscale=1, yscale=.5]
		\location[initial, name=s0, at={(0, -1)}]{$\styleclock{\clock} \leq 3$}{$\loci{0}^0$}

		\location[private, name=s2, at={(2, 0.5)}]{$\styleclock{\clock} \leq 2$}{$\locpriv^0$}

		\node[location] at (4, -1) (s1) {$\locfinal^0$};
		
		\node[location] at (5, 0.5) (s3) {$\loc_{\textit{end}}^0$};
		\node[location, final] at (6, -1) (s4) {$\loc_{\textit{reg}}^0$};

		\path (s0) edge[bend left] node[align=center, above]{$\styleclock{\clock} \geq 1$ \\ $\land \tickclock = 0$\\$\clabel{\silentaction}{\cactioni{1}}$} (s2);
		\path (s0) edge[loop above] node[align=center]{$\clabel{\action}{\cactioni{2}}$\\$\tickclock = 0$} (s0);
		\path (s0) edge[] node[above right, align=center]{$\clabel{\actionb}{\uncontrol}$\\$\tickclock = 0$\\$\styleclock{\clocky} \assign 0$} (s1);
		\path (s2) edge[bend left] node[align=center]{$\clabel{\actionb}{\uncontrol}$\\$\tickclock = 0$\\$\styleclock{\clocky} \assign 0$} (s1);
		\path (s1) edge node[above left, align=center]{$\clabel{\runend}{\uncontrol}$\\$\styleclock{\clocky} = 0$} (s3);

		\location[name=s0', at={(0, -3.5)}]{$\styleclock{\clock} \leq 3$}{$\loci{0}^1$}

		\location[private, name=s2', at={(2, -5)}]{$\styleclock{\clock} \leq 2$}{$\locpriv^1$}

		\node[location] at (4, -3.5) (s1') {$\locfinal^1$};
		
		\node[location] at (5, -5) (s3') {$\loc_{\textit{end}}^1$};
		\node[location, final] at (5, -3.5) (s4') {$\loc_{\textit{reg}}^1$};

		\path (s0') edge[bend right] node[align=center, below]{$\clabel{\silentaction}{\cactioni{1}}$ \\ $\styleclock{\clock} \geq 1$ \\ $\land 0 < \tickclock < 1$} (s2');
		\path (s0') edge[loop left] node[align=center, below]{$\clabel{\action}{\cactioni{2}}$\\ $0 < \tickclock < 1$} (s0');
		\path (s0') edge[] node[below right, align=center]{$\clabel{\actionb}{\uncontrol}$\\$0 < \tickclock < 1$\\$\styleclock{\clocky} \assign 0$} (s1');
		\path (s2') edge[bend right] node[align=center, below]{$\clabel{\actionb}{\uncontrol}$\\$0 < \tickclock < 1$\\$\styleclock{\clocky} \assign 0$} (s1');
		\path (s1') edge node[below left, align=center]{$\clabel{\runend}{\uncontrol}$\\$\styleclock{\clocky} = 0$} (s3');

		\path (s0) edge[bend right=+9] node[align=center, left] {$\clabel{\notick}{\uncontrol}$\\$0 < \tickclock < 1$} (s0');
		\path (s2) edge[bend right=+9] node[align=center, left] {$\clabel{\notick}{\uncontrol}$\\$0 < \tickclock < 1$} (s2');
		\path (s0') edge[bend right=+9] node[align=center, right] {$\tick$\\$\tickclock = 1$\\$\tickclock \assign 0$} (s0);
		\path (s2') edge[bend right=+9] node[align=center, right] {$\tick$\\$\tickclock = 1$\\$\tickclock \assign 0$} (s2);
		\path (s3) edge node[left, align=center]{$\clabel{\notick}{\uncontrol}$\\$0 < \tickclock < 1$} (s4');
		\path (s3') edge node[below right, align=center]{$\tick$\\$\tickclock = 1$\\$\tickclock \assign 0$} (s4);

	\draw[color=black, -, thick, decorate,decoration={brace,raise=0.1cm}]
(6.25,1.75) --++ (0,-3.5) node[right=0.2cm,pos=0.5, align=center] {Integer \\ global time};
	\draw[color=black, -, thick, decorate,decoration={brace,raise=0.1cm}]
(6.25,-2.75) --++ (0,-3.5) node[right=0.2cm,pos=0.5, align=center] {Non-integer \\ global time};
	\end{tikzpicture}
	}
	\caption{The buffered automaton of the TA from \cref{figure:example-TA}}
	\label{figure:buffered-TA}

\end{figure}

\begin{remark}
Each run in~$\TA$ has one equivalent run, or infinitely many ones, in~$\buffered{\TA}$, and each run in $\buffered{\TA}$ corresponds to exactly one run in~$\TA$.
\end{remark}
For instance, the timed word $(a_1, 0.2)(a_2, 0.7)(a_3, 1.4)(a_4, 2)(a_5, 3.6)$ in~$\TA$ becomes the set of timed words
\begin{align*}
\{&(\notick, \tau_0)(a_1, 0.2)(a_2, 0.7)(\tick, 1)(\notick, \tau_1)(a_3, 1.4)(\tick, 2)(a_4, 2)(\notick, \tau_2)(\tick, 3)(\notick, \tau_3)(a_5, 3.6)
\\ &\mid 
0 < \tau_0 \leq 0.2 \land 0 < \tau_1 \leq 0.4 \land 0 < \tau_2 < 1 \land 0 < \tau_3 \leq 0.6\}.
\end{align*}
They have a same untimed trace: $\notick a_1 a_2 \tick \notick a_3 \tick a_4 \notick \tick \notick a_5$ which gives an incomplete and succinct timing information about the timestamps $t_i$ of each $a_i$ that can be understood as: $0 < t_1 \leq t_2 < 1 < t_3 < 2 = t_4 < 3 < t_5$.

\section{Inter-reduction between weak and full opacity}
\label{sec:interreduc}

\interreduc*

\begin{proof}
In~\cite{ADL26}, the weak ET-opacity and full ET-opacity problems were shown to be inter-reducible in the absence of control.
The proof mainly relied on the ability to construct from a
TA~$\TA$ another TA~$\TA'$ where private and public runs were swapped.
Hence, $\TA$ is fully opaque iff both $\TA$ and~$\TA'$ are weakly opaque.
This proof cannot carry over to the controlled setting since there is no guarantee that the same strategy can be applied in both TAs to make them \emph{simultaneously} weakly opaque:
both $\TA$ and $\TA'$ could be deemed weakly ET-opaque, though relying on two different sequential strategies.
While their proof cannot be adapted, we reuse some of their proof ingredients,
mainly $\APriv$ and~$\APub$, \ie{} copies of the TA~$\TA$ encoding respectively all private and public behaviours (see details in \cref{section:tools:Apriv-Apub}):
given a sequential strategy $\strategy$, we have that 
$\PrivateTr{\TA}{\strategy} = \Trace{\CRuns{\strategy}{\APriv}}$ and
$\PublicTr{\TA}{\strategy} = \Trace{\CRuns{\strategy}{\APub}}$.

Let us first explain how to reduce weak opacity problems to full opacity problems.
Given a TA $\TA$, we first build the TA $\APriv$ and~$\APub$ mentioned above, and then
the TA $\TAWeakFull$ (shown on the left of \cref{fig:weak-full}) 
such that the initial location of $\APriv$ and~$\APub$ can be reached in $0$ time 
while visiting a private location, and the initial location of $\APub$ can be visited in $0$ time without visiting a private location.
Thus, $\PrivateTr{\TAWeakFull}{\strategy} =  \Trace{\CRuns{\strategy}{\APriv}}\cup 
\Trace{\CRuns{\strategy}{\APub}} = \Trace{\CRuns{\strategy}{\TA}}$ and
$\PublicTr{\TAWeakFull}{\strategy}= \Trace{\CRuns{\strategy}{\APub}} = \PublicTr{\TA}{\strategy}$. 
Hence, 
$\PrivateTr{\TA}{\strategy} \subseteq \PublicTr{\TA}{\strategy}$ if and only if $\PrivateTr{\TAWeakFull}{\strategy} = \PublicTr{\TAWeakFull}{\strategy}$.
Thus the strategy~$\strategy$ makes $\TA$ weakly opaque if and only if it makes 
$\TAWeakFull$ fully opaque.

We now consider the opposite reduction. As mentioned previously, in~\cite{ADL26} we had made this reduction by testing both inclusion
$\PrivateTr{\TA}{\strategy} \subseteq \PublicTr{\TA}{\strategy}$
and $\PrivateTr{\TA}{\strategy} \supseteq \PublicTr{\TA}{\strategy}$ separately
on two different TAs. Here, instead, we use a single TA, and add a final letter (`$a$' or `$b$') pointing
which inclusion is being tested: if the last letter of the trace is an `$a$', then the run participates in the test of
the inclusion $\PrivateTr{\TA}{\strategy} \subseteq \PublicTr{\TA}{\strategy}$; and if the last letter is a `$b$', then the run participates in the converse inclusion.
As a strategy is not aware before this last letter of which inclusion is being tested, it needs to ensure both simultaneously.

More precisely, we detail below the construction of the TA $\TAFullWeak$ (shown on the right of \cref{fig:weak-full}), where we assume `$\extraAction$' is a new letter (and thus cannot be produced through any transition besides the ones we add in the reduction):

\begin{itemize}
\item From a new initial location $\locinit$, one can go to the initial location of
either $\APriv$ or~$\APub$ in $0$ time;
\item Within $\APriv$ and~$\APub$, each transition leading to a 
final location of $\APriv$ or~$\APub$ resets $x$;
and these final locations are not deemed final in $\TAFullWeak$;
\item Once a final locations of $\APriv$ is reached, a run can immediately (requiring $x=0$) leave while reading  a $\extraAction$ and go either to $\loc^a$ (which is private) or to
$\loc^b$.
Similarly, once a final locations of $\APub$ is reached, a run can immediately (requiring $x=0$) leave while reading a $\extraAction$ and go either to $\loc_a$ or to
$\loc_b$ (which is private).

\item From the locations $\loc^z$ and $\loc_z$ (with $z\in \{a,b\}$) the run can finally reach the final location $\locfinal$ of $\TAFullWeak$ by reading a $z$.
\end{itemize}

\begin{figure}[th]
\begin{center}
\begin{tikzpicture}[pta, node distance=2cm, thin]
\def\shift{-4.5} %
	\node[location, initial] (L0) at (0 +\shift, 0) {$\locinit$};
	\node[location, private] (Lpriv) at (1.35+\shift, 1) {$\locpriv$};
	\node (A) at (3.25+\shift, 1) {};
	\node[rectangle, minimum width=1.5cm, minimum height=1cm, align=center, draw] (rectangleA) at (3.7+\shift, 1cm) {$\APriv$};

	\node (B) at (3.25+\shift, -1) {};
	\node[rectangle, minimum width=1.5cm, minimum height=1cm, align=center, draw] (rectangleB) at (3.7+\shift, -1cm) {$\APub$};
	
	\path
	(L0) edge node[pos=0.5, above left, align=center]{$\silentaction$\\$\styleclock{\clock} = 0$}(Lpriv)
	(L0) edge node[pos=0.5, below left, align=center]{$\silentaction$\\$\styleclock{\clock} = 0$}(B)
	(Lpriv) edge node[pos=0.5, above, align=center]{$\silentaction$\\$\styleclock{\clock} = 0$} (A)
	(Lpriv) edge node[pos=0.1, below, align=center]{$\silentaction$\\$\styleclock{\clock} = 0$} (B)
	;

\def\vampl{1.25} %

	\node[location, initial] (l0) at (1, 0) {$\locinit$};
\node[location, final] (lf) at (+8, 0) {$\locfinal$};
	\node[location, private] (lpriv1) at (6.5, 1.5*\vampl) {$\loc^a$};
	\node[location, private] (lpriv2) at (6.5, -1.5*\vampl) {$\loc_b$};
	\node[location] (lpub1) at (6.5, 0.5*\vampl) {$\loc^b$};
	\node[location] (lpub2) at (6.5, -0.5*\vampl) {$\loc_a$};

	\node (ta1) at (1.5, 1*\vampl) {};
	\node[location] (1at) at (4.35, 1*\vampl) {};
	\node[rectangle, minimum width=3.5cm, minimum height=\vampl cm, align=center, draw] (rectangle) at (3cm, \vampl cm) {$\APriv \hspace{50pt}$};

	\node (ta2) at (1.5, -1*\vampl) {};
	\node[location] (2at) at (4.35, -1*\vampl) {};
	\node[rectangle, minimum width=3.5cm, minimum height=\vampl cm, align=center, draw] (rectangle) at (3cm, -1*\vampl cm) {$\APub \hspace{45pt}$};
	
	\path
	(1at)+(-1.3,0) edge[dashed] node{$\styleclock{\clock} \assign 0$}(1at)
	(2at)+(-1.3,0) edge[dashed] node{$\styleclock{\clock} \assign 0$}(2at)
	(l0) edge node[pos=0.5, above left, align=center]{$\silentaction$\\$\styleclock{\clock} = 0$} (ta1)
	(l0) edgenode[pos=0.5, below left, align=center]{$\silentaction$\\$\styleclock{\clock} = 0$} (ta2)
	(1at) edge node[pos=0.5, above, align=center]{$\extraAction$\\$\styleclock{\clock} = 0$} (lpriv1)
	(1at) edge node[pos=0.3, below, align=center]{$\extraAction$\\$\styleclock{\clock} = 0$} (lpub1)
	(2at) edge node[pos=0.5, above, align=center]{$\extraAction$\\$\styleclock{\clock} = 0$} (lpub2)
	(2at) edge node[pos=0.3, below, align=center]{$\extraAction$\\$\styleclock{\clock} = 0$} (lpriv2)
	(lpriv1) edge node[pos=0.5, above right]{$a$}(lf)
	(lpub1) edge node[pos=0.4, above]{$b$}(lf)
	(lpub2) edge node[pos=0.4, below]{$a$}(lf)
	(lpriv2) edge node[pos=0.5, below right]{$b$}(lf)
	;
	
\end{tikzpicture}
\end{center}
  \caption{Left: $\TAWeakFull$, fully opaque iff $\TA$ is weakly opaque; right: $\TAFullWeak$ weakly opaque iff $\TA$ is fully opaque. Every transition outside $\APriv$ and~$\APub$ is uncontrollable.}
  \label{fig:weak-full}
\end{figure}

By construction, we have that a trace $w$ is produced by a private run of 
$\TA$ (and thus by a run of $\APriv$) iff the trace $w\extraAction a$ is a private trace 
of $\TAFullWeak$ and the trace $w\extraAction b$ is a public trace 
of $\TAFullWeak$. And also that a trace $w$ is produced by a public run of 
$\TA$ (and thus by a run of $\APub$) iff the trace $w\extraAction a$ is a public trace 
of $\TAFullWeak$ and the trace $w\extraAction b$ is a private trace 
of $\TAFullWeak$.

As $\extraAction$ can only be produced once by the runs exiting the TA $\APriv$ and 
$\APub$,
given a sequential strategy $\strategy$, in order to have 
$\PrivateTr{\TAFullWeak}{\strategy} \subseteq \PublicTr{\TAFullWeak}{\strategy}$
we need that the traces of runs exiting $\APriv$ (resp.~$\APub$) and followed by $\extraAction a$ (resp.~$\extraAction b$) are made opaque by traces of runs exiting $\APub$ (resp.~$\APriv$)  and followed by $\extraAction a$ (resp.~$\extraAction b$).
Those two conditions thus require that 
$\Trace{\CRuns{\strategy}{\APriv}}\subseteq \Trace{\CRuns{\strategy}{\APub}}$
and 
$\Trace{\CRuns{\strategy}{\APriv}}\supseteq \Trace{\CRuns{\strategy}{\APub}}$
respectively, and thus that
$\Trace{\CRuns{\strategy}{\APriv}}= \Trace{\CRuns{\strategy}{\APub}}$.

Hence, 
$\PrivateTr{\TAFullWeak}{\strategy} \subseteq \PublicTr{\TAFullWeak}{\strategy}$
iff $\PrivateTr{\TA}{\strategy} = \PublicTr{\TA}{\strategy}$.

As the $N$-SS setting only limits the strategy to consider, and we do not add any controllable action in our reduction, this inter-reducibility directly applies to both the $N$-SS setting and the OSS setting as well as the general one.
\end{proof}

\section{$\APriv$ and $\APub$}\label{section:tools:Apriv-Apub}

In this section, we give a quick intuition of the constructions $\APub$ and~$\APriv$, that recognize timed words produced respectively by public and private runs of a given TA~$\TA$.
Those constructions are also illustrated in \cref{fig:two-automata} and
\cref{figure:APriv-APub:example-TA}. We refer to~\cite{ADL26} for the formal definitions.

The public runs TA $\APub$ is the easiest to build: it suffices to remove the private locations from~$\TA$. The remaining runs are exactly the ones that do not visit the private locations, thus the public runs.

The private runs TA~$\APriv$ is obtained by duplicating all locations and transitions of~$\TA$: one copy~$\TA_{S}$ corresponds to the paths that already visited the private locations set, and the other copy $\TA_{\bar{S}}$ corresponds to the paths that did not (this is a usual way to encode a Boolean, here ``$\PrivSet$ was visited'', in the locations of a~TA).
For each private location $\locpriv$ in~$\TA$, we redirect all transitions leading to the copy of $\locpriv$ in~$\TA_{\bar{S}}$ towards the copy of~$\locpriv$ in~$\TA_{S}$.
The initial location is the one from~$\TA_{\bar{S}}$ and the final locations are the ones from~$\TA_{S}$.
Hence all runs need to go from $\TA_{\bar{S}}$ to $\TA_{S}$ before reaching a final location, which requires visiting a private location.

\begin{figure}[htb]
	\centering
	\begin{subfigure}[hb]{0.4\textwidth}
   \centering
\begin{tikzpicture}[pta, node distance=2cm, thin]

	\node[location, initial] (l0) at (-1, 1) {$\locinit$};
	\node[location, final] (lfS) at (+1, 1) {$\locfinal$};
	\node[rectangle, minimum width=3cm, minimum height=2cm, align=center, draw] (rectangle) at (0cm, 1cm) {$\TA$};

	\node[location, private] (lpriv) at (0, 0.5) {$\locpriv$};
	\path
	;

\draw[-, very thick, color = black!20] (-0.5,0.2) -- (0.5,0.8);
\draw[-, very thick, color = black!20] (0.5,0.2) -- (-0.5,0.8);

\end{tikzpicture}
  \caption{$\APub$}
  \label{figure:APub}
	\end{subfigure}
	\begin{subfigure}[hb]{0.4\textwidth}
   \centering
	\begin{tikzpicture}[pta, node distance=2cm, thin]

	\node[location] (l0S) at (-1, 0) {$\locinit^S$};
	\node[location, final] (lfS) at (+1, 0) {$\locfinal^S$};
	\node[rectangle, minimum width=3cm, minimum height=1.5cm, align=center, draw] (rectangle) at (0cm, -.1cm) {$\TA_S$\\};

	\node[location, initial] (l0S) at (-1, -2) {$\locinit^{\bar{S}}$};
	\node[location] (lfS) at (+1, -2) {$\locfinal^{\bar{S}}$};
	\node[rectangle, minimum width=3cm, minimum height=1.5cm, align=center, draw] (rectangle) at (0cm, -2cm) {$\TA_{\bar{S}}$};

	\node[location, private] (lpriv) at (0, -0.5) {$\locpriv^S$};
	\path
	(-0.5, -1.5) edge (lpriv)
	;
	\end{tikzpicture}
  \caption{$\APriv$}
  \label{figure:APriv}
	\end{subfigure}
	\caption{Illustrating $\APub$ and $\APriv$ 
	}
	\label{fig:two-automata}
\end{figure}
\begin{figure}[htb]

	\centering
	\begin{subfigure}[hb]{0.3\textwidth}
		\begin{tikzpicture}[pta, xscale=1, yscale=.5]

		\location[initial, name=s0, at={(0, -1)}]{$\styleclock{\clock} \leq 3$}{$\loci{0}$}

		\node[location, final] at (1.5, -1) (s1) {$\loci{1}$};

		\path (s0) edge[loop above] node[align=center]{$\styleclock{\clock} = 1$\\$\cactioni{2}$} (s0);
		\path (s0) edge[] node[below]{$\uncontrol$} (s1);

		\end{tikzpicture}
		\caption{$\APub$}
		\label{fig:example-APub}
	\end{subfigure}
	\hfill
	\begin{subfigure}[hb]{0.65\textwidth}
		\begin{tikzpicture}[pta, scale=2, xscale=1, yscale=.45]
		\node[rectangle, minimum width=6.25cm, minimum height=2.2cm, align=center, draw, dashed] (rectangleS) at (1.25cm, 2.25cm) {};
		\location[name=s0S, at={(0, 1.75)}]{$\styleclock{\clock} \leq 3$}{$\loci{0}$};

		\location[private, name=s2S, at={(1.5, 2.75)}]{$\styleclock{\clock} \leq 2$}{$\loci{2}$};

		\node[location, final] at (2.5, 1.75) (s1S) {$\loci{1}$};

		\node[rectangle, minimum width=6.25cm, minimum height=2.2cm, align=center, draw, dashed] (rectangleSbar) at (1.25cm, -0.5cm) {};
		\location[initial, name=s0, at={(0, -1)}]{$\styleclock{\clock} \leq 3$}{$\loci{0}$}

		\location[name=s2, at={(1.5, 0)}]{$\styleclock{\clock} \leq 2$}{$\loci{2}$}

		\node[location] at (2.5, -1) (s1) {$\loci{1}$};

		\path (s0S) edge node[align=center, pos=0.6]{$\styleclock{\clock} \geq 1$\\$\cactioni{1}$} (s2S);
		\draw (s0S) .. controls +(100:20pt) and +(130:15pt) .. (s0S) node[pos=0.5, align=center,above]{$\styleclock{\clock} = 1$\\$\cactioni{2}$};
		\path (s0S) edge[] node[below]{$\uncontrol$} (s1S);
		\path (s2S) edge node[]{$\uncontrol$} (s1S);
		\draw (s0) .. controls +(100:20pt) and +(130:15pt) .. (s0) node[pos=0.5, align=center,above]{$\styleclock{\clock} = 1$\\$\cactioni{2}$};
		\path (s0) edge[] node[below]{$\uncontrol$} (s1);
		\path (s2) edge node[]{$\uncontrol$} (s1);
		\path (s0) edge node[align=center, below right, pos=0.4]{$\styleclock{\clock} \geq 1$\\$\cactioni{1}$} (s2S);
		\end{tikzpicture}
		\caption{$\APriv$}
		\label{fig:example-APriv}
	\end{subfigure}
\caption{$\APub$ and $\APriv$ with the example from \cref{figure:example-TA}}
\label{figure:APriv-APub:example-TA}
\end{figure}

\section{Proof for $N$-sequential strategies}\label{appendix:proof:NSS}
\subsection{Proof of \cref{lem:naturaloutcome}}\label{appendix:proof:lem:naturaloutcome}

\naturaloutcome*

\textbf{Proof ingredients}
\begin{itemize}
\item[(a)]\label{item:succ-set}
The set $\Succ_a(H)$ is by definition the greatest set such that:
$\forall (B,S) \in H, \forall \region \in B, \forall (\loc, \clockval) \in \region, \forall r' \in \regset{\TA}, \forall (\loc', \clockval' \in r', \forall S' \textit{ suffix of }S, (\exists \textit{ a locally } S~\sqcap~S' \textit{-compatible path } p \textit{ from } (\loc, \clockval) \textit{ to }(\loc', \clockval')  \implies \exists B' \textit{ s.t.\ } (B', S') \in \Succ_a(H) \land r' \in B'$.
\item[(b)]\label{item:strategy-glue}
Let $p$ and $p'$ be two paths in~$\TA$ such that $\laststate(p) = \firststate(p')$, and let $S_1$ be a local strategy. Let $S_2$ be a suffix of $S_1$ and $S_3$ a suffix of $S_2$ such that $p$ is locally $S_1 \sqcap S_2$-compatible and $p'$ is locally $S_2 \sqcap S_3$-compatible. Then the path $p \cdot p'$ is locally $S_1 \sqcap S_3$-compatible.
\end{itemize}

\begin{lemma}\label{lemma:succ-back}
Let $a$ be some letter in $\ObsSet$ and $S$ be some local strategy. Let $r$ and~$r'$ be regions such that there exist some configurations $(\loc, \clockval) \in r$ and $(\loc', \clockval') \in r'$ and some locally $S$-compatible path $\run$ in~$\TA$ from $(\loc, \clockval)$ to $(\loc', \clockval')$, whose trace is~$a$. Then for all configuration $(\loc', \clockval'')$ in~$r'$, there exists a locally $S$-compatible path in~$\TA$, with trace~$a$, from some configuration in~$r$ to $(\loc', \clockval'')$.
\end{lemma}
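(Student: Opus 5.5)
This is a backward region-simulation statement. The plan is to use the classical backward stability of the region equivalence, transported along the path $\run$ step by step while preserving its discrete skeleton. Preserving the skeleton keeps both the trace $a$ and local $S$-compatibility.

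Write $\run$ as $(\loc_0,\clockval_0),(d_0,\edge_0),\dots,(d_{n-1},\edge_{n-1}),(\loc_n,\clockval_n)$, with $(\loc_0,\clockval_0)=(\loc,\clockval)$ and $(\loc_n,\clockval_n)=(\loc',\clockval')$. We may refine the delays so that each elementary delay stays within a region or crosses into the immediate time-successor region, as in the $\silentaction$-transitions of $\regaut{\TA}$. The central tool is the standard \emph{pre-stability} property of regions, proved by induction on steps.
\begin{itemize}
\item If $(\loc_i,\clockval_i)\xrightarrow{\edge_i}(\loc_{i+1},\clockval_{i+1})$ and $\clockval'' \clockeq \clockval_{i+1}$, then some $\clockval^\ast\clockeq\clockval_i$ satisfies $\reset{\clockval^\ast}{\resets_i}=\clockval''$. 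One obtains $\clockval^\ast$ by choosing values of the reset clocks consistent with the region of $\clockval_i$; such values exist because regions are products of integer and fractional-order constraints.
\item For a delay $d$ taking $\clockval_i$ to $\clockval_i+d$, and $\clockval''\clockeq\clockval_i+d$, there is $\clockval^\ast\clockeq\clockval_i$ and $d^\ast$ with $\clockval^\ast+d^\ast=\clockval''$. Moreover, the intermediate valuations visit the same sequence of regions, so invariants remain satisfied.
\end{itemize}
Chaining these backwards from $(\loc',\clockval'')$ yields a path $\run''$ ending at $(\loc',\clockval'')$ and starting in $r$. It uses exactly the same edges $\edge_0,\dots,\edge_{n-1}$, so it has the same untimed labels and hence the same trace~$a$.

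The delicate point, and the main obstacle I expect, is local $S$-compatibility. That condition fixes a decomposition of $\run$ into blocks $\run_1,\dots,\run_k$ separated by switch points with strictly positive delays $d_j>0$. It also requires the control labels in each block to lie in the corresponding $\ActionSet_j\cup\set{\uncontrol}$. The label condition transfers directly, since the edges are unchanged.

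The remaining issue is to keep the separating delays strictly positive in $\run''$. I would handle this by recording, in the backward construction, whether each elementary delay is zero or positive. A zero delay stays zero. A positive delay within a region or crossing into its time successor is matched by a positive delay in $\run''$, because $\clockval^\ast$ and $\clockval''$ lie in the corresponding regions and, when equal regions force positivity, we may choose $\clockval^\ast$ strictly earlier. If necessary I would first split each positive $d_j$ into a small positive part inside the current region plus the rest, so the positivity is witnessed in-region. Then I take the same block decomposition of $\run''$.

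There is one implicit subtlety: the switch points of a local strategy are meant to be tied to global time, i.e.\ the clock $z$ of the buffered automaton. Since $z$ is a clock of $\TA$ after the construction of \cref{sec:buffaut}, region equivalence already preserves it, so no extra argument is needed. This gives the desired locally $S$-compatible path with trace $a$ from a configuration of $r$ to $(\loc',\clockval'')$.
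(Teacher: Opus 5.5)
\textbf{Gap: the backward delay step.} Your delay bullet, i.e.\ backward stability of regions under time elapse, is false for the region equivalence used in the paper. A clock above its maximal constant $c_i$ only keeps the information ``$>c_i$''. So the region of $\mu_i+d$ no longer determines how far back in time one must go to re-enter the region of $\mu_i$.

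Here is a counterexample. Take $c_x=1$ and an uncontrollable $a$-labelled edge $e$ from $\ell$ to $\ell'$ with trivial guard and no reset. Assume that, apart from the $\tick$/$\notick$ edges, $e$ is the only edge leaving $\ell$, and that no silent edge leaves $\ell'$. The path $(\ell,(x,z)=(0.95,0.3))\xrightarrow{(0.1,e)}(\ell',(1.05,0.4))$ has trace $a$ and is locally $S$-compatible for any one-element $S$. Its target region ($x>1$, $0<z<1$) contains $(\ell',(50,0.4))$. Any path with trace $a$ ending in $(\ell',(50,0.4))$ contains no $\tick$. Hence $z$ is never reset and the path lasts at most $0.4$. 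It therefore starts with $x\geq 49.6$, outside the source region, where $x<1$.

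So no $\mu^\ast\approx\mu_i$ and $d^\ast$ with $\mu^\ast+d^\ast=\mu''$ exist here, and the lemma as literally stated is false. The clock $z$ you invoke at the end is exactly what rules out compensating with a longer delay. Your discrete pre-stability step and your bookkeeping of the positive separating delays are fine. The chain breaks only at delays along which a clock crosses its maximal constant.

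\textbf{Comparison with the paper.} This is not a weakness of your route relative to the paper's. The paper argues differently. It fixes an implementation $\nu$ and recomputes the timestamps of the resetting transitions from $\mu''$. It then applies a single increasing piecewise-affine distortion to all timestamps of $\run$ and of $\nu$, which carries the switch points and the positivity of delays along at once. You instead move backwards one elementary step at a time and enforce $d_j>0$ by hand.

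However, the paper's proof rests on the same false premise: it claims that $\mu'$ and $\mu''$ have the same integral part on every clock, which only holds for clocks whose value is at most $c_i$.

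\textbf{Possible repairs.} Both arguments go through if no valid configuration has a clock above its maximal constant. This can be enforced, e.g.\ by a preprocessing that uses invariants to force a silent uncontrollable reset of any clock exceeding its maximal constant, and records in the location that this clock is large.

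Alternatively, drop the universal quantification over $(\ell',\mu'')$ and prove \cref{lem:naturaloutcome} forwards. First choose the chain of regions backwards, using only the definition of $\Next$. Then realise it forwards with the forward time-abstract bisimulation property of regions. That property does hold despite extrapolation, and it preserves positive delays by the argument you already give.
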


\begin{proof}
This proof reuses some of the ideas of \cite[Proposition~6.7]{ADL26}%
, where more technical details can be found.

Let $(\loc', \clockval'')$ be some configuration in $r'$. We denote by $t_0$ the global time at which the path $\run$ begins.
Since $\run$ is locally $S$-compatible, there exists an implementation enabling all its transitions. We denote by $\nu$ this implementation.
From $\run$ and $\nu$ we derive the increasing sequence of timestamps~$f$ of all transitions taken including the timestamps of~$\nu$.
The valuation $\tilde{\clockval}$ of each clock $\clock$ that is not reset during $\run$ is defined by subtracting the duration of $\run$ from its valuation $\clockval'(\clock)$.
For each reset of some clocks occurring in~$\run$ at some time~$t$, we can shift the corresponding transition's timestamps to fit the last clock valuation $\clockval''$, getting a new timestamp~$t'$. Repeating this process for each transition resetting clocks gives us the timestamps sequence $(t'_i)_i$ from~$(t_i)_i$. The clock valuations $\clockval'$ and $\clockval''$ share for each clock the same integral parts, and for each pair of clocks, the same (strict) ordering of their fractional parts. Thus the new sequence of timestamps $(t'_i)_i$ and the former one $(t_i)_i$ also share the same integral parts and ordering of the fractional parts. We add as last element of both these sequences the time $t_0 + \duration(\run)$.

We define the distortion function from $(t_i)_i$ to~$(t'_i)_i$ as follows:
\[\begin{array}{rrcl}
\gamma_{t \rightarrow t'}: & [t_0;t_0 + \duration(\run)) & \longrightarrow & [t_0; t_0 + \duration(\run)) \\
 & t & \longmapsto & t'_j + \frac{t'_{j+1}-t'_j}{t_{j+1}-t_j} (t-t_j) \text{ where $t_j,t_{j+1}$ are such that } t_j \leq t < t_{j+1}\text{.} \\
\end{array}\]

This function is increasing so the sequence $(\gamma_{t \rightarrow t'} (f_i) )_i$ shares the same integral parts and orderings of fractional parts as $(f_i)_i$. Modifying $\run$ and $\nu$ with this new sequence of timestamps allows us to build a path $\run'$ in~$\TA$ and an implementation $\nu'$ of $S$ such that all transitions of $\run'$ are enabled by $\nu'$, $\run'$ has trace $a$, $\firststate(\run') = (\loc, \tilde{\clockval})$ and $\laststate(\run') = (\loc', \clockval'')$. Indeed, every positive delay in $\run$ with $\nu$ is also positive in $\run'$ with $\nu'$, and conversely.

\end{proof}

We now prove both inclusions of \cref{lem:naturaloutcome} separately.
\begin{proof}[Proof of \cref{lem:naturaloutcome}]
$(\Rightarrow)$ We show here the inclusion of the set of encountered beliefs in $\beliefcontrolset{\TA}{\strategy}$,\ie{} we show that each path in the controlled belief automaton corresponds to at least one $\strategy$-compatible path in the TA.
Let $P$ be some path $(H_0, \silentaction, \textit{cflag}_0) \rightarrow (H_1, w_1, \textit{cflag}_1) \rightarrow \dots \rightarrow (H_n, w_1 w_2 \dots w_n, \textit{cflag}_n)$ in $\belaut{\TA}{\strategy}$%
.

We show by induction that there is a $\strategy$-compatible run in~$\TA$ with trace $w = w_1 w_2 \dots w_n$, going from the initial configuration in $H_0$ to some configuration $(\loc', \clockval')$ in some region of a belief $B_n$ of $H_n$.%

\textbf{Induction step:} For some $1 \leq k \leq n$, let $(\loc_k, \clockval_k)$ be an arbitrarily chosen configuration in some region of a belief $B_k$ in $H_k$. Thanks to the definition of $\Succ_{w_k}(H_{k-1})$, combined with \cref{lemma:succ-back}, there are some belief $(B_{k-1}, S_{k-1})$ in $H_{k-1}$, some region $\region_{k-1} \in B_{k-1}$ and some configuration $(\loci{k-1}, \clockval_{k-1})$ such that a locally $S_{k-1}\sqcap S_{k}$-compatible path starting in $(\loci{k-1}, \clockval_{k-1})$ and ending in $(\loc_k, \clockval_k)$ with trace $w_k$ exists in~$\TA$. We denote this path $\run_k$.

We backward iterate this step from the end of the path to its beginning to progressively define all paths parts $\run_k$ with their connecting configurations $(\loc_k, \clockval_k)$. The only transitions from $P$ that are not translated into some transitions in~$\TA$ are those simulating the controller's choices. They are simply ignored, since they let the belief set unchanged.

Once the initial belief set $H_0$ is reached, since the only configuration in some belief of $H_0$ is the initial configuration $(\loc_0, \clockval_0)$, the path $\run_1$ starts in $(\loc_0, \clockval_0)$. Hence, the resulting path $\run = \run_1 \cdot \run_2 \dots \run_n$ is a run if it ends in some final configuration. Moreover, from proof ingredient (b), we can show that $\run_n$ is locally $S_n$-compatible (since $S_n \sqcap \emptyset = S_n$), $\run_{n-1} \cdot \run_n$ is locally $S_{n-1}$-compatible, etc. until showing that $\run$ is locally $\strategy(w)$-compatible, \ie{} it is $\strategy$-compatible.

$(\Leftarrow)$ We now address the converse inclusion, showing that every path in~$\TA$ is represented by a path in $\belaut{\TA}{\strategy}$.

Let $\run$ be some $\strategy$-compatible path starting in the initial configuration in~$\TA$. We write it:
$\run = (\loc_0,\clockval_0) \longuefleche{(d_1, \edge_1)} (\loc_1,\clockval_1) \longuefleche{(d_2, \edge_2)} \dots \longuefleche{(d_n, \edge_n)} (\loc_n,\clockval_n) $ for some $n$, and denote by $w$ its trace. We build the corresponding path in $\belaut{\TA}{\strategy}$.

Since $\run$ is $\strategy$-compatible, there exists an implementation $\nu$ of $\strategy$ enabling all transitions of $\run$. Then for each time region $\set{k}$ or $(k;k+1)$, $\nu$ associates each alphabet change $\ActionSet_{k,i} \ActionSet_{k,i+1}$ with a time $t_{k,i+1}$. For each observable action $w_i$ in $w$, we denote by $j(i)$ the index of the corresponding observable transition in $\run$. 
This gives a decomposition of $\run$ into $\run_1 \cdot \run_2 \dots \run_{\vert w \vert}$ where each part of path $\run_i$ ends with the $j(i)$-th transition, of observable label $w_i$ (and thus, cannot extend over several time regions). For $1 \leq i \leq \vert w \vert$, we denote by $r_i$ the region of $\firststate(\run^i)$. 
The implementation $\nu$ also associates each path $\run_i$, ending in the time region $\set{k}$ or $(k;k+1)$, with the longest suffix $S_{k, i} = \ActionSet_{k,m} \ActionSet_{k,m+1} \dots \ActionSet_{k,N}$ of $\strategy(\Trace{\run_1 \cdot \dots \cdot \run_i})$ such that $t_{k, m} \leq \duration(\run_1 \dots \run_i)$. 
As the definitions of local compatibility and compatibility are strongly related, for all $0 \leq i \leq \vert v \vert$ such that $w_i \notin \set{\tick, \notick}$, the path $\run_i$ is locally $S_{k,i} \sqcap S_{k,i+1}$-compatible for some $k$. If $w_i \in \set{\tick, \notick}$, $\run_i$ is the last path before a change of time region, and it is $S_{k,i}$-compatible for some $k$.
Unfolding the definitions of $\Next_{w_i}^{S'}(r,S)$,$\Next_{w_i}^{S'}(B,S)$ and $\Succ_{w_i}(H)$, we obtain:
$\Succ_{w_i} ( H ) = \{ (B', S') \mid B' = \bigcup\limits_{(B, S) \in H \textit{ s.t.\ } S' \textit{ suffix of } S}
 \bigcup\limits_{r \in B} \{ r' \mid \exists r \in B, \exists p \text{ a path in } \TA \text{ with } \firststate(p) \in r, \laststate(p) \in r', \buffered{\tw(p)} = w_i \textit{ and } p \text{ is locally } (S \sqcap S') \text{-compatible.}\}
 \text{ with } w_i \in \set{\tick, \notick} \implies S' = \emptyset \}$
 and thus, since the path $\run_1$ meets the requirements in this definition, it defines the transition between the initial set of beliefs $(H_0, \silentaction, \controlflag)$ and its successor $(H_1, w_1, \textit{cflag}_1)$ with $H_1 = \Succ_{w_1} ( H_0 )$. Since $r_i \in H_i = \Succ_{w_i} ( H_{i-1}$, the other paths $p_i$ also define the following transitions, finally reaching $H_n$, the set of beliefs where ends the corresponding path in $\belaut{\TA}{\strategy}$.

\end{proof}

\subsection{Turning the belief automaton into a game}\label{sec:gameNSS}

From \cref{theorem:opacity-leaking-belief} and \cref{lem:naturaloutcome}, a sequential strategy ensures full opacity iff none of the encountered beliefs in its controlled belief automaton are leaking.
Intuitively, finding such a strategy amounts to solving a two-player B\"uchi game on the belief automaton, with the first player selecting the choice of the sequential strategy
(the transitions $\beltrans_{\mathit{strat}}^{\strategy}$), while the second select the observation that the TA will produce with this choice (the transitions 
$\beltrans_o^{\strategy}$), the goal of the first player being to avoid leaking encountered beliefs.

More formally, a two-player safety game can be defined by a tuple 
$\mathcal{G}=((Q_1,Q_2), q_0,\delta,\Sigma, \mathcal{T})$ where $Q_1$ (resp.~$Q_2$) are the states controlled by the first (resp.~second) player, $q_0$ is the initial state,
$\Sigma$ is an alphabet of actions,
$\delta\subseteq (Q_1\cup Q_2)\times\Sigma\times(Q_1\cup Q_2)$ describes the available transitions, 
and $\mathcal{T}\subseteq (Q_1\cup Q_2)$ is the set of states that the first player targets.
Starting from $q_0$, the player controlling the current state selects one of the transition exiting this state to reach a new state. The first player wins iff
it visits a state of $\mathcal{T}$ infinitely often.

\begin{lemma}[\cite{CH12}]\label{lem:buchi-ptime}
Deciding the existence of a winning strategy for the first player in a two-player B\"uchi
game can be done in \PTIME.
\end{lemma}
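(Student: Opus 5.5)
This is a classical result that the paper imports from~\cite{CH12}. I would not reproduce the $O(n^2)$ refinement; the textbook iterated-attractor algorithm already gives a polynomial bound, which is all the statement needs.

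\textbf{Attractor.} Write $Q = Q_1 \cup Q_2$. For $X \subseteq Q$, the controllable predecessor $\mathrm{Cpre}(X)$ is the set of states $q \in Q_1$ with some transition into $X$, together with the states $q \in Q_2$ all of whose (at least one) transitions lead into $X$. Dead-end states are treated as losing for the player who owns them, or we first add a losing sink. The first-player attractor $\mathrm{Attr}_1(X)$ is the least fixpoint of $Y \mapsto X \cup \mathrm{Cpre}(Y)$. Maintaining, for each $Q_2$-state, a counter of its successors not yet in $Y$, it is computed in time $O(|Q| + |\delta|)$. The dual attractor $\mathrm{Attr}_2$ for the second player is defined and computed symmetrically.

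\textbf{Winning region.} The winning region is the greatest fixpoint
\[
W = \nu Z.\ \mu Y.\ \bigl( (\mathcal{T} \cap \mathrm{Cpre}(Z)) \cup \mathrm{Cpre}(Y) \bigr).
\]
Operationally, start from $Z = Q$ and repeat the following until nothing changes. Let $R$ be the set of states from which the first player can force a visit to $\mathcal{T} \cap Z$ in at least one step while staying in $Z$. The complement $Z \setminus R$ is a trap where the second player avoids $\mathcal{T}$ forever, so remove $\mathrm{Attr}_2(Z \setminus R)$ from $Z$. Each round removes at least one state or stops, so there are at most $|Q|$ rounds of linear cost, giving $O(|Q| \cdot |\delta|)$ overall. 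The game is won iff $q_0 \in W$.

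\textbf{Correctness.} Two directions must be checked.
\begin{itemize}
\item \emph{States in $W$ are winning.} On the fixpoint $W$, the first player plays the attractor strategy toward $\mathcal{T} \cap \mathrm{Cpre}(W)$. Each visit to $\mathcal{T}$ is followed by a move back into $W$, so $\mathcal{T}$ is revisited forever. This strategy is positional.
\item \emph{Removed states are losing.} Every removed state lies in the second player's attractor of a trap avoiding $\mathcal{T}$, relative to the current $Z$. The second player plays the attractor strategy and then stays in the trap. By induction on the round of removal, the first player cannot escape to states that are still present, since those belong to a second-player-closed set.
\end{itemize}

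The main point of care is this second direction, namely that the removed sets compose correctly across rounds. The standard argument is an induction over rounds, showing that each $Q \setminus Z$ is a second-player trap-closed winning region.
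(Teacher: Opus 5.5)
Your proof is correct, but it is not the route the paper takes. The paper gives no argument and imports the $O(n^2)$ algorithm of~\cite{CH12} for alternating B\"uchi games. You instead reprove the statement with the classical iterated-attractor algorithm, whose $O(|Q|\cdot|\delta|)$ bound can be cubic in $|Q|$ on dense arenas. For the paper's purposes nothing is lost. The lemma is only applied to a game of doubly exponential size, so any polynomial bound yields the \TwoEXPTIME{} upper bound; the quadratic algorithm of~\cite{CH12} only improves the exponent of that polynomial. Your version buys self-containedness and explicit positional winning strategies; the citation buys brevity and the sharper bound.

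Three details should be tidied.

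\emph{Dead-end convention.} It is stated inconsistently. With the ``(at least one)'' clause in $\mathrm{Cpre}$, a dead end owned by the second player is never in $\mathrm{Cpre}(X)$. It is therefore losing for the \emph{first} player, contrary to ``losing for the player who owns them''. The clause is the right choice here, since under the paper's winning condition a finite play never visits $\mathcal{T}$ infinitely often. So drop the other sentence, or redirect all dead ends to a sink outside $\mathcal{T}$.

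\emph{Which set you prove winning.} You argue the ``winning'' direction for the $\nu\mu$ fixpoint, but you compute $W$ with the operational loop, whose intermediate sets differ. The inner $\mu$ admits zero steps to $\mathcal{T}\cap\mathrm{Cpre}(Z)$, whereas your $R$ requires at least one step to $\mathcal{T}\cap Z$. Either show both iterations converge to the same set, or argue directly on the loop: at termination $Z \subseteq R$, so the at-least-one-step attractor strategy inside $Z$ can be repeated forever.

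\emph{The trap $Z\setminus R$.} It may contain $\mathcal{T}$-states, so the second player only avoids $\mathcal{T}$ from the second position onwards. Moreover, the first player may leave $Z$ towards earlier-removed states. Your induction over rounds handles both points once it is stated that way: plays can only move from later-removed sets to earlier-removed ones, so they eventually stay in one trap and visit $\mathcal{T}$ finitely often.
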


Combining this result with the doubly exponential size of the belief automaton produces 
the algorithmic part of

\buffopaque*

\begin{proof}
We start with \ProblemNSS{}.
Given a TA $\TA$, we extend the notion of leaking belief to the belief automaton: the states of the belief automaton of the form $(\{(B,\emptyset\}, \controlflag)$ are called leaking if $B$ is leaking.
Using the belief automaton of $\TA$ as a basis, we build the two player 
safety game $\mathcal{G}=((Q_1,Q_2),q_0,\delta,\belalpha, \mathcal{T})$ where
\begin{itemize}
\item $Q_1$ (resp.~$Q_2$) is the set of states of $\belset{\TA}$ tagged by $\controlflag$ (resp.~$\nocontrolflag$);
\item $q_0 = \belinit $;
\item $(q,z,q')\in \delta$ iff $(q,z,q')\in \beltrans$ and $q$ is not leaking,
\item $\mathcal{T}$ is the set of state of $Q_1\cup Q_2$ that are not leaking.
\end{itemize}

Reaching a leaking state incurs an immediate loss for the first player as
no transition exit those states. Moreover, any run that does not reach a leaking state 
is winning for the first player as the run is infinite and every state visited by the run is in $\mathcal{T}$.
Hence, player~1 has a winning strategy in $\mathcal{G}$ if and only if there exists a sequential strategy ensuring full opacity of $\TA$.

By \cref{lem:buchi-ptime}, deciding the winner of $\mathcal{G}$ takes \PTIME in
$|\mathcal{G}|$. Moreover, $G$ is doubly exponential in the input.
Indeed, denoting $H$ the number of clocks in~$\TA$ and $M$ the
largest constant appearing in any guard or invariant,
the number of regions of $A$ is $|\regaut{\TA}| \leq (2M+2)^H \cdot H! \cdot |\LocSet|$.
A belief is a set of regions of $\regaut{\TA}$, so the number of distinct beliefs is at most $2^{|R_A|}$, which is doubly exponential in the description of $\TA$.
A state of the belief automaton $\belaut{\TA}{}$ further pairs each belief $B \subseteq \regaut{\TA}$ with a sequence $S$ of controllable alphabets. The number of such sequences is at most $\left(2^{|\ControlSet|}\right)^N$, which is singly exponential in~$\TA$,
and a state of $\belaut{\TA}{}$ may contain up to $N$ different pairs of beliefs and 
sequence. There thus exists at most $(2^{|\ControlSet|})^N \cdot (2^{|R_A|})^N$ such 
states.
Hence $|\belaut{\TA}{}|$ is doubly exponential in the input size.

In conclusion, the overall procedure runs in \TwoEXPTIME.

For \ProblemNSSRC{}, we need to slightly tweak the construction of the game to include the 
two objectives.
The states are enhanced with a boolean value $r$ starting at $0$.
As long as $r=0$, player 1 controls every choice of transition within the game, with the exception that player 2 can decide to switch $r$ to $1$ after every selection of 
sequence $S$ (after any transition taken in $\beltrans_{\mathit{strat}}$).
As long as $r=0$, the first player wins if it can reach a state with the $\controlflag$ 
tag that contains a final location and is not a leaking state (note that we keep that
no transition exits leaking states, hence player 1 must ensure no leaks occur on the way).
If $r$ is switched to $1$, the game acts exactly as for \ProblemNSS{}, and player 1 must only ensure opacity.

If there exists a \reachcondition{} strategy ensuring opacity, player 1 plays this strategy 
over the selection of transitions in $\beltrans_{\mathit{strat}}$, and selects the observations corresponding to one of the existing runs that are compatible with the strategy for the transition in $\beltrans_o$ as long as $r=0$. As the strategy 
ensures opacity, no leaking belief will be visited, and if player 2 does not modify $r$, 
a state containing a final location will be reached as the strategy follows a run that
reaches a final location.

If there exists a strategy to win this game, one can define the sequential strategy
that imitates the choices in the game assuming $r=0$ as long as the sequence of observations is the one planned by the strategy of player 1, and imitates the choices assuming $r=1$ as soon as the observation deviates.
This strategy is \reachcondition, as there exists a run producing the 
sequence of observations that keeps $r=0$ in the game, and which is thus compatible with the strategy. It ensures opacity as playing as if $r=1$ ensures opacity following the
proof for \ProblemNSS{}, and following $r=0$ requires visiting no leaking belief as well
to avoid being stuck and thus ensures opacity.

As this construction only doubles the size of the game, the algorithm remains in \TwoEXPTIME.
\end{proof}

\subsection{Hardness}\label{subsection:NSShard}

In this section, we prove the \TwoEXPTIME hardness of the problem of existence of an $N$-sequential strategy making the TA opaque.
This proof is inspired by the reduction from~\cite{PF12}, itself based on a proof by Rintanen \cite{Rintanen04}.
The authors reduce the halting problem of an alternating Turing machine with exponential memory (known to be \TwoEXPTIME-complete) to the existence of a discrete controller (given as a discrete-time TA) such that the parallel execution of the given TA and of the controller always avoids a set of bad states. We adapt this proof to reduce the halting problem of an alternating Turing machine to the existence of a $1$-SS strategy making the TA fully opaque.

The reduction consists in defining a TA~$\TA$ which, controlled with a 1-sequential strategy, simulates the behaviour of the machine. We thus suppose \wlogen{} that the alternating Turing machine of exponential memory is given under the form $\machine = (\StatesExists, \StatesForall, q_0, q_f, \machinealph, \delta)$ with $\StatesExists \uplus \StatesForall$ the set of states, $q_0$ and $q_f$ the initial and final states, $\machinealph$ a two-symbol alphabet, and the transition function $\delta: Q \times \machinealph \rightarrow Q \times Q \times \machinealph \times \set{L, R}$ allowing to reach at most two states at each step. We suppose its tape has length $n = 2^b$. For the reduction to be polynomial, it is necessary to store the content of the tape not in the TA but with the strategy, using a possibly doubly exponentially long time.

If the strategy does not comply with the correct simulation of $\machine$, a ``bad'' location becomes reachable which uncontrollably leads to the production of non-opaque traces toward a sink state.

Let $t$ be a transition of~$\machine$, reading the symbol~$m$ on the tape and leading from a state~$q$ to the pair of states $(q', q'')$, reading $m$ and writing $m'$ on the tape; and moving the tape head in the direction~$D$.
We denote by~$\cur$ the current position of the tape head among $\interval{0}{n-1}$ (initially $\cur = 0$).
In the rest of this proof, we sometimes omit the control or observable label when they are identical.
The correct simulation of the transition~$t$ consists in several aspects, simultaneously managed in an interval of $2n$ time units:
\begin{enumerate}
\item
\begin{itemize}
\item The clock $\clock$ is equal to $n - \cur$ and $\clocky = 0$. The TA and the strategy begin by simulating the \textbf{reading and writing by the tape head} from the current state~$q$, represented by~$\loc_q$. This location has an invariant $\clock \leq n$ and an uncontrollable self-loop that resets~$\clocky$ as soon as it reaches~$n$.
The reading of~$m$ and writing of~$m'$ correspond to a transition, leaving~$\loc_q$, with a label among $\zerocell \rightarrow \zerocell$, $\zerocell \rightarrow \onecell$, $\onecell \rightarrow \zerocell$ and $\onecell \rightarrow \onecell$.
This transition resets~$\clock$ and can be taken when $\clock = n$ and $\clocky < n$, so that it is synchronized with the reciting of the content of the $\cur$-th cell of the tape (see \cref{item:tape-recitation}). The reached location then gives the choice of the next state in~$\machine$ among $q'$ and~$q''$.

\item The \textbf{choice between $q'$ and $q''$} in~$\machine$ is modelled by two transitions in~$\TA$, each of them with the respective observable label $q1$ or~$q2$. If $q \in \StatesExists$, then the strategy may choose which state is effectively reached between $q'$ and~$q''$, so both transitions are controllable with respective control labels $q1$ and~$q2$.
Otherwise, $q \in \StatesForall$ and the TA chooses the destination: both transitions are uncontrollable.
At this point, no time has elapsed so $\clock = 0$ and $\clocky$ gives the position of the tape head.
The next $n$ time units are dedicated to its move.

\item Suppose \wlogen{} the chosen state is~$q'$. The \textbf{move of the tape head} is achieved by shifting the value of the clock~$\clock$.
A transition represents the choice to move the tape head to the left (resp.\ the right) with control and observable label~$L$ (resp.~$R$), and guard $x = 0$.
They lead to a location with invariant $\clock < n$ and an uncontrollable self-loop resetting~$y$ when equal to~$n$.
This location may lead to a bad location producing the label \textit{out\_of\_tape} providing that $\clocky = n-1 = \clock$ in the $R$ side, or $\clocky = 1$ and $\clock \neq 0$ in the $L$ side. If $\clocky = n-1 \neq \clock$ in the $R$ side (resp.\ $\clocky = 1$ and $\clock \neq 0$ in the $L$ side), the clock $\clock$ is reset, and the next transition goes toward~$q'$ when $\clocky$ reaches~$n$, resetting~$\clocky$.
In both cases, the new value of $\clock$ is the updated value $n - \cur$.
\item The location~$\loc_{q_f}$, that corresponds to the goal state of $\machine$, ensures the opacity of the traces reaching it by (uncontrollably and unobservably) branching in zero time between two final locations, exactly one of which being private.
\end{itemize}

Crossing this part of the TA takes $2n$ time units. \cref{item:tape-recitation,item:countdown} indeed requiring $n$ time units, they can be done one after the other during a step of this first part of the~TA.
\item \label{item:tape-recitation} To guarantee that the tape is correctly simulated, the strategy is forced to \textbf{repeat the tape's content} every $2n$ time units, one cell per time unit. This means that the set of enabled control label at the $i$-th time unit modulo~$n$ contains the reading and writing symbol (\eg{} $\zerocell \rightarrow \onecell$) of the $i$-th cell in the tape. This update occurs at the same time as in the part simulating the transition. This works in a very similar way to the countdown gadget (see \cref{item:countdown}).
\item \label{item:countdown} A \textbf{countdown} forces the strategy to try reaching~$q_f$, and prevents it to turn indefinitely into the TA. This countdown enumerates the configurations of $\machine$, reaching a doubly exponential number. Assume this number is~$2^{n}$. The idea is to reset non-deterministically a clock~$\clock$ to choose a bit among the $n$~bits encoding the number of configurations. These bits thus correspond to possible runs on the~TA, shifted and spread over an interval of $n$ time units (the $i$-th bit is controlled at the times $i$ modulo~$n$); and the strategy needs to ensure the opacity of each of these runs by correctly incrementing the number they encode. After the reset of~$\clock$, a location denoted by $\loc_{(0)}$ is reached (a bit is equal to zero if its run is there).
A self-loop with control label $[0-0]$ and a transition to a location~$\loc_{(1)}$ encode the update of the bit, with the guard $\clock = n$ and reset set $\set{\clock}$. Similarly, two transitions ($[1-1]$ and $[1-0]$) start from $\loc_{(1)}$. Two other transitions labelled by $[1-0]$ (from $\loc_{(0)}$, one with same guard and reset, the other requiring $\clock = 1$) produce a special symbol~$\extraAction$ representing a carry over.
A third similar transition leads to a private location with guard $\clock = 1$, and the only way for the strategy to opacify the produced trace is to allow one of the two first transitions, enabling either $[0-1]$ or $[1-0]$ at the next time unit (forcing the increment of the next bit). Almost the same mechanism is employed to force the strategy to increment at least a bit per cycle of $2n$ time units, and to produce non-opaque traces when all bits are equal to~1.
\item A simple gadget constraints the strategy to enable at each time at most one reading/writing symbol, one state choice (among $q1$ and~$q2$), one tape head move symbol ($R$ or~$L$), and one symbol of bit update for the countdown. It produces a non-opaque trace as soon as two non-compatible symbols are together enabled as they cross two successive transitions in zero time.

\end{enumerate}
This construction takes polynomial space, as the exponential memory as well as the doubly exponential countdown are fully contained in the strategy. If a strategy ensuring opacity of the built TA exists, then it follows a run in~$\machine$ that reaches a final state and it never produces a non-opaque trace, which corresponds to a failure in the simulation of~$\machine$. Conversely if the final state is reachable in~$\machine$, then the corresponding run gives a strategy successfully simulating the alternating Turing machine, concluding this reduction.
\section{Proof for Observable Sequential Strategies}\label{sec:annexOSS}

\OSSprop*
\begin{proof}[Proof (by contraposition)]
Assume that there exists a trace $w$ such that $w\in \PrivateTr{\TA}{\strategyBlocksAll}$ and $w \notin \PublicTr{\TA}{\strategyBlocksAll}$.
Let $\strategy$ be a sequential strategy. Let us show that $w\in \PrivateTr{\TA}{\strategy}$ and $w \notin \PublicTr{\TA}{\strategy}$.

First, as $\strategyBlocksAll$ is the strategy enabling the smallest number of transitions,
we have that $\CRuns{\TA}{\strategyBlocksAll}\subseteq \CRuns{\TA}{\strategy}$ and thus
$w\in \PrivateTr{\TA}{\strategy}$.

Now let $\rho$ be a $\strategy$-compatible run that is not $\strategyBlocksAll$-compatible (if no such run exists, then we immediately have that $w \notin \PublicTr{\TA}{\strategy}$).
The run $\rho$ hence includes a transition associated to a controllable action~$\caction$.
By the OSS setting, $\caction$ is distinguishable, and thus associated to an action~$\action$ specific to~$\caction$.
As $w$ is the trace of a run that is $\strategyBlocksAll$-compatible, it does not contain~$\caction$.
Thus the trace of~$\rho$ is not~$w$.
As this holds for every $\strategy$-compatible run that is not $\strategyBlocksAll$-compatible, we have that $w \notin \PublicTr{\TA}{\strategy}$.
\end{proof}

\OSS*
\begin{proof}
In~\cite{DQY25}, the authors show how to decide in \EXPSPACE the weak opacity of a TA under buffered observations in the absence of control.
Moreover, from \cref{cor:eptystrat}, to decide the existence of a control strategy making a given TA weakly opaque, it suffices to test whether the strategy~$\strategyBlocksAll$ makes the TA weakly opaque.
We can represent the effect of strategy~$\strategyBlocksAll$ on the TA by removing all controllable actions from the~TA.
Hence, we only need to test the weak opacity of the resulting TA under buffered observations, without control.
This can thus be done in \EXPSPACE.

Regarding hardness, it is known that weak opacity is \EXPSPACE-hard for discrete-time TAs without control~\cite{ADL26}.
As a discrete TA produces by construction buffered observations (at most one observation by time unit), we immediately get that the weak opacity is \EXPSPACE-hard for TAs under buffered observations (with and without control).
\end{proof}

\OSSRC*

In order to establish this result, we show successive restrictions that can be
assumed about the sequential strategy.

Given a run~$\rho$, we say a sequential strategy~$\strategy$ is \emph{$\rho$-minimal} whenever it forbids every controllable action outside the path corresponding to~$\rho$.

Formally, we define the sequence of time intervals $(I_i)_{i=0,\dots,k}$ such that 
(1) for all $n\in \mathbb{N}$ with $2n\leq k$, $I_{2n} =\{n\}$,
(2) for all $n\in \mathbb{N}$ with $2n+1\leq k$, $I_{2n+1} =(n;n+1)$, and
(3) $\rho$ reach the final location during the time interval $I_k$?
Then, we have that, for each buffered observation~$w$ and each time interval~$I$, either there exists $i\leq n$ such that $(I,w)=(I_i, \PTrace{\run}{I_i})$, 
or $\strategy(w)=\emptyset$.
We first show that this shape of sequential strategies is sufficient to decide the existence of \reachcondition{} sequential strategies ensuring opacity.

\begin{lemma}\label{th:simpstrat1}
Given a TA~$\TA$, there exists a \reachcondition{} sequential strategy making $\TA$ weakly opaque iff there
exists a run~$\run$ and a $\run$-minimal \reachcondition{} sequential strategy~$\strategy$ making $\TA$ weakly opaque.
\end{lemma}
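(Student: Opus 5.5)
The backward direction is immediate: a $\run$-minimal \reachcondition{} sequential strategy making $\TA$ weakly opaque is in particular a \reachcondition{} sequential strategy making $\TA$ weakly opaque. All the work is in the forward direction. Given a \reachcondition{} strategy $\strategy$ making $\TA$ weakly opaque, I will extract one witness run and then prune $\strategy$ everywhere except along that run. The OSS hypothesis is used exactly as in the proof of \cref{cor:eptystrat}.

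\emph{Construction.} Since $\strategy$ is \reachcondition{}, pick a run $\run \in \CRuns{\TA}{\strategy}$ and consider the time intervals $(I_i)_{i \le k}$ visited by $\run$. Define $\strategy'$ by $\strategy'(w) = \strategy(w)$ if $(I,w) = (I_i, \PTrace{\run}{I_i})$ for some $i \le k$, and $\strategy'(w) = \emptyset$ (blocking everything) otherwise. This is $\run$-minimal by construction. It is \reachcondition{} because $\run$ stays $\strategy'$-compatible: at each step, the partial trace seen along $\run$ is exactly $\PTrace{\run}{I_i}$, so the implementation used for $\run$ under $\strategy$ still works under $\strategy'$.

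\emph{Weak opacity of $\strategy'$.} Take a private trace $w \in \PrivateTr{\TA}{\strategy'}$, produced by some $\strategy'$-compatible private run $\run_1$. The strategy $\strategy'$ only ever restricts $\strategy$: at every partial trace it either equals $\strategy$ or blocks everything. Hence every $\strategy'$-compatible run is $\strategy$-compatible, and $w \in \PrivateTr{\TA}{\strategy}$. By weak opacity of $\strategy$, there is a $\strategy$-compatible public run $\run_2$ with trace $w$. It remains to show that $\run_2$ is $\strategy'$-compatible; this is the main obstacle.

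I would argue this by following the prefixes of $w$. At any time region $I$, the partial trace of $\run_2$ equals the partial trace of $\run_1$, since both runs have trace $w$. If this partial trace is on the $\run$-branch, then $\strategy'$ agrees with $\strategy$ there, and the implementation witnessing $\strategy$-compatibility of $\run_2$ works locally. If it is off the branch, then $\strategy'$ blocks all controllable actions. In that case $\run_1$ takes no controllable transition during $I$. By OSS, each controllable transition emits its own distinguishable observation, so the portion of $w$ in $I$ contains no controllable-action letter. Since $\run_2$ produces the same letters in $I$, it takes no controllable transition there either, and therefore it is compatible with the blocking choice.

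The delicate points are these. First, the implementation under $\strategy'$ must be chosen consistently across regions; I would glue the implementation used for $\run_2$ under $\strategy$ on the branch regions with arbitrary timings on the off-branch regions. Second, the bookkeeping must confirm that, within a single region $I$, the letters emitted by $\run_1$ and $\run_2$ coincide, and not only their partial traces seen at the start of $I$; this holds because the whole trace $w$ is shared. Together these give $w \in \PublicTr{\TA}{\strategy'}$, which concludes the proof.
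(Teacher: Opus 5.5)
Your proposal is correct and follows essentially the same approach as the paper: prune $\strategy$ to agree with it only on the partial traces of one witness run $\run \in \CRuns{\TA}{\strategy}$ and block everything elsewhere. The key step is the same as the paper's, namely that OSS distinguishability forces any $\strategy$-compatible public run sharing a private trace to take no controllable transition off the branch, hence to be $\strategy'$-compatible. The paper writes this as a contrapositive, while you argue it directly and region by region.
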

\begin{proof}
Let $\TA$ be a TA and $\strategy$ be a \reachcondition{} sequential strategy making $\TA$ weakly opaque.
As $\strategy$ is \reachcondition{}, $\CRuns{\TA}{\strategy}$ is not empty, and thus contains a run~$\run$.
We define $\strategy_\run$ as the $\run$-minimal sequential strategy such that,
decomposing $\run$ into a sequence $(\run_1,I_1),\dots,(\run_n,I_n)$ such that $\run=\run_1\dots \run_n$ and $\run_i$ occurs during the time interval $I_i$, 
we have for all $i\leq n$, $\strategy_\run(I_i, \PTrace{\run}{I_i})=\strategy(I_i, \PTrace{\run}{I_i})$. For every other pair, $\strategy_\run$ is the empty set, as per definition of a $\run$-minimal sequential strategy.
Let us show that $\strategy_\run$ is a \reachcondition{} sequential strategy making $\TA$ weakly opaque.

First, it is \reachcondition{}: as $\strategy_\run$ imitates $\strategy$ over the partial traces of
$\run$, $\run$ is $\strategy_\run$-compatible and thus $\run \in \CRuns{\TA}{\strategy_\run}$.

Now assume there exists a trace $w$ such that $w\in \PrivateTr{\TA}{\strategy_\rho}$ and 
$w \notin \PublicTr{\TA}{\strategy_\rho}$.
As in \cref{cor:eptystrat}, $\strategy_\rho$ being more restrictive than $\strategy$
we have that $w\in \PrivateTr{\TA}{\strategy}$.
Let us show that $w \notin \PublicTr{\TA}{\strategy}$.
We decompose $w=w_1\notick w_2 \tick \dots w_m t$ with $t\in \{\tick,\notick\}$ (in other words, $w_i$ is produced during the time interval $I_i$).
Assume first that $m\leq n$ and for all $i\leq m-1$, $w_i=\PTrace{\run}{I_{i}}$, then 

$\strategy$ and $\strategy_\rho$ behaving identically over this sequence of partial trace,
$w \notin \PublicTr{\TA}{\strategy}$.
Otherwise either $m>n$ or there exists $i\leq m-1$ such that $w_i=\PTrace{\run}{I_{i}}$.
Let $j=n+1$ in the first case and $j=i+1$ in the second.
We have that $\strategy_\run(I_j, w_1\notick w_2 \tick \dots w_{j-1})=\emptyset$, and the same will hold for any future time interval.
Hence, as in \cref{cor:eptystrat}, the controllable actions being distinguishable in the OSS setting and as $w$ is the trace of a run that is
$\strategy_\run$-compatible (due to $w\in \PrivateTr{\TA}{\strategy_\run}$), for all $k\geq j$
$w_k$ does not contain any controllable action. Therefore, any $\strategy$-compatible 

run with trace~$w$ is also $\strategy_\rho$~compatible, and thus
$w \notin \PublicTr{\TA}{\strategy}$, implying that $\strategy$ does not make $\TA$ weakly opaque. This is a contradiction, and thus $\strategy_\rho$ makes $\TA$ weakly opaque.
\end{proof}

We now strengthen this result by limiting the number of events that occur during each time interval within the selected run.

\begin{lemma}\label{th:simpstrat2}

Given a TA~$\TA$ with $h$~clocks, $n$~locations, and where every constant appearing in a constraint is bounded by~$M$,
there exists a run~$\rho$ and a $\rho$-minimal \reachcondition{} sequential strategy
$\strategy$ which makes $\TA$ weakly opaque iff there exists a run
$\rho'$ such that there exists a $\rho'$-minimal \reachcondition{} sequential strategy $\strategy'$ which makes $\TA$ weakly opaque and during each time interval~$I$, $\rho'$ has at most $N=2^{(2M+2)^h h!n}$ controllable events during $I$.
\end{lemma}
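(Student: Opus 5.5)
The direction ($\Leftarrow$) is immediate: take $\rho=\rho'$ and $\strategy=\strategy'$. For ($\Rightarrow$), we start from a run $\rho$ and a $\rho$-minimal \reachcondition{} strategy $\strategy$ making $\TA$ weakly opaque. If some time interval $I_i$ contains more than $N$ controllable events of $\rho$, we shorten $\rho$ inside $I_i$ by a pumping argument on a finite abstraction, producing a new run $\rho'$. We then check that the $\rho'$-minimal strategy $\strategy'$ built from $\rho'$ is still \reachcondition{} and still ensures weak opacity. Iterating over the finitely many offending intervals (each shortening strictly decreases the number of controllable events) terminates.

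\textbf{Choice of abstraction.} The bound $N=2^{(2M+2)^h h!\,n}$ is the number of sets of regions, since $(2M+2)^h h!\,n$ bounds $|\regset{\TA}|$. So at each controllable event $e_j$ of $\rho$ inside $I_i$, we attach the set $\mathcal{B}_j$ of regions reachable, under $\strategyBlocksAll$ after the current control point, by runs sharing the same buffered prefix and the same region of $\rho$. More precisely, $\mathcal{B}_j$ should be the natural belief of the attacker, that is, the regions consistent with the observation so far, given that the strategy forbids everything off $\rho$. If more than $N$ controllable events occur in $I_i$, two indices $j<j'$ have equal $\mathcal{B}_j=\mathcal{B}_{j'}$, and additionally the region of $\rho$ at $j$ equals the region at $j'$. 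To make this pair condition hold, we refine the pigeonhole over pairs (region, belief); if needed, we absorb the region factor into the belief by marking the current region of $\rho$, which keeps the count at $2^{|\regset{\TA}|}$ up to the stated bound. We cut the segment between $j$ and $j'$ and re-time the remainder using the region-shifting/distortion argument of \cref{lemma:succ-back}, so that $\rho'$ stays inside $I_i$ and keeps the same later time regions. Consequently the buffered observation of $\rho'$ after $I_i$ is unchanged, and within $I_i$ it is the observation of $\rho$ with one factor deleted. Because controllable actions are observable (OSS), the controllable events in $\rho'$ are exactly the visible ones, so $\strategy'$ is well defined by letting it enable them at the corresponding times and forbid everything else.

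\textbf{Main obstacle: weak opacity of $\strategy'$.} This is the step I expect to be hardest. Consider a private $\strategy'$-compatible run with trace $w$. If $w$ deviates from the partial traces of $\rho'$ before or at the cut, then from the deviation point on the behaviour is governed by $\strategyBlocksAll$, exactly as in the proof of \cref{th:simpstrat1}. Equality of beliefs at $j$ and $j'$ then lets us transfer the situation to a corresponding deviation of $\rho$ under $\strategy$: replace the prefix reaching belief $\mathcal{B}_{j'}$ by the one reaching $\mathcal{B}_j$. There $\strategy$ guarantees a public run with the same continuation; concatenating it back via \cref{lemma:succ-back} and proof ingredient (b) gives a public $\strategy'$-compatible run with trace $w$. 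If $w$ follows $\rho'$ past the cut, it corresponds to a trace of $\rho$ with the looping factor inserted. We treat this case symmetrically, mapping the tail of $\rho'$ back onto the tail of $\rho$ from index $j'$.

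The delicate point is that the belief must capture both private and public possibilities. I would therefore take $\mathcal{B}_j$ to be the full natural belief $\beliefcontrol{w}{\strategy}$ restricted to blocked-future behaviour, and check carefully that equal beliefs imply identical sets of achievable future traces, split into private and public, under $\strategyBlocksAll$. That implication is the crux; it should follow from \cref{lem:naturaloutcome} applied with $\strategyBlocksAll$ from the current configuration.
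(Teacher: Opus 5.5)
Your overall plan, a pigeonhole on natural beliefs inside one time interval followed by excising the factor between two equal beliefs, is the paper's. The step that fails is how you perform the cut. You insist that the regions of $\rho$ itself at $j$ and $j'$ coincide, so that $\rho$ can be physically shortened and re-timed, and you pigeonhole over pairs (region of $\rho$, belief). That only gives a bound of order $|\regset{\TA}|\cdot 2^{|\regset{\TA}|}$. ``Marking the current region'' multiplies the number of abstract states by up to $|\regset{\TA}|$ rather than keeping it at $2^{|\regset{\TA}|}$, so the lemma with the stated $N$ is not obtained. The region condition is also unnecessary. Write the trace of $\rho$ as $w_1w_2w_3$ with $w_2$ inside one interval and $\beliefcontrol{w_1}{\strategy}=\beliefcontrol{w_1w_2}{\strategy}$. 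Then the region of $\rho$ after $w_1w_2$ lies in the belief after $w_1$, so by definition some compatible run with trace $w_1$ ends in it, and the tail of $\rho$ can be grafted there. The paper does not build $\rho'$ by surgery at all. It contracts the trace only, and gets non-blockingness from the fact that $\beliefcontrol{w_1w_3}{\strategy'}=\beliefcontrol{w_1w_2w_3}{\strategy}$ contains a final region. With beliefs alone the pigeonhole yields exactly $N$.

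The weak-opacity step, which you rightly flag as the crux, is not yet in a usable form. You rebuild $\strategy'$ from scratch as the strategy enabling only the controllable events of $\rho'$, and then argue through futures ``under $\strategyBlocksAll$''. But the main case, a private run whose trace follows $\rho'$ past the cut, is governed by the strategy along $\rho'$, not by $\strategyBlocksAll$. Comparing a fresh $\strategy'$ with $\strategy$ would also require that restricting a strategy preserves weak opacity under OSS, and this is not automatic. Compatibility with a sequence of alphabets depends on timing: switches need positive delays, so two controllable actions at the same instant must lie in one common alphabet. Hence a public witness for $\strategy$ need not be $\strategy'$-compatible.

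The missing idea is to take $\strategy'$ to be the \emph{contraction} of $\strategy$: $\strategy'(w_1w')=\strategy(w_1w_2w')$, and $\strategy'(w')=\strategy(w')$ when $w_1$ is not a prefix of $w'$. Your crux, that equal beliefs give the same future private and public possibilities, is the right fact. It is needed under identical future strategy choices, and the contraction supplies exactly that. Every natural belief of $\strategy'$ is then a natural belief of $\strategy$, since in OSS the regions reachable after a word do not depend on the strategy. So no leaking belief appears, and weak opacity follows at once from the weak analogue of \cref{theorem:opacity-leaking-belief}, with no case split on deviation points.
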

\begin{proof}
We will rely here on the definitions from \cref{appendix:proof:NSS}, in particular on the notion of natural belief.

Let $\rho$ be a run and $\strategy$ be a $\rho$-minimal \reachcondition{} sequential strategy which makes $\TA$ weakly opaque.
Due to the OSS setting, the set of regions that can be reached within~$\TA$ after a word~$w$ does not depend on the strategy (the strategy only selects which words can be produced).
Hence, if we denote $w=\tw(\run)$ and assume that $w$ can be decomposed into 
$w_1w_2w_3$ such that $w_2\in \ObsSet^*$ (which ensures that $w_2$ is read within a single time interval), and 
$\beliefcontrol{w_1 w_2}{\strategy}=\beliefcontrol{w_1}{\strategy}$.
We have
$\beliefcontrol{w_1 w_2w_3}{\strategy}=\beliefcontrol{w_1w_3}{\strategy'}$
where $\strategy'$ is obtained by imitating $\strategy$, assuming a $w_2$ is added if the buffered word is prefixed by~$w_1$:
$\strategy'(w_1w') = \strategy'(w_1w_2w')$ and  
$\strategy'(w') = \strategy'(w')$ if $w_1$ is not a prefix of~$w'$.

Let $\strategy'$ be the sequential strategy obtained by repeating the above contraction until
it stabilises (the process ends as $\rho$ is finite and $\strategy$ cannot be simplified 
over other buffered observations as it forbids every controllable action) and let $w'$ the word obtained from~$w$ at the end of the process.
We thus have 
$\beliefcontrol{w}{\strategy}=\beliefcontrol{w'}{\strategy'}$.
As $\run$ reaches a final location, there thus exists a final location in 
$\beliefcontrol{w'}{\strategy'}$, which means that $\strategy'$ is \reachcondition{}.
Moreover, as the natural belief generated through $\strategy'$ are the same as the ones generated by~$\strategy$, it ensures weak opacity of~$\TA$.

We have ensured that there does not exist a decomposition of~$w'$ such that the same
natural belief is encountered twice within a single time interval. 
The number of natural beliefs being bounded by $N=2^{(2M+2)^h h!n}$, the number of events  in~$w'$ (and \emph{a fortiori} the number of controllable events as every controllable event is observable in the OSS setting) occurring within a single time interval is bounded by~$N$.
\end{proof}

We can finally proceed to the proof of \cref{theorem-OSS-NB}.

\begin{proof}[Proof of \cref{theorem-OSS-NB}]
By \cref{th:simpstrat1,th:simpstrat2}, we can limit the search of sequential strategies ensuring weak opacity to sequential strategies that are $\rho$-minimal for some
$\rho$ and such that at most $N=2^{(2M+2)^h h!n}$ controllable events occur in 
a single time interval.
This second point implies that there exists a strategy ensuring weak opacity iff there 
exists an $N$-sequential strategy ensuring weak opacity.
We can thus rely on \cref{theorem:NSS} to decide the existence of this strategy.

The algorithm from \cref{theorem:NSS} is in \TwoEXPTIME. %
Hence, as~$N$, being given in binary, is simply exponential,
this reduction to the \ProblemNSS{} problem produces a \ThreeEXPTIME algorithm.

Regarding the hardness, since the proof of \cref{subsection:NSShard} falls into the OSS setting, it applies immediately.
\end{proof}

\end{document}